\newcommand{\C}{\mathbb{C}}
\newcommand{\Z}{\mathbb{Z}}
\newcommand{\End}{\textrm{End}}
\newcommand{\Pe}{\textrm{P}}
\newcommand{\un}[1]{\underline{#1}}
\newcommand{\unn}[1]{\underline{\underline{#1}}}
\newcommand{\bc}{\textbf{c}}
\newcommand{\hR}{\widehat{R}}
\newcommand{\hF}{\widehat{F}}
\newcommand{\cT}{\mathcal{T}}
\newcommand{\sh}{\mathrm{sh}}
\newcommand{\cc}{\mathrm{cc}}
\newcommand{\qc}{\mathrm{c}^{(q)}}
\newcommand{\al}[1]{[#1]_{_{\alpha}}}
\newcommand{\ov}[1]{\overline{#1}}
\newcommand{\tP}{\widetilde{\textrm{P}}}
\newtheorem{theorem}{\textbf{Theorem}}[section]
\newtheorem{corollary}[theorem]{\textbf{Corollary}}
\newtheorem{lemma}[theorem]{\textbf{Lemma}}
\newtheorem{proposition}[theorem]{\textbf{Proposition}}
\newtheorem*{theo-intro}{\textbf{Theorem}}
\theoremstyle{definition}
\newtheorem{definition}[theorem]{\textbf{Definition}}
\newtheorem{example}[theorem]{\textbf{Example}}
\newtheorem{remark}[theorem]{\textbf{Remark}}
\numberwithin{equation}{section}
\begin{document} 

\title[Fusion formulas and fusion procedure]{Fusion formulas and fusion procedure for the Yang-Baxter equation}

\author{L. Poulain d'Andecy}
\email{loic.poulain-dandecy@univ-reims.fr}
\address{Universit\'e de Reims Champagne-Ardenne, UFR Sciences exactes et naturelles, Laboratoire de Math\'ematiques EA 4535 Moulin de la Housse BP 1039, 51100 Reims, France}

\classification{16T25 (primary), 	17B80,  20C08, 20C30 (secondary)}

\keywords{Yang--Baxter equation, Fusion procedure, Fusion formula, Schur--Weyl duality, Symmetric group, Hecke algebra, Young tableaux, Quantum general linear Lie superalgebra}

\thanks{}

\begin{abstract} 
We use the fusion formulas of the symmetric group and of the Hecke algebra to construct solutions of the Yang--Baxter equation on irreducible representations of $\mathfrak{gl}_N$, $\mathfrak{gl}_{N|M}$, $U_q(\mathfrak{gl}_N)$ and $U_q(\mathfrak{gl}_{N|M})$. The solutions are obtained via the fusion procedure for the Yang--Baxter equation, which is reviewed in a general setting. Distinguished invariant subspaces on which the fused solutions act are also studied in the general setting, and expressed, in general, with the help of a fusion function. Only then, the general construction is specialised to the four situations mentioned above. In each of these four cases, we show how the distinguished invariant subspaces are identified as irreducible representations, using the relevant fusion formula combined with the relevant Schur--Weyl duality.
\end{abstract}

\maketitle


\section{Introduction}\label{sec-intro}

\paragraph{\textbf{1.}} The Yang--Baxter equation originally appeared in statistical physics and in quantum integrable systems, and its study has then led to the discovery of quantum groups, which have found numerous applications in mathematics and mathematical physics (for literature on the Yang--Baxter equation and related subjects, see, \emph{e.g.}, \cite{CP,GRS,Is,Ji} and references therein).

The fusion procedure for the Yang--Baxter equation allows the construction of new solutions of the Yang--Baxter equation starting from a given fundamental solution. We refer to the new solutions as ``fused'' solutions of the Yang--Baxter equation. The fusion procedure has first been introduced in \cite{KRS}. Several aspects of the fusion procedure for the Yang--Baxter equation and of its applications to theoretical physics can be found for example in \cite{BaRe,Ch1,Ch3,DJKMO,DJMO,HZ,Ji2,Ma,PZ,Yu,Za}.

One main contribution of this paper is the application of the fusion procedure for the Yang--Baxter equation, in a setting corresponding to $\mathfrak{gl}_{N|M}$ and its quantum deformation $U_q(\mathfrak{gl}_{N|M})$. Already for $M=0$ (the more classical situation), we believe that the presentation made in this paper can be of interest. In addition, we emphasize that the first part of the paper is developed in a general setting, which might be then useful in a wider class of situations.

\paragraph{\textbf{2.}} In this paper we start with the description of the fusion procedure for the Yang--Baxter equation. The fusion procedure starts from a solution of the Yang--Baxter equation on a space $V$ and constructs a fused solution on tensor products of $V$ and their subspaces. We work here in a general setting; namely, we start from an arbitrary solution of the Yang--Baxter equation and consider arbitrary fusion parameters. In this general setting, we then focus on the identification and the study of distinguished invariant subspaces for the fused solutions. We provide complete proofs in this general setting. Besides, the construction is presented in a form which will allow us to apply fusion formulas (of the symmetric group and of the Hecke algebra). 

By construction, the aforementioned distinguished subspaces support a solution of the Yang--Baxter equation, by restriction of the fused solutions. Our next goal is to study in details these subspaces in particular cases where a fusion formula can be used. This is done for the classical Yang solution, its standard deformation and their ``super" analogues. This results in the identification of invariant subspaces as irreducible representations of, respectively, $\mathfrak{gl}_N$, $\mathfrak{gl}_{N|M}$, $U_q(\mathfrak{gl}_N)$ and $U_q(\mathfrak{gl}_{N|M})$, and in turn, in a construction, via the fusion procedure, of solutions of the Yang--Baxter equation acting on these representations. For the $\mathfrak{gl}_N$ and $\mathfrak{gl}_{N|M}$ cases, we present the ungraded situation independently and before the $\Z/2\Z$-graded one. We chose to do so for clarity of the exposition. Besides, our presentation of the $\mathfrak{gl}_N$ situation is made so that the other cases can be then dealt with very similarly. For the deformations, we give a unified presentation treating the general $U_q(\mathfrak{gl}_{N|M})$ situation, such that it is direct to obtain the particular $U_q(\mathfrak{gl}_N)$ case by setting $M=0$.

\paragraph{\textbf{3.}} We will describe now in more details the content of the paper. Let $V$ be a finite-dimensional complex vector space. Starting from an arbitrary solution $R(u)$ of the Yang--Baxter equation on $V\otimes V$ (see Section \ref{sec-prel} for more precision), a fused solution is built given two sequences of complex numbers $\bc=(c_1,\dots,c_n)$ and $\un{\bc}=(c_{\un{1}},\dots,c_{\un{n'}})$ and is denoted by $R_{\bc,\un{\bc}}(u)$. The resulting solutions of the Yang--Baxter equation $R_{\bc,\un{\bc}}(u)$ is an endomorphism of the vector space $V^{\otimes n}\otimes V^{\otimes n'}$. 

Then, linear subspaces $W_{\bc}\subset V^{\otimes n}$ and $W_{\un{\bc}}\subset V^{\otimes n'}$ are constructed, with the property that the fused solution $R_{\bc,\un{\bc}}(u)$ preserves the subspace $W_{\bc}\otimes W_{\un{\bc}}$ of $V^{\otimes n}\otimes V^{\otimes n'}\!$.  Thus, by restriction, the operator $R_{\bc,\un{\bc}}(u)$ induces a solution of the Yang--Baxter equation acting on the space $W_{\bc}\otimes W_{\un{\bc}}$. A crucial feature of the invariant subspaces is that, for any $\bc=(c_1,\dots,c_n)$, the subspace $W_{\bc}$ is defined as the image of an endomorphism $F(\bc)$ of $V^{\otimes n}$, and moreover the endomorphism $F(\bc)$ is expressed in terms of the original solution $R(u)$. We call the endomorphism $F(\bc)$ the fusion function.

The second part of the paper is devoted to the study of the endomorphisms $F(\bc)$, and in turn of the subspaces $W_{\bc}$, for some fundamental solutions $R(u)$. Our interest is to find some values of the parameters $\bc$ such that the evaluation of the fusion function $F(\bc)$ is not invertible and then to describe the image  $W_{\bc}$. This is where the so-called ``fusion formulas'' will play a major role. 

For the symmetric group $S_n$, the fusion formula expresses each element of a complete set of primitive idempotents of $\C S_n$ as a certain evaluation of a rational function in several variables with coefficients in $\C S_n$ (the rational function is the same for all the idempotents). A generalization of the fusion formula of $S_n$ exists also for the Hecke algebra, and we will use both of them. As the name suggests, these fusion formulas are intimately related to the fusion procedure. Indeed the rational function involved in these fusion formulas turn out to be closely related to the fusion functions $F(\bc)$ mentioned above, in each of the four cases considered in this paper. For each case, it is our main goal in the second part of the paper to explain how the statements of the fusion formulas can be translated into properties of the fusion function, resulting in the identification of interesting distinguished subspaces $W_{\bc}$.

We start with the case where $R(u)$ is the so-called Yang solution. We give all details in this case to use it as a reference in the sequel. Then we generalize the obtained results to the case where $R(u)$ is a generalization of the Yang solution for a $\Z/2\Z$-graded vector space $V$ (\emph{i.e.} a ``super" vector space), and to the case where $R(u)$ is the standard deformation of the Yang solution, both in the usual and $\Z/2\Z$-graded situations.

{}For the Yang solution, the main result  is expressed as the identification of subspaces $W_{\bc}$ which are isomorphic to finite-dimensional irreducible representations of the general linear Lie algebra $\mathfrak{gl}_N$, where $N$ is the dimension of $V$. It results in a family of fused solutions of the Yang--Baxter equation acting on finite-dimensional irreducible representations of $\mathfrak{gl}_N$. Let $\lambda$ be a partition of length less than or equal to $N$. A subspace $W_{\bc}$ isomorphic to the irreducible representation of highest weight $\lambda$ is obtained when the sequence of numbers $\bc$ is a sequence of classical contents associated to a standard Young tableau of shape $\lambda$. We show moreover that, by replacing the set $\bc$ by a sequence of contents associated to another standard Young tableau of the same shape, we obtain an equivalent fused solution up to a change of basis.

As it is suggested in \cite{Mo}, the study of the subspaces $W_{\bc}$ when $R(u)$ is the Yang solution is made using the classical Schur--Weyl duality (see, \emph{e.g.}, \cite{GW,We}) together with the so-called ``fusion formula'' for the symmetric group. The fusion formula for the symmetric group $S_n$ originates in \cite{Ju} and has then been developed in \cite{Ch2}, see also \cite{Mo,Na,Na2,Na3}.

When $R(u)$ is the generalization of the Yang solution for a $\Z/2\Z$-graded vector space $V=V_{\ov{0}}\oplus V_{\ov{1}}$, we still can use the fusion formula for the symmetric group $S_n$, even if, in this situation, another action of $S_n$ on $V^{\otimes n}$ is relevant. This action gives rise to a ``super" analogue of the classical Schur--Weyl duality \cite{BeRe,Se}, which holds between the symmetric group and the general linear Lie superalgebra $\mathfrak{gl}_{N|M}$, where $N:=\dim(V_{\ov{0}})$ and $M:=\dim(V_{\ov{1}})$. However, we are still able to relate the fusion function $F(\bc)$ in this situation with the rational function used in the fusion formula of the symmetric group.
This allows us to identify the invariant subspaces and to obtain a family of fused solutions of the Yang--Baxter equation acting on finite-dimensional irreducible representations of $\mathfrak{gl}_{N|M}$.

{}For the standard deformation of the Yang solution and its generalization for a $\Z/2\Z$-graded vector space, we use the fusion formula for the Hecke algebra \cite{IMOs} together with the quantum analogue of the Schur--Weyl duality. This duality holds between the Hecke algebra and the quantum group $U_q(\mathfrak{gl}_{N|M})$ associated to the Lie superalgebra $\mathfrak{gl}_{N|M}$ (see \cite{Ji1} for the ungraded situation $M=0$, and \cite{Mi,Moon} for the generalisation to $M>0$). Here again, we can relate the fusion function $F(\bc)$ with the rational function used in the fusion formula for the Hecke algebra, and identify the invariant subspaces. Thus, via the fusion procedure, we obtain a family of fused solutions of the Yang--Baxter equation acting on finite-dimensional irreducible representations of $U_q(\mathfrak{gl}_{N|M})$. The parameters $\bc$ are now related to quantum contents of standard Young tableaux. Here again, replacing a standard Young tableau by another standard Young tableau of the same shape leads to equivalent fused solutions up to a change of basis.

\paragraph{\textbf{4.}} In the first part of the paper, for an arbitrary fundamental solution $R(u)$, we study the possible equivalence of the fused solutions under a permutation of the parameters $\bc=(c_1,\dots,c_n)$. By ``equivalence'' of two fused solutions, we mean that the restrictions of the fused solutions to the distinguished subspaces are related by a change of basis. Assuming an unitary property for $R(u)$, we obtain that particular permutations (that we call ``admissible'' for $\bc$ and $R(u)$, see Subsection \ref{subsec-dep}) of the parameters $\bc=(c_1,\dots,c_n)$ lead to equivalent fused solutions. For the four solutions $R(u)$ considered above, the fact that the fused solutions depend on the standard Young tableaux only through their shapes is a consequence of this general property.

We note that, even for solutions $R(u)$ considered in this paper, the study of (proper) invariant subspaces for the fused solutions is, though quite satisfactory, not fully complete. Namely, it does not seem to be known how to determine all the sets of parameters $\bc$ such that the fusion function $F(\bc)$ is non-invertible (and thus leads to a proper invariant subspace) and to study the obtained fused solutions. Even the classification of the sets of parameters $\bc$ such that the operator $F(\bc)$ is proportional to a projector does not seem to be completely understood.

We note that analogues of fusion formulas of the symmetric group and of the Hecke algebra have been obtained for several other structures: the Brauer algebras \cite{IM,IMO1}, the Birman--Wenzl--Murakami algebras \cite{IMO2}, the complex reflection groups of type $G(d,1,n)$ \cite{OPdA1}, the Ariki--Koike algebras \cite{OPdA2} and the wreath products of finite groups by the symmetric group \cite{PdA}. These fusion formulas involve as well a rational function (with coefficients in the algebras under consideration) which is built from elementary ``universal" solutions of the Yang--Baxter equation, the analogues for these structures of the Baxterized elements of $\C S_n$. However, in these fusion formulas, Baxterized solutions of the reflection equation also appear in the rational functions. It is expected, as it is already mentioned in \cite{IMO2}, that these fusion formulas admit an interpretation in the framework of the fusion procedure for the reflection equation.

\paragraph{\textbf{5.}} The paper is organized as follows. In Section \ref{sec-prel}, we give necessary definitions and notations. 

In Section \ref{sec-fus}, starting from an arbitrary solution $R(u)$ of the Yang--Baxter equation acting on $V\otimes V$, we describe the construction of the endomorphisms $R_{\bc,\un{\bc}}(u)$ and prove that the obtained operators $R_{\bc,\un{\bc}}(u)$ are solutions of the Yang--Baxter equation as well. 

In Section \ref{sec-inv}, we define the fusion function $F(\bc)$ and show that its image provides invariant subspaces for the fused solutions. We also give an alternative formula for the fusion function $F(\bc)$ and prove the equivalence of the fused solutions under admissible permutation of the parameters $\bc$. Both Sections \ref{sec-fus} and \ref{sec-inv} deal with an arbitrary solution $R(u)$ of the Yang--Baxter equation (only the unitary property is assumed in Subsection \ref{subsec-dep}). 

In Section \ref{sec-Sn}, we apply the whole procedure to the situation where $R(u)$ is the Yang solution, and identify invariant subspaces isomorphic to finite-dimensional irreducible representations of the general linear Lie algebra. The main results are summarized in Corollary \ref{coro-fin-Sn}. We also provide explicit examples of fused solutions and describe some ``non-standard" evaluations of the fusion function.

In Sections \ref{sec-Sn-s} and \ref{sec-Hn}, we generalize the results of the preceding section to the situation where $R(u)$ is a generalization of the Yang solution for a super vector space and where $R(u)$ is the standard deformation of the Yang solution and of its ``super" generalization.

\section{Preliminaries}\label{sec-prel}

\paragraph{\textbf{1.}} Let $a$ and $b$ be two integers such that $a\leq b$. We will use the following notation for the product of non-commuting quantity $x_i$ depending on $i$: 
$$\displaystyle\prod_{i=a,\dots,b}^{\rightarrow}x_i:=x_ax_{a+1}\dots x_b\ \quad\text{and}\ \quad\displaystyle\prod_{i=a,\dots,b}^{\leftarrow}x_i:=x_b\dots x_{a+1}x_a\ .$$
Similarly, for non-commuting quantities depending on two indices $i$ and $j$, the arrow ``$\rightarrow$" indicates that the factors are ordered lexicographically, while the arrow ``$\leftarrow$" indicates that the factors are ordered in the reverse lexicographic order. For example, we have:
$$\displaystyle\prod_{1\leq i<j\leq 4}^{\rightarrow}x_{ij}:=x_{12}x_{13}x_{14}x_{23}x_{24}x_{34}\ \quad\text{and}\ \quad\displaystyle\prod_{1\leq i<j\leq 4}^{\leftarrow}x_{ij}:=x_{34}x_{24}x_{23}x_{14}x_{13}x_{12}\ .$$

\paragraph{\textbf{2.}} The product of two (and any number of) permutations is written with the standard notation for the composition of functions, namely, if $\pi,\sigma\in S_n$ are two permutations then $\pi\sigma$ means that $\sigma$ is applied first and then $\pi$; for example, $(1,2)(2,3)=(1,2,3)$ in the standard cyclic notation.

\paragraph{\textbf{3.}} For a vector space $V$ over $\C$, we denote by $\End(V)$ the set of endomorphisms of $V$. We fix a finite-dimensional vector space $V$ over $\C$ and let $\textrm{Id}$ denote the identity endomorphism of $V$ and $\Pe$ denote the permutation endomorphism of $V\otimes V$ (that is, $\Pe(x\otimes y):=y\otimes x$ for $x,y\in V$). We will denote by $\textrm{Id}_{V^{\otimes k}}$ the identity operator on $V^{\otimes k}$, for $k\in\Z_{\geq2}$.

We will use the standard notation for operators in $\End(V^{\otimes n})$; namely, if $T\in\End(V)$ then $T_i$, for $i=1,\dots,n$, will denote the operator in $\End(V^{\otimes n})$ acting as $T$ on the $i$-th copy and trivially anywhere else. For example, for $n=3$, we have
$$T_1:=T\otimes\textrm{Id}_{V^{\otimes 2}}\,,\ \quad\ T_2:=\textrm{Id}\otimes T\otimes\textrm{Id}\ \quad\text{and}\ \quad T_3:=\textrm{Id}_{V^{\otimes 2}}\otimes T\ .$$
Similarly, if $R$ is an operator in $\End(V\otimes V)$ then $R_{i,j}$ will denote the operator in $\End(V^{\otimes n})$ acting as $R$ in the $i$-th and $j$-th copies and trivially anywhere else. For example, if $n=3$:
$$R_{1,2}:=R\otimes\textrm{Id}\,,\ \quad\ R_{2,3}:=\textrm{Id}\otimes R\ \quad\text{and}\ \quad R_{1,3}:=(\textrm{Id}\otimes \Pe)R_{1,2}(\textrm{Id}\otimes \Pe)=\Pe_{2,3}R_{1,2}\Pe_{2,3}\ .$$ 
We also have for example $R_{2,1}:=\Pe_{1,2}R_{1,2}\Pe_{1,2}$. Note that we slightly abuse by not indicating the integer $n$ in the notation $T_i$, $R_{i,j}$, $\Pe_{i,j}$, etc. This should not lead to any confusion as it will be clear on which space the operators act.

We recall some obvious commutation relations that we will often use throughout the text without mentioning:
$$\begin{array}{rcll}
T_iT_j& = &T_jT_i 	& \text{if $i\neq j$\,,}\\[0.2em]
T_iR_{j,k}&=&R_{j,k}T_i & \text{if $i\notin\{j,k\}$\,,}\\[0.2em]
R_{i,j}R_{k,l}&=&R_{k,l}R_{i,j}& \text{if $\{i,j\}\cap\{k,l\}=\emptyset$\,,}\\[0.2em]
\Pe_{i,j}T_k&=&T_{s_{i,j}(k)}\Pe_{i,j} & \text{for all $i,j,k=1,\dots,n$ with $i\neq j$,}\\[0.2em]
\Pe_{i,j}R_{k,l}&=&R_{s_{i,j}(k),s_{i,j}(l)}\Pe_{i,j}\ \ \  & \text{for all $i,j,k,l=1,\dots,n$ with $i\neq j$ and $k\neq l$,}
\end{array} $$
where $s_{i,j}$ denotes the transposition of $i$ and $j$.

\paragraph{\textbf{4. Yang--Baxter equation.}} Let $R$ be a function of one variable $u\in\C$, taking values in $\End(V\otimes V)$. The variable $u$ is called the \emph{spectral parameter}. The function $R$ is a solution, on $V\otimes V$, of the Yang--Baxter equation if the following functional relation is satisfied:
\begin{equation}\label{YB-add1}
R_{1,2}(u)R_{1,3}(u+v)R_{2,3}(v)=R_{2,3}(v)R_{1,3}(u+v)R_{1,2}(u)\ ,
\end{equation}
where both sides take values in $\End(V^{\otimes 3})$ and $R_{1,2}(u)$ means $R(u)_{1,2}$\,, etc. (we will always use this standard notation). By abuse of speaking, we will sometimes say that the operator $R(u)$ itself is a solution of the Yang--Baxter equation.

More generally, let $E$ be an indexing set, $\{V_{\mu}\}_{\mu\in E}$ be a collection of finite-dimensional vector spaces over $\C$, and let $\{R_{\mu,\nu}\}_{\mu,\nu\in E}$ be a family of functions of $u\in\C$ such that $R_{\mu,\nu}$ takes values in $\End(V_{\mu}\otimes V_{\nu})$, for any $\mu,\nu\in E$. We say that the functions $R_{\mu,\nu}$ form a family of solutions of the Yang--Baxter equation if the following functional relations are satisfied:
\begin{equation}\label{YB-add-fam}
R_{\mu,\nu}(u)R_{\mu,\tau}(u+v)R_{\nu,\tau}(v)=R_{\nu,\tau}(v)R_{\mu,\tau}(u+v)R_{\mu,\nu}(u)\ \ \ \ \ \text{for any $\mu,\nu,\tau\in E$,}
\end{equation}
where both sides take values in $\End(V_{\mu}\otimes V_{\nu}\otimes V_{\tau})$ (in (\ref{YB-add-fam}), $R_{\mu,\nu}(u)$ stands for $R_{\mu,\nu}(u)\otimes\textrm{Id}_{V_{\tau}}$, and similarly for $R_{\mu,\tau}(u+v)$ and $R_{\nu,\tau}(v)$).

Equation (\ref{YB-add1}) is the Yang--Baxter equation with ``additive" spectral parameters. We will sometimes use the version of the equation with ``multiplicative" spectral parameters, namely
\begin{equation}\label{YB-mult1}
R_{1,2}(\alpha)R_{1,3}(\alpha\beta)R_{2,3}(\beta)=R_{2,3}(\beta)R_{1,3}(\alpha\beta)R_{1,2}(\alpha)\ ,
\end{equation}
for a function $R$ of the variable $\alpha\in\C$ taking values in $\End(V\otimes V)$.

\paragraph{\textbf{5. Partitions and Young tableaux.}} Let $\lambda\vdash n$ be a partition of a positive integer $n$, that is, $\lambda=(\lambda_1,\dots,\lambda_l)$ is a family of  integers such that $\lambda_1\geq\lambda_2\geq\dots\geq\lambda_l>0$ and $\lambda_1+\dots+\lambda_l=n$. We say that $\lambda$ is a partition {\em of size} $n$ and {\em of length} $l$ and set $|\lambda|:=n$ and $\ell(\lambda):=l$. 

The Young diagram of $\lambda$ is the set of elements $(x,y)\in\mathbb{Z}^2$ such that $x\in\{1,\dots,l\}$ and $y\in\{1,\dots,\lambda_x\}$. A pair $(x,y)\in\mathbb{Z}^2$ is usually called a {\em node}. The Young diagram of $\lambda$ will be represented in the plan by a left-justified array of $l$ rows such that the $j$-th row contains $\lambda_j$ nodes for all $j=1,\dots,l$ (a node will be pictured by an empty box). We number the rows from top to bottom. We identify partitions with their Young diagrams and say that $(x,y)$ is a node of $\lambda$, or $(x,y)\in\lambda$, if $(x,y)$ is a node of the diagram of $\lambda$.

For a node $\theta=(x,y)$, we set $\cc(\theta):=y-x$. The number $\cc(\theta)$ is called the \emph{classical content} of $\theta$. For any complex number $q$, we set $\qc(\theta):=q^{2(y-x)}$ and call $\qc(\theta)$ the \emph{$q$-quantum content} of $\theta$, or simply the \emph{quantum content} of $\theta$ when $q$ is fixed.

The hook of a node $\theta\in\lambda$ is the set of nodes of $\lambda$ consisting of the node $\theta$ and the nodes which lie either under $\theta$ in the same column or to the right of $\theta$ in the same row; the hook length $h_{\lambda}(\theta)$ of $\theta$ is the cardinality of the hook of $\theta$. We define, for any partition $\lambda$,
\begin{equation}\label{f-lam}
f(\lambda):=\Bigl(\prod_{\theta\in\lambda}h_{\lambda}(\theta)\Bigr)^{-1}.
\end{equation}
For any non-zero complex number $q$, we also define for a partition $\lambda$:
\begin{equation}\label{fq-lam}
f^{(q)}(\lambda):=\prod_{\theta\in\lambda}\frac{q^{\cc(\theta)}}{[h_{\lambda}(\theta)]}\ ,
\end{equation}
where $\displaystyle [N]:=\frac{q^N-q^{-N}}{q-q^{-1}}$ for $N\in\mathbb{Z}\,$.

A Young tableau $\mathcal{T}$ of shape $\lambda$ is a bijection between the set $\{1,\dots,n\}$ and the set of nodes of $\lambda$. In other words, a Young tableau of shape $\lambda$ is obtained by placing without repetition the numbers $1,\dots,n$ into the nodes of $\lambda$. We use the notation $\sh_{\cT}$ to denote the shape of $\cT$. A Young tableau $\cT$ of shape $\lambda$ is {\em standard} if its entries  increase along any row and down any column of  $\lambda$, that is, the number corresponding to the node $(x,y)$ is greater than the number corresponding to the node $(x',y')$ when $x\geq x'$ and $y\geq y'$.

For a Young tableau ${\mathcal{T}}$, we denote respectively by $\cc({\mathcal{T}}|i)$ and $\qc({\mathcal{T}}|i)$ the classical content and the $q$-quantum content of the node containing the number $i$, for all $i=1,\dots,|\sh_{\cT}|$.

\section{Fused solutions of the Yang--Baxter equation}\label{sec-fus}

We fix $V$ a finite-dimensional vector space over $\C$ and let $R$ be a function of $u\in\C$, taking values in $\End(V\otimes V)$, which satisfies the Yang--Baxter equation:
\begin{equation}\label{YB-add}
R_{1,2}(u)R_{1,3}(u+v)R_{2,3}(v)=R_{2,3}(v)R_{1,3}(u+v)R_{1,2}(u)\ ,
\end{equation}
where both sides operate on $V\otimes V\otimes V$ (in Sections \ref{sec-fus} and \ref{sec-inv}, we work with the additive convention for the spectral parameters, see paragraph \textbf{4} of Section \ref{sec-prel}; equivalently, we could have chosen the multiplicative version, as indicated in Remark \ref{rem-mult} below).

\begin{example}\label{ex-Yang}
One of the simplest examples of a solution of Equation (\ref{YB-add}) is the Yang solution:
\begin{equation}\label{Yang}R(u)=\textrm{Id}_{V^{\otimes 2}}-\frac{\Pe}{u}\ .\end{equation}
The fact that the function given by (\ref{Yang}) satisfies the Yang--Baxter equation (\ref{YB-add}) can be checked by a direct calculation.\hfill$\triangle$
\end{example}

Let $n$ and $n'$ be positive integers. We consider the space $V^{\otimes n}\otimes V^{\otimes n'}$ and label the copies of $V$ by $1,\dots,n,\un{1},\dots,\un{n'}$ (from left to right) such that, for example, the operator $R_{a,\un{b}}(u)$ with $a\in\{1,\dots,n\}$ and $b\in\{1,\dots,n'\}$ stands for the operator $R_{a,n+b}(u)$ with the notation explained in Section \ref{sec-prel} (we prefer the underlined notation to make a clear distinction between the indices corresponding to $V^{\otimes n}$ and the indices corresponding to $V^{\otimes n'}$). 

Let $\bc:=(c_1,\dots,c_n)$ be an $n$-tuple of complex parameters and $\un{\bc}:=(c_{\un{1}},\dots,c_{\un{n'}})$ be an $n'$-tuple of complex parameters. We define a function $R_{\bc,\un{\bc}}$ of one variable $u\in\C$ taking values in $\End(V^{\otimes n}\otimes V^{\otimes n'})$ by:
\begin{equation}\label{def-fus-R}
R_{\bc,\un{\bc}}(u):=\prod_{i=1,\dots,n'}^{\rightarrow}R_{n,\un{i}}(u+c_n-c_{\un{i}})\dots\dots R_{2,\un{i}}(u+c_2-c_{\un{i}})R_{1,\un{i}}(u+c_1-c_{\un{i}})\ .
\end{equation}
Moving all the operators with $n$ as a first index to the left and repeating the process for $n-1,n-2,\dots,2$, we find the following alternative form for $R_{\bc,\un{\bc}}(u)$:
\begin{equation}\label{def-fus-R2}
R_{\bc,\un{\bc}}(u)=\prod_{i=1,\dots,n}^{\leftarrow}R_{i,\un{1}}(u+c_i-c_{\un{1}})R_{i,\un{2}}(u+c_i-c_{\un{2}})\dots\dots R_{i,\un{n'}}(u+c_i-c_{\un{n'}})\ .
\end{equation}

We prove in the following theorem that the set of functions $\{R_{\bc,\un{\bc}}\}
$, where $\bc\in\C^n$, $\un{\bc}\in\C^{n'}$ and $n,n'>0$,
forms a family of solutions of the Yang--Baxter equation. We will call elements of this family ``fused solutions" of the Yang--Baxter equation, and operators $R_{\bc,\un{\bc}}(u)$ ``fused operators".
\begin{theorem}\label{thm-fus-R}
Let $n$, $n'$ and $n''$ be positive integers and let $\bc:=(c_1,\dots,c_n)$, $\un{\bc}:=(c_{\un{1}},\dots,c_{\un{n'}})$ and $\unn{\bc}:=(c_{\unn{1}},\dots,c_{\unn{n''}})$ be, respectively, an $n$-tuple, an $n'$-tuple and an $n''$-tuple of complex parameters.
We have the functional equation
\begin{equation}\label{fus-YB-add}
R_{\bc,\un{\bc}}(u)R_{\bc,\unn{\bc}}(u+v)R_{\un{\bc},\unn{\bc}}(v)=R_{\un{\bc},\unn{\bc}}(v)R_{\bc,\unn{\bc}}(u+v)R_{\bc,\un{\bc}}(u)\ ,
\end{equation}
where both sides take values in $\End(V^{\otimes n}\otimes V^{\otimes n'}\otimes V^{\otimes n''})$ and the copies of $V$ are labelled by $1,\dots,n$, $\un{1},\dots,\un{n'}$, $\unn{1},\dots,\unn{n''}$ (from left to right). In Equation (\ref{fus-YB-add}), $R_{\bc,\un{\bc}}(u)$ stands for the operator $R_{\bc,\un{\bc}}(u)\otimes\textrm{Id}_{V^{\otimes n''}}$ and similarly for $R_{\bc,\unn{\bc}}(u+v)$ and $R_{\un{\bc},\unn{\bc}}(v)$.
\end{theorem}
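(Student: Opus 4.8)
The plan is to prove the fused Yang--Baxter equation \eqref{fus-YB-add} by reducing it to the fundamental equation \eqref{YB-add} satisfied by $R(u)$, using only the definition \eqref{def-fus-R} (or its rewritten form \eqref{def-fus-R2}) together with the elementary commutation relations recalled in paragraph \textbf{3} of Section \ref{sec-prel}. The key observation is that each factor $R_{\bc,\un{\bc}}(u)$ is an ordered product of \emph{fundamental} $R$-matrices $R_{a,\un{b}}(u+c_a-c_{\un{b}})$, each acting in one ``barred'' and one ``unbarred'' tensor slot, and that the shifts of spectral parameters are \emph{additive in a compatible way}: the shift attached to the pair of slots $(a,\un{b})$ in $R_{\bc,\un{\bc}}(u)$ is $u+c_a-c_{\un b}$, the shift attached to $(a,\unn{c})$ in $R_{\bc,\unn{\bc}}(u+v)$ is $(u+v)+c_a-c_{\unn c}$, and the shift attached to $(\un b,\unn c)$ in $R_{\un{\bc},\unn{\bc}}(v)$ is $v+c_{\un b}-c_{\unn c}$; crucially, the first plus the third equals the second. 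This is exactly the additivity condition under which the fundamental relation \eqref{YB-add}, applied in slots $(a,\un b,\unn c)$, reads
\begin{equation*}
R_{a,\un b}(u+c_a-c_{\un b})\,R_{a,\unn c}(u+v+c_a-c_{\unn c})\,R_{\un b,\unn c}(v+c_{\un b}-c_{\unn c})=R_{\un b,\unn c}(v+c_{\un b}-c_{\unn c})\,R_{a,\unn c}(u+v+c_a-c_{\unn c})\,R_{a,\un b}(u+c_a-c_{\un b})\,.
\end{equation*}

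Concretely, I would proceed as follows. First, fix the total space $V^{\otimes n}\otimes V^{\otimes n'}\otimes V^{\otimes n''}$ with slots labelled $1,\dots,n,\un 1,\dots,\un{n'},\unn 1,\dots,\unn{n''}$ and expand each of the three factors on the left-hand side of \eqref{fus-YB-add} as an explicit ordered product of fundamental $R$-matrices using \eqref{def-fus-R}/\eqref{def-fus-R2}. The left-hand side then becomes one long word in the letters $R_{a,\un b}(\cdot)$, $R_{a,\unn c}(\cdot)$, $R_{\un b,\unn c}(\cdot)$, and the goal is to transform it into the word representing the right-hand side. The strategy is a standard ``bubble-sort'' / nested-induction argument: letters $R_{a,\un b}$ and $R_{a',\un{b'}}$ with disjoint index sets commute freely, so the only nontrivial moves are the braiding moves governed by \eqref{YB-add}, applied whenever three slots $(a,\un b,\unn c)$ appear with the appropriate adjacent pattern. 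One then shows that by repeatedly applying these moves (and free commutations to bring the relevant three factors adjacent) one can drag the whole $R_{\un{\bc},\unn{\bc}}(v)$ block from the right end to the left end, past $R_{\bc,\unn{\bc}}(u+v)$ and then past $R_{\bc,\un{\bc}}(u)$, each pass being an $n$-fold (resp. $n'$-, $n''$-fold) iteration of the basic three-term identity above. I would organize this as an induction on $n+n'+n''$, or — cleaner — first prove the case $n=n'=n''=1$ (which is literally \eqref{YB-add} with shifted arguments), then prove the case where two of the three integers equal $1$ by induction on the third (peeling off one fundamental factor at a time and using the commutation relations to re-expose the inductive structure), and finally assemble the general case.

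Alternatively, and perhaps more transparently, I would phrase the proof in terms of a ``colored'' or ``multi-parameter'' Yang--Baxter system: attach to the $k$-th copy of $V$ in the first block the inhomogeneity $c_k$, to the $\un k$-th copy the inhomogeneity $c_{\un k}$, etc., and observe that $R_{\bc,\un{\bc}}(u)$ is precisely the monodromy-type operator obtained by letting the ordered row of barred slots ``cross'' the ordered row of unbarred slots, each individual crossing being a fundamental $R$-matrix evaluated at the difference of the (spectral-parameter-shifted) inhomogeneities. In this picture \eqref{fus-YB-add} is the well-known statement that three such blocks of parallel lines can be slid past one another consistently, which follows from the fundamental \eqref{YB-add} by the usual planar/wiring-diagram argument (equivalently, it is the associativity of the $R$-matrix action in the category of representations with a spectral parameter). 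I would make this rigorous by the inductive scheme of the previous paragraph, since the excerpt works purely algebraically and has not set up any diagrammatic formalism.

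The main obstacle is purely bookkeeping: one must be careful about the \emph{orders} in the products \eqref{def-fus-R} and \eqref{def-fus-R2} so that at each stage the three fundamental factors to which \eqref{YB-add} is to be applied are genuinely adjacent (after using only free commutations), and so that the shifted spectral arguments match up exactly as $u$, $u+v$, $v$ in the three slots. The two equivalent forms \eqref{def-fus-R} and \eqref{def-fus-R2} are the right tool here: one uses \eqref{def-fus-R2} for the factor whose second (barred) block is being pushed leftward and \eqref{def-fus-R} for the factor it is being pushed through, so that the nesting of the products is compatible and each elementary move is legal. Beyond getting these orderings straight, no genuinely new idea is needed; the content is entirely in the additivity of the shifts, which is built into the definition \eqref{def-fus-R} precisely so that this reduction goes through.
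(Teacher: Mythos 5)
Your proposal is correct and follows essentially the same route as the paper: the key point in both is that the additivity of the shifts $(u+c_i-c_{\un j})+(v+c_{\un j}-c_{\unn k})=u+v+c_i-c_{\unn k}$ makes every application of the fundamental equation (\ref{YB-add}) legal, and the rest is bookkeeping organized by induction (the paper inducts on $n$, handling $n=1$ for arbitrary $n'$, $n''$ by the iterated three-term move you describe, and peels off the slot-$n$ factors via (\ref{def-fus-R2}) in the inductive step, exactly as in your "drag the block leftward" scheme). The only difference is that you leave the inductive bookkeeping as a plan with a couple of alternative organizations, whereas the paper commits to one of them and carries it out.
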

\begin{proof}
Note first that the equality $(u+c_i-c_{\un{j}})+(v+c_{\un{j}}-c_{\unn{k}})=u+v+c_i-c_{\unn{k}}$, valid for any $i\in\{1,\dots,n\}$, $j\in\{1,\dots,n'\}$ and $k\in\{1,\dots,n''\}$, ensures that, in the product of operators in the left hand side of (\ref{fus-YB-add}), we can always apply the Yang--Baxter equation (\ref{YB-add}) as soon as the indices are properly arranged. In other words, we will never have to worry about the spectral parameters. Therefore, for saving place during the proof, we will drop them out of the notation, namely we set, until the end of the proof, 
$$R_{i,\un{j}}:=R_{i,\un{j}}(u+c_i-c_{\un{j}})\,,\ \quad\ R_{i,\unn{k}}:=R_{i,\unn{k}}(u+v+c_i-c_{\unn{k}})\ \quad\text{and}\ \quad R_{\un{j},\unn{k}}:=R_{\un{j},\unn{k}}(v+c_{\un{j}}-c_{\unn{k}})\ ,$$
for any $i\in\{1,\dots,n\}$, $j\in\{1,\dots,n'\}$ and $k\in\{1,\dots,n''\}$.

We prove the formula (\ref{fus-YB-add}) by induction on $n$. If $n=1$ then the left hand side of (\ref{fus-YB-add}) is
$$\begin{array}{cl}
& R_{1,\un{1}}R_{1,\un{2}}\dots R_{1,\un{n'}}\cdot R_{1,\unn{1}}R_{1,\unn{2}}\dots R_{1,\unn{n''}}\cdot \displaystyle\prod_{i=1,\dots,n''}^{\rightarrow}\Bigl(R_{\un{n'},\unn{i}}\dots R_{\un{2},\unn{i}}R_{\un{1},\unn{i}}\Bigr)\\
= & R_{1,\un{1}}R_{1,\un{2}}\dots R_{1,\un{n'}}\cdot \displaystyle\prod_{i=1,\dots,n''}^{\rightarrow}\Bigl(R_{1,\unn{i}}\cdot R_{\un{n'},\unn{i}}\dots R_{\un{2},\unn{i}}R_{\un{1},\unn{i}}\Bigr)\ .
\end{array}$$
We have, for $i=1,\dots,n''$,
$$\begin{array}{cl}
& R_{1,\un{1}}R_{1,\un{2}}\dots R_{1,\un{n'}}\cdot R_{1,\unn{i}}\cdot R_{\un{n'},\unn{i}}\dots R_{\un{2},\unn{i}}R_{\un{1},\unn{i}}\\[0.4em]
= \!\!\!&  R_{1,\un{1}}R_{1,\un{2}}\dots R_{1,\un{n'-1}}\cdot R_{\un{n'},\unn{i}}R_{1,\unn{i}}R_{1,\un{n'}}\cdot R_{\un{n'-1},\unn{i}}\dots R_{\un{2},\unn{i}}R_{\un{1},\unn{i}}\,\quad\text{(as $R_{1,\un{n'}}R_{1,\unn{i}}R_{\un{n'},\unn{i}}=R_{\un{n'},\unn{i}}R_{1,\unn{i}}R_{1,\un{n'}}\,$)}\\[0.4em]
= \!\!\!&  R_{\un{n'},\unn{i}}\cdot R_{1,\un{1}}R_{1,\un{2}}\dots R_{1,\un{n'-1}}\cdot R_{1,\unn{i}}\cdot R_{\un{n'-1},\unn{i}}\dots R_{\un{2},\unn{i}}R_{\un{1},\unn{i}}\cdot R_{1,\un{n'}}\ .
\end{array}$$
Repeating this calculation with $n'$ replaced by $n'-1$ and so on, we find that
$$R_{1,\un{1}}R_{1,\un{2}}\dots R_{1,\un{n'}}\cdot R_{1,\unn{i}}\cdot R_{\un{n'},\unn{i}}\dots R_{\un{2},\unn{i}}R_{\un{1},\unn{i}}=R_{\un{n'},\unn{i}}\dots R_{\un{2},\unn{i}}R_{\un{1},\unn{i}}\cdot R_{1,\unn{i}}\cdot R_{1,\un{1}}R_{1,\un{2}}\dots R_{1,\un{n'}}\ .$$
So we finally obtain that the left hand side of (\ref{fus-YB-add}) for $n=1$ is equal to
$$\begin{array}{cl}
&\displaystyle\prod_{i=1,\dots,n''}^{\rightarrow}\Bigl(R_{\un{n'},\unn{i}}\dots R_{\un{2},\unn{i}}R_{\un{1},\unn{i}}\cdot R_{1,\unn{i}}\Bigr)\cdot R_{1,\un{1}}R_{1,\un{2}}\dots R_{1,\un{n'}}\\
= & \displaystyle\prod_{i=1,\dots,n''}^{\rightarrow}\Bigl(R_{\un{n'},\unn{i}}\dots R_{\un{2},\unn{i}}R_{\un{1},\unn{i}}\Bigr)\cdot R_{1,\unn{1}}R_{1,\unn{2}}\dots R_{1,\unn{n''}}\cdot R_{1,\un{1}}R_{1,\un{2}}\dots R_{1,\un{n'}}\ ,
\end{array}$$
which coincides with the right hand side of (\ref{fus-YB-add}) for $n=1$.

Now let $n>1$ and set $\bc^{(n-1)}:=(c_1,\dots,c_{n-1})$. Using (\ref{def-fus-R2}) and commutation relations, we reorganize the left hand side of (\ref{fus-YB-add}) and write it as
$$ R_{n,\un{1}}R_{n,\un{2}}\dots R_{n,\un{n'}}\cdot R_{n,\unn{1}}R_{n,\unn{2}}\dots R_{n,\unn{n''}}\cdot R_{\bc^{(n-1)}\!,\un{\bc}}(u)\,R_{\bc^{(n-1)}\!,\unn{\bc}}(u+v)\,R_{\un{\bc},\unn{\bc}}(v)\ .$$
We use the induction hypothesis to transform this expression into
$$ R_{n,\un{1}}R_{n,\un{2}}\dots R_{n,\un{n'}}\cdot R_{n,\unn{1}}R_{n,\unn{2}}\dots R_{n,\unn{n''}}\cdot R_{\un{\bc},\unn{\bc}}(v)\,R_{\bc^{(n-1)}\!,\unn{\bc}}(u+v)\,R_{\bc^{(n-1)}\!,\un{\bc}}(u)\ .$$
Then we use the induction basis (with the space labelled here by $n$ playing the same role as the space labelled by $1$ in the calculation for the induction basis) to move $R_{\un{\bc},\unn{\bc}}(v)$ to the left and we obtain
$$ R_{\un{\bc},\unn{\bc}}(v)\cdot R_{n,\unn{1}}R_{n,\unn{2}}\dots R_{n,\unn{n''}}\cdot R_{n,\un{1}}R_{n,\un{2}}\dots R_{n,\un{n'}} \cdot R_{\bc^{(n-1)}\!,\unn{\bc}}(u+v)\,R_{\bc^{(n-1)}\!,\un{\bc}}(u)\ .$$
This is equal, using again commutation relations, to $R_{\un{\bc},\unn{\bc}}(v)R_{\bc,\unn{\bc}}(u+v)R_{\bc,\un{\bc}}(u)$, that is, to the right hand side of (\ref{fus-YB-add}).
\end{proof}

\begin{remark}\label{rem-graph}
To the operator $R(u)$ is associated the following figure:
\begin{center}
\setlength{\unitlength}{0.004cm}
\begin{picture}(300,300)(0,-300)
\put(0,0){ \line(1, -1){300}}
\put(300,0){ \line(-1, -1){300}}
\put(155,-125){\footnotesize{$u$}}
\end{picture}
\end{center}
Roughly speaking, this represents the interaction of two particles at the intersection of the two lines, and the interaction is governed, in some sense, by the operator $R(u)$ (see, \emph{e.g.}, \cite{GRS} for the physical meaning of the graphical formulation of the Yang--Baxter equation). Then, the Yang--Baxter equation (\ref{YB-add}) is formulated graphically as
\begin{center}
\setlength{\unitlength}{0.0065cm}
\begin{picture}(1200,550)(0,-500)
\put(0,20){\footnotesize{1}}
\put(150,20){\footnotesize{2}}
\put(500,20){\footnotesize{3}}
\put(0,0){ \line(1, -1){500}}
\put(150,0){ \line(0, -1){500}}
\put(500,0){ \line(-1, -1){500}}
\put(600,-250){\large{$=$}}
\put(137,-120){\footnotesize{$u$}}
\put(225,-210){\footnotesize{$u\!\!+\!\!v$}}
\put(175,-320){\footnotesize{$v$}}
\put(700,20){\footnotesize{1}}
\put(1050,20){\footnotesize{2}}
\put(1200,20){\footnotesize{3}}
\put(700,0){ \line(1, -1){500}}
\put(1050,0){ \line(0, -1){500}}
\put(1200,0){ \line(-1, -1){500}}
\put(1075,-120){\footnotesize{$v$}}
\put(925,-210){\footnotesize{$u\!\!+\!\!v$}}
\put(1040,-320){\footnotesize{$u$}}
\end{picture}
\end{center}
The numbers above the lines are the indices of the copy of $V$ in $V^{\otimes 3}$. By convention, we read the figure from top to bottom and we write the corresponding operators from left to right; for example, the left hand side of the picture above corresponds to $R_{1,2}(u)R_{1,3}(u+v)R_{2,3}(v)$.

Within this graphical interpretation, the fused solutions, given by Formula (\ref{def-fus-R}), correspond to interactions of multiplets of particles, namely, $n$ particles interacting with $n'$ particles. As an example, for $n=n'=2$, the corresponding interaction is depicted by:
\begin{center}
\setlength{\unitlength}{0.008cm}
\begin{picture}(650,530)(0,-500)
\put(0,20){\footnotesize{1}}
\put(150,20){\footnotesize{2}}
\put(500,20){\footnotesize{$\un{1}$}}
\put(650,20){\footnotesize{$\un{2}$}}
\put(0,0){ \line(1, -1){500}}
\put(150,0){ \line(1, -1){500}}
\put(500,0){ \line(-1, -1){500}}
\put(650,0){ \line(-1, -1){500}}
\put(230,-215){\footnotesize{$u_{1\un{1}}$}}
\put(380,-215){\footnotesize{$u_{2\un{2}}$}}
\put(308,-140){\footnotesize{$u_{2\un{1}}$}}
\put(308,-290){\footnotesize{$u_{1\un{2}}$}}
\end{picture}
\end{center}
where $u_{i\un{j}}:=u+c_i-c_{\un{j}}$ for $i,j=1,2$, and $c_1,c_2,c_{\un{1}},c_{\un{2}}\in\C$ are the parameters of the fused solutions. We note that there is no ambiguity in the ordering of the factors, because $R_{1,\un{1}}(u_{1\un{1}})$ and $R_{2,\un{2}}(u_{2\un{2}})$ commute. The expression obtained from the above picture is equal to the right hand side of the defining formula (\ref{def-fus-R}) for $n=n'=2$.
\hfill$\triangle$
\end{remark}

\begin{example}\label{ex1}
In this example, we let $R(u)$ be the Yang solution (\ref{Yang}). We set $n=2$, $n'=1$, $c_1=c_{\un{1}}=0$ and $c_2=1$. We have:
\begin{equation}\label{ex-fus1}R_{\bc,\un{\bc}}(u)=R_{2,\un{1}}(u+1)R_{1,\un{1}}(u)=\left(\textrm{Id}_{V^{\otimes 3}}-\frac{\Pe_{2,\un{1}}}{u+1}\right)\left(\textrm{Id}_{V^{\otimes 3}}-\frac{\Pe_{1,\un{1}}}{u}\right)\ .\end{equation}
We consider the case $\dim(V)=2$ and let $\{e_1,e_2\}$ be a basis of $V$. We define $e_{11}:=e_1\otimes e_1$, $e_{(12)}:=e_1\otimes e_2+e_2\otimes e_1$, $e_{22}:=e_2\otimes e_2$ and $e_{[12]}:=e_1\otimes e_2-e_2\otimes e_1$. We calculate the matrix of the endomorphism (\ref{ex-fus1}) in the following basis of $V^{\otimes 2}\otimes V$:
$$ \{e_{11}\otimes e_1,\,e_{(12)}\otimes e_1,\,e_{22}\otimes e_1,\,e_{11}\otimes e_2,\,e_{(12)}\otimes e_2,\,e_{22}\otimes e_2,\,e_{[12]}\otimes e_1,\,e_{[12]}\otimes e_2\}\ .$$
We obtain (points indicate coefficients equal to $0$):
$$\left(\begin{array}{cccccccc} 
\frac{u-1}{u+1}&\cdot&\cdot&\cdot&\cdot&\cdot&\cdot&\cdot \\
\cdot&\frac{u}{u+1}&\cdot&-\frac{1}{u+1}&\cdot&\cdot&-\frac{1}{u(u+1)}&\cdot \\
\cdot&\cdot&1&\cdot&-\frac{2}{u+1}&\cdot&\cdot&-\frac{2}{u(u+1)} \\
\cdot&-\frac{2}{u+1}&\cdot&1&\cdot&\cdot&\frac{2}{u(u+1)}&\cdot \\
\cdot&\cdot&-\frac{1}{u+1}&\cdot&\frac{u}{u+1}&\cdot&\cdot&\frac{1}{u(u+1)} \\
\cdot&\cdot&\cdot&\cdot&\cdot&\frac{u-1}{u+1}&\cdot&\cdot \\
\cdot&\cdot&\cdot&\cdot&\cdot&\cdot&\frac{u-1}{u}&\cdot \\
\cdot&\cdot&\cdot&\cdot&\cdot&\cdot&\cdot&\frac{u-1}{u} \\
\end{array}\right)$$
We remark that, for the particular choice of $c_1,c_2,c_{\un{1}}$ made here, the fused operator $R_{\bc,\un{\bc}}(u)$ leaves invariant the subspace $S^2V\otimes V$ of $V^{\otimes 2}\otimes V$, where $S^2V$ is the symmetric square of $V$. This phenomenon can be explained by the fact that $S^2V$ is the image of the endomorphism $R(-1)$ of $V^{\otimes 2}$ (since $R(-1)=\textrm{Id}_{V^{\otimes 2}}+\Pe$) together with the following calculation:
$$R_{\bc,\un{\bc}}(u)\cdot R_{1,2}(-1)=R_{2,\un{1}}(u+1)R_{1,\un{1}}(u)\cdot R_{1,2}(-1)=R_{1,2}(-1)\cdot R_{1,\un{1}}(u)R_{2,\un{1}}(u+1)\ ,$$
which shows that $R_{\bc,\un{\bc}}(u)$ leaves invariant the image of $R_{1,2}(-1)$ in $V^{\otimes 2}\otimes V$.

Similarly, if we consider the same situation with only $c_2$ changed to $-1$ instead of $1$, we can verify that the fused operator $R_{\bc,\un{\bc}}(u)$ leaves invariant the subspace $\Lambda^2V\otimes V$ of $V^{\otimes 2}\otimes V$, where $\Lambda^2V$ is the alternating square of $V$. Now, $\Lambda^2V$ is the image of the endomorphism $R(1)$ of $V^{\otimes 2}$ (since $R(1)=\textrm{Id}_{V^{\otimes 2}}-\Pe$) and the invariance of $\Lambda^2V\otimes V$ is explained by
$$R_{\bc,\un{\bc}}(u)\cdot R_{1,2}(1)=R_{2,\un{1}}(u-1)R_{1,\un{1}}(u)\cdot R_{1,2}(1)=R_{1,2}(1)\cdot R_{1,\un{1}}(u)R_{2,\un{1}}(u-1)\ .$$
These two examples are actually (simplest) examples of a general phenomenon, which will be explained in the next section in full generality.
\hfill$\triangle$
\end{example}

\section{Invariant subspaces for fused solutions}\label{sec-inv}

The goal of this Section is to introduce one of the main object of our study: the fusion function $F(\bc)$. We prove some properties of this fusion function and use it to construct distinguished subspaces and show that these subspaces are invariant subspaces for the fused solution. We are still working in a general setting, which is to be applied in the next sections.

\subsection{Invariant subspaces as images of certain operators}\label{subsec-inv}

Let $n$ be a positive integer such that $n\geq 2$ and $\bc:=(c_1,\dots,c_n)$ be an $n$-tuple of complex parameters. Recall that $R(u)$ is an arbitrary solution of the Yang--Baxter equation (\ref{YB-add}) on $V\otimes V$. We define the following endomorphism of $V^{\otimes n}$:
\begin{equation}\label{F}
F(\bc):=\prod_{1\leq i<j\leq n}^{\rightarrow}R_{i,j}(c_i-c_j)\ .
\end{equation}
It will be useful to consider $c_1,\dots,c_n$ as complex variables and to see $F$ as a function of $\bc$ with values in $\End(V^{\otimes n})$. We note that the function $F$ can have singularities and thus the operator $F(\bc)$ can be undefined for some values of $\bc$, depending on the function $R$ (see, for example, (\ref{Yang})).

We will need a preliminary Lemma concerning the operator $F(\bc)$.
\begin{lemma}\label{lem-F}
We have
\begin{equation}\label{F2}
F(\bc)=\prod_{1\leq i<j\leq n}^{\leftarrow}R_{i,j}(c_i-c_j)\ .
\end{equation}
\end{lemma}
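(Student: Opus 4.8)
The plan is to prove (\ref{F2}) by induction on $n$, reducing it at each step to the ``evident'' move of sliding the copy of $V$ labelled $1$ from the far left to the far right across all the crossings among the copies $2,\dots,n$ (cf.\ the graphical calculus of Remark \ref{rem-graph}); this sliding is then made rigorous by a secondary induction using only the Yang--Baxter equation (\ref{YB-add}) and commutations of operators with disjoint indices. As in the proof of Theorem \ref{thm-fus-R} we first observe that every spectral parameter occurring is a difference $c_a-c_b$ and that $(c_i-c_j)+(c_j-c_k)=c_i-c_k$, so that (\ref{YB-add}) is always available in the form $R_{i,j}(c_i-c_j)R_{i,k}(c_i-c_k)R_{j,k}(c_j-c_k)=R_{j,k}(c_j-c_k)R_{i,k}(c_i-c_k)R_{i,j}(c_i-c_j)$ for any triple of distinct indices. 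We never have to keep track of spectral parameters and drop them from the notation.

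Unwinding (\ref{F}) gives $\prod_{1\leq i<j\leq n}^{\rightarrow}R_{i,j}=\bigl(R_{1,2}R_{1,3}\cdots R_{1,n}\bigr)G$ with $G:=\prod_{2\leq i<j\leq n}^{\rightarrow}R_{i,j}$ (acting on the copies $2,\dots,n$), and similarly $\prod_{1\leq i<j\leq n}^{\leftarrow}R_{i,j}=G'\bigl(R_{1,n}R_{1,n-1}\cdots R_{1,2}\bigr)$ with $G':=\prod_{2\leq i<j\leq n}^{\leftarrow}R_{i,j}$. By the induction hypothesis applied to the $n-1$ copies $2,\dots,n$ one has $G=G'$, so (\ref{F2}) follows from the sliding identity $\bigl(R_{1,2}R_{1,3}\cdots R_{1,n}\bigr)G=G\bigl(R_{1,n}R_{1,n-1}\cdots R_{1,2}\bigr)$. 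I would prove the following slightly more general statement by induction on $|S|$: for $S\subseteq\{2,\dots,n\}$ with least element $k$ and $F_S:=\prod_{i<j,\ i,j\in S}^{\rightarrow}R_{i,j}$,
\[
\Bigl(\prod_{j\in S}^{\rightarrow}R_{1,j}\Bigr)F_S=F_S\Bigl(\prod_{j\in S}^{\leftarrow}R_{1,j}\Bigr)\,.
\]
The cases $|S|\leq1$ are trivial. For the inductive step put $S'=S\setminus\{k\}$, so $F_S=\bigl(\prod_{j\in S'}^{\rightarrow}R_{k,j}\bigr)F_{S'}$ and $\prod_{j\in S}^{\rightarrow}R_{1,j}=R_{1,k}\prod_{j\in S'}^{\rightarrow}R_{1,j}$; apply the bubble-sort move below to rewrite $R_{1,k}\bigl(\prod_{j\in S'}^{\rightarrow}R_{1,j}\bigr)\bigl(\prod_{j\in S'}^{\rightarrow}R_{k,j}\bigr)$ as $\bigl(\prod_{j\in S'}^{\rightarrow}R_{k,j}\bigr)\bigl(\prod_{j\in S'}^{\rightarrow}R_{1,j}\bigr)R_{1,k}$, commute $R_{1,k}$ past $F_{S'}$, and use the induction hypothesis for $S'$ on $\bigl(\prod_{j\in S'}^{\rightarrow}R_{1,j}\bigr)F_{S'}$; recombining $\bigl(\prod_{j\in S'}^{\rightarrow}R_{k,j}\bigr)F_{S'}=F_S$ and $\bigl(\prod_{j\in S'}^{\leftarrow}R_{1,j}\bigr)R_{1,k}=\prod_{j\in S}^{\leftarrow}R_{1,j}$ gives the claim.

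The bubble-sort move is the identity $\bigl(R_{1,2}R_{1,3}\cdots R_{1,m}\bigr)\bigl(R_{2,3}R_{2,4}\cdots R_{2,m}\bigr)=\bigl(R_{2,3}R_{2,4}\cdots R_{2,m}\bigr)\bigl(R_{1,3}R_{1,4}\cdots R_{1,m}\bigr)R_{1,2}$ (used above after relabelling $2\mapsto k$ and $3,\dots,m\mapsto$ the elements of $S'$ in increasing order). It is proved by a short induction on $m$, the base $m=3$ being one instance of (\ref{YB-add}); for $m\rightarrow m+1$ one pushes $R_{1,m+1}$ to the right past $R_{2,3},\dots,R_{2,m}$ (disjoint indices), applies the hypothesis for $m$, replaces $R_{1,2}R_{1,m+1}R_{2,m+1}$ by $R_{2,m+1}R_{1,m+1}R_{1,2}$ via (\ref{YB-add}), and moves $R_{2,m+1}$ back to the left past $R_{1,3},\dots,R_{1,m}$.

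The only genuine obstacle is the bookkeeping: one must track precisely which factors commute and where each use of (\ref{YB-add}) is made, so that the chains of equalities close up; once the three inductions (on $n$, on $|S|$, on $m$) are organised as above, every single step is either (\ref{YB-add}) or a disjoint-index commutation. Conceptually, $\prod^{\rightarrow}$ and $\prod^{\leftarrow}$ are the operators attached to two reduced expressions of the longest element of $S_n$ with a coherent assignment of rapidities $c_1,\dots,c_n$, and (\ref{F2}) expresses the standard independence of such an operator on the chosen reduced expression; the argument above is a self-contained proof of this in the needed generality.
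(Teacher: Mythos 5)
Your proof is correct, and every step checks out: the decompositions $\prod^{\rightarrow}_{1\leq i<j\leq n}R_{i,j}=(R_{1,2}\cdots R_{1,n})G$ and $\prod^{\leftarrow}_{1\leq i<j\leq n}R_{i,j}=G'(R_{1,n}\cdots R_{1,2})$ hold literally in the lexicographic and reverse-lexicographic orderings (no commutations needed), the spectral parameters are coherent throughout so that (\ref{YB-add}) is always applicable, and the three nested inductions close up as claimed. The overall strategy is the same as the paper's --- induction on $n$, with the inductive step carried by repeated applications of (\ref{YB-add}) and disjoint-index commutations --- but the decomposition is different: the paper peels off the factors $R_{1,n}R_{2,n}\cdots R_{n-1,n}$ involving the \emph{last} index (which requires commutations to extract them from the lexicographic product), applies the induction hypothesis three times, and performs the key rearrangement $R_{1,n-1}\cdots R_{1,2}\cdot R_{1,n}R_{2,n}\cdots R_{n-1,n}=R_{2,n}\cdots R_{n-1,n}\cdot R_{1,n}R_{1,n-1}\cdots R_{1,2}$ in-line, with the iteration left implicit (``repeating a similar calculation the necessary number of times''). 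You instead peel off the factors $R_{1,j}$ involving the \emph{first} index, use the induction hypothesis only once (to identify $G$ with $G'$), and isolate the remaining work into two explicitly stated auxiliary identities (the sliding identity over subsets $S$ and the bubble-sort move), each with its own clean induction. What your organisation buys is that every single equality is visibly either an instance of (\ref{YB-add}) or a disjoint-index commutation, at the cost of introducing the slightly heavier subset bookkeeping; the paper's version is shorter on the page but leaves more of the iteration to the reader. Your closing remark --- that both sides are the operators attached to two reduced expressions of the longest element of $S_n$ and that (\ref{F2}) is an instance of independence of the choice of reduced expression --- is exactly the right conceptual gloss, and is consistent with the alternative formula of Lemma \ref{lem-F-alt}.
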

\begin{proof}
As in the proof of the Theorem \ref{thm-fus-R}, we do not need to pay attention to the spectral parameters since $(c_i-c_j)+(c_j-c_k)=c_i-c_k$. Thus, until the end of the proof, we use the abbreviated notation $R_{i,j}:=R_{i,j}(c_i-c_j)$ for any $1\leq i<j\leq n$. 

We prove the formula (\ref{F2}) by induction on $n$. For $n=2$ there is nothing to prove, so we let $n>2$ and write
$$F(\bc)=\prod_{1\leq i<j\leq n}^{\rightarrow}R_{i,j}=\Bigl(\prod_{1\leq i<j\leq n-1}^{\rightarrow}R_{i,j}\Bigr)\cdot R_{1,n}R_{2,n}\dots R_{n-1,n}\ ,$$
where we have moved to the right the operators $R_{i,n}$ using commutation relations. Now we use the induction hypothesis and we have
$$\begin{array}{rl}F(\bc) & = \displaystyle\Bigl(\prod_{1\leq i<j\leq n-1}^{\leftarrow}R_{i,j}\Bigr)\cdot R_{1,n}R_{2,n}\dots R_{n-1,n}\\[1.4em]
 & =  \displaystyle\Bigl(\prod_{2\leq i<j\leq n-1}^{\leftarrow}R_{i,j}\Bigr)\cdot R_{1,n-1}\dots R_{1,2}\cdot R_{1,n}R_{2,n}\dots R_{n-1,n}\ .
\end{array}$$
Note that 
$$\begin{array}{rl}\!R_{1,n-1}\dots R_{1,3}\cdot R_{1,2}R_{1,n}R_{2,n}\cdot R_{3,n}\dots R_{n-1,n}\!\!\! & = R_{1,n-1}\dots R_{13}\cdot R_{2,n}R_{1,n}R_{1,2}\cdot R_{3,n}\dots R_{n-1,n}\\[0.2em]
 & = R_{2,n}\cdot R_{1,n-1}\dots R_{1,3}R_{1,n}R_{3,n}\dots R_{n-1,n}\cdot R_{1,2}\,,
 \end{array}$$
and thus, repeating a similar calculation the necessary number of times, we arrive at
$$F(\bc) =  \displaystyle\Bigl(\prod_{2\leq i<j\leq n-1}^{\leftarrow}R_{i,j}\Bigr)R_{2,n}\dots R_{n-1,n}\cdot R_{1,n}R_{1,n-1}\dots R_{1,2}\ .$$
We use again the induction hypothesis and commutation relations to conclude as follows:
$$\begin{array}{rl}F(\bc) & =  \displaystyle\Bigl(\prod_{2\leq i<j\leq n-1}^{\rightarrow}R_{i,j}\Bigr)R_{2,n}\dots R_{n-1,n}\cdot R_{1,n}R_{1,n-1}\dots R_{1,2}\\[1.3em]
& =  \displaystyle\Bigl(\prod_{2\leq i<j\leq n}^{\rightarrow}R_{i,j}\Bigr)R_{1,n}R_{1,n-1}\dots R_{1,2}\\[1.3em]
& =  \displaystyle\Bigl(\prod_{2\leq i<j\leq n}^{\leftarrow}R_{i,j}\Bigr)R_{1,n}R_{1,n-1}\dots R_{1,2}= \displaystyle\prod_{1\leq i<j\leq n}^{\leftarrow}R_{i,j}\ .
\end{array}$$
\end{proof}

For an $n$-tuple of complex numbers $\bc$ such that $F(\bc)$ is a well-defined endomorphism, let $W_{\bc}\subset V^{\otimes n}$ denote the image of $F(\bc)$:
\begin{equation}\label{W}
W_{\bc}:=\textrm{Im}(F(\bc))\subset V^{\otimes n}\ .
\end{equation}
Let $n,n'\geq 2$ be positive integers and $\bc:=(c_1,\dots,c_n)$, respectively $\un{\bc}:=(c_{\un{1}},\dots,c_{\un{n'}})$, be an $n$-tuple, respectively an $n'$-tuple, of complex numbers such that $W_{\bc}$ and $W_{\un{\bc}}$ are defined.
\begin{theorem}\label{thm-inv-sub}
The fused operator $R_{\bc,\un{\bc}}(u)$ preserves the subspace $W_{\bc}\otimes W_{\un{\bc}}$ of $V^{\otimes n}\otimes V^{\otimes n'}$.
\end{theorem}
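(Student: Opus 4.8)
The statement to prove is that $R_{\bc,\un{\bc}}(u)$ preserves $W_{\bc}\otimes W_{\un{\bc}} = \mathrm{Im}(F(\bc))\otimes\mathrm{Im}(F(\un{\bc}))$. Since $W_{\bc}\otimes W_{\un{\bc}} = \mathrm{Im}(F(\bc)\otimes F(\un{\bc}))$ inside $V^{\otimes n}\otimes V^{\otimes n'}$, it suffices to show that $R_{\bc,\un{\bc}}(u)$ commutes (or at least intertwines) with the operator $F(\bc)\otimes F(\un{\bc})$ up to something landing in the image; concretely, the cleanest route is to establish the commutation relation
\begin{equation*}
R_{\bc,\un{\bc}}(u)\cdot\bigl(F(\bc)\otimes F(\un{\bc})\bigr) = \bigl(F(\bc)\otimes F(\un{\bc})\bigr)\cdot G
\end{equation*}
for some operator $G$ on $V^{\otimes n}\otimes V^{\otimes n'}$ (which will in fact be another fused-type operator). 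Once this is in hand, any vector of the form $(F(\bc)\otimes F(\un{\bc}))w$ is sent by $R_{\bc,\un{\bc}}(u)$ to $(F(\bc)\otimes F(\un{\bc}))(Gw)$, which lies in $W_{\bc}\otimes W_{\un{\bc}}$, and we are done.

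The heart of the matter is therefore a "pushing through" computation: move the factors $R_{i,j}(c_i-c_j)$ (for $1\le i<j\le n$) making up $F(\bc)$, and the factors $R_{\un{i},\un{j}}(c_{\un{i}}-c_{\un{j}})$ making up $F(\un{\bc})$, past the product of $R_{a,\un{b}}(u+c_a-c_{\un{b}})$'s defining $R_{\bc,\un{\bc}}(u)$. This is exactly the kind of calculation done in the proof of Theorem~\ref{thm-fus-R}, and indeed it is cleanest to \emph{deduce} the result from that theorem rather than redo it. The trick I would use: introduce an auxiliary tensor factor. Think of $F(\bc)$ itself, via its defining formula~(\ref{F}), as (essentially) a degenerate fused operator $R_{\bc',\bc''}$-type object — more precisely, $R_{\bc,\un{\bc}}(u)$, $F(\bc)$ and $F(\un{\bc})$ are all built from the same elementary $R$'s with shifted spectral parameters, so that the Yang--Baxter equation~(\ref{YB-add}) can be applied at every crossing because all the relevant spectral-parameter additivity identities $(c_i-c_j)+(\ldots)=(\ldots)$ hold. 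I would first prove the "mixed" relation
\begin{equation*}
R_{\bc,\un{\bc}}(u)\, F(\bc) = F(\bc)\, \widetilde{R}_{\bc,\un{\bc}}(u)\,,
\end{equation*}
where $\widetilde{R}_{\bc,\un{\bc}}(u)$ is the analogous product but with the copies $1,\dots,n$ listed in the opposite order (this mirrors the passage between~(\ref{def-fus-R}) and~(\ref{def-fus-R2}), and between~(\ref{F}) and~(\ref{F2}) via Lemma~\ref{lem-F}); this is an induction on $n$ whose inductive step is a string of applications of~(\ref{YB-add}) together with the commutation relations from paragraph~\textbf{3} of Section~\ref{sec-prel}, exactly in the style of the proof of Lemma~\ref{lem-F} and Theorem~\ref{thm-fus-R}. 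Then, symmetrically, I would push $F(\un{\bc})$ from the right through $R_{\bc,\un{\bc}}(u)$, obtaining $R_{\bc,\un{\bc}}(u)\,F(\un{\bc}) = F(\un{\bc})\,\widetilde{R}'_{\bc,\un{\bc}}(u)$ for the variant with $\un{1},\dots,\un{n'}$ reordered.

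Combining the two, since $F(\bc)$ acts only on the first $n$ copies and $F(\un{\bc})$ only on the last $n'$ copies (so they commute), one gets $R_{\bc,\un{\bc}}(u)(F(\bc)\otimes F(\un{\bc})) = (F(\bc)\otimes F(\un{\bc}))\cdot G$ with $G$ an explicit reordered fused operator, which is all that is needed. The main obstacle, as always with these arguments, is purely combinatorial bookkeeping: organizing the order in which one applies the Yang--Baxter moves so that every crossing that needs to be resolved is the "innermost" one, and checking that at each such crossing the three spectral parameters satisfy the additivity condition required by~(\ref{YB-add}) — but this is guaranteed uniformly because every spectral parameter in sight is of the form $u+(\text{difference of }c\text{'s})$, $v+(\text{difference})$ or a pure difference, and these add correctly by the same identity used at the start of the proof of Theorem~\ref{thm-fus-R}. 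So there is no genuinely hard point beyond carrying out the induction carefully; the conceptual content is just "$F(\bc)$ and $F(\un{\bc})$ are themselves assembled from the fundamental $R$, hence can be slid through $R_{\bc,\un{\bc}}(u)$ by Yang--Baxter."
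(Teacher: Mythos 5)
Your proposal is correct and takes essentially the same route as the paper: your ``mixed relation'' is exactly the content of the paper's Lemma \ref{lem-inv-sub} (stated there for a single auxiliary tensor factor and then applied to each block of $R_{\bc,\un{\bc}}(u)$), and it is obtained not by a fresh induction but by applying the reversal formula of Lemma \ref{lem-F} to the extended tuple $(c_1,\dots,c_n,v)$ and re-splitting the product. The paper then concludes by observing that $R_{\bc,\un{\bc}}(u)$ preserves $W_{\bc}\otimes V^{\otimes n'}$ and $V^{\otimes n}\otimes W_{\un{\bc}}$ separately (using the two orderings (\ref{def-fus-R}) and (\ref{def-fus-R2})) and hence their intersection, which is equivalent to, and slightly cleaner than, composing your two intertwining relations.
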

\begin{proof}
Let $v$ be a complex variable. We first prove the following Lemma:
\begin{lemma}\label{lem-inv-sub}
\begin{enumerate}
\item We consider the space $V\otimes V^{\otimes n'}$ with the copies of $V$ labelled by $\un{0},\un{1},\dots,\un{n'}$ (from left to right). The endomorphism $R_{\un{0},\un{1}}(v-c_{\un{1}})R_{\un{0},\un{2}}(v-c_{\un{2}})\dots R_{\un{0},\un{n'}}(v-c_{\un{n'}})$ preserves the subspace $V\otimes W_{\un{\bc}}\subset V\otimes V^{\otimes n'}$.
\item We consider the space $V^{\otimes n}\otimes V$ with the copies of $V$ labelled by $1,\dots,n,n+1$ (from left to right). The endomorphism $R_{n,n+1}(c_n-v)\dots R_{2,n+1}(c_2-v)R_{1,n+1}(c_1-v)$ preserves the subspace $W_{\bc}\otimes V\subset V^{\otimes n}\otimes V$.
\end{enumerate}
\end{lemma}
\begin{proof}[Proof of the lemma] \begin{enumerate}
\item Set $c_{\un{0}}:=v$. The image of the endomorphism $\textrm{Id}\otimes F(\un{\bc})$ of $V\otimes V^{\otimes n'}$ is $V\otimes W_{\un{\bc}}$. We have, using Lemma \ref{lem-F},
$$\begin{array}{rl} & R_{\un{0},\un{1}}(v-c_{\un{1}})R_{\un{0},\un{2}}(v-c_{\un{2}})\dots R_{\un{0},\un{n'}}(v-c_{\un{n'}})\cdot \bigl(\textrm{Id}\otimes F(\un{\bc})\bigr)\\[0.2em]
= & \displaystyle\prod_{0\leq i<j\leq n'}^{\rightarrow}R_{\un{i},\un{j}}(c_{\un{i}}-c_{\un{j}})
= 
\displaystyle\prod_{0\leq i<j\leq n'}^{\leftarrow}R_{\un{i},\un{j}}(c_{\un{i}}-c_{\un{j}})\\[1.2em]
= & \displaystyle\Bigl(\prod_{1\leq i<j\leq n'}^{\leftarrow}R_{\un{i},\un{j}}(c_{\un{i}}-c_{\un{j}})\Bigr)\cdot R_{\un{0},\un{n'}}(v-c_{\un{n'}})\dots R_{\un{0},\un{2}}(v-c_{\un{2}})R_{\un{0},\un{1}}(v-c_{\un{1}})\\[1.4em]
= & \bigl(\textrm{Id}\otimes F(\un{\bc})\bigr)\cdot R_{\un{0},\un{n'}}(v-c_{\un{n'}})\dots R_{\un{0},\un{2}}(v-c_{\un{2}})R_{\un{0},\un{1}}(v-c_{\un{1}})\ .
\end{array} $$
Thus the operator $R_{\un{0},\un{1}}(v-c_{\un{1}})R_{\un{0},\un{2}}(v-c_{\un{2}})\dots R_{\un{0},\un{n'}}(v-c_{\un{n'}})$ restricted on $V\otimes W_{\un{\bc}}$ has its image contained in $V\otimes W_{\un{\bc}}$.
\item Now set $c_{n+1}:=v$. The image of the endomorphism $F(\bc)\otimes\textrm{Id}$ of $V^{\otimes n}\otimes V$ is $W_{\bc}\otimes V$. We have
$$\begin{array}{rl} & R_{n,n+1}(c_n-v)\dots R_{2,n+1}(c_2-v)R_{1,n+1}(c_1-v)\cdot \bigl( F(\bc)\otimes\textrm{Id}\bigr)\\[0.2em]
= & R_{n,n+1}(c_n-v)\dots R_{2,n+1}(c_2-v)R_{1,n+1}(c_1-v)\cdot\displaystyle\Bigl(\prod_{1\leq i<j\leq n}^{\leftarrow}R_{i,j}(c_{i}-c_{j})\Bigr)\\[-0.5em]
= & \displaystyle\Bigl(\prod_{1\leq i<j\leq n+1}^{\leftarrow}R_{i,j}(c_{i}-c_{j})\Bigr)\\[1.2em]
= & \displaystyle\Bigl(\prod_{1\leq i<j\leq n+1}^{\rightarrow}R_{i,j}(c_{i}-c_{j})\Bigr)\\[1.2em]
= &\displaystyle\Bigl(\prod_{1\leq i<j\leq n}^{\rightarrow}R_{i,j}(c_{i}-c_{j})\Bigr)\cdot R_{1,n+1}(c_1-v)R_{2,n+1}(c_2-v)\dots R_{n,n+1}(c_n-v) \\[1.4em]
= & \bigl( F(\bc)\otimes\textrm{Id}\bigr)\cdot R_{1,n+1}(c_1-v)R_{2,n+1}(c_2-v)\dots R_{n,n+1}(c_n-v)\ .
\end{array} $$
We used Lemma \ref{lem-F} in the first and third equalities, and commutation relations in the second and fourth equalities.
We conclude that the image of the restriction on $W_{\bc}\otimes V$ of the operator $R_{n,n+1}(c_n-v)\dots R_{2,n+1}(c_2-v)R_{1,n+1}(c_1-v)$ is contained in $W_{\bc}\otimes V$.
\end{enumerate}
\vspace{-0.55cm}
\end{proof}
We return to the proof of Theorem \ref{thm-inv-sub}. Recall that the operator $R_{\bc,\un{\bc}}(u)$ is given by Formula (\ref{def-fus-R}), namely
$$ R_{\bc,\un{\bc}}(u):=\prod_{i=1,\dots,n'}^{\rightarrow}R_{n,\un{i}}(u+c_n-c_{\un{i}})\dots\dots R_{2,\un{i}}(u+c_2-c_{\un{i}})R_{1,\un{i}}(u+c_1-c_{\un{i}})\ .$$
For any $i\in\{1,\dots,n'\}$, applying Lemma \ref{lem-inv-sub}(ii) with $v=c_{\un{i}}-u$ and the label $n+1$ replaced by $\un{i}$, we obtain that the operator $R_{n,\un{i}}(u+c_n-c_{\un{i}})\dots\dots R_{2,\un{i}}(u+c_2-c_{\un{i}})R_{1,\un{i}}(u+c_1-c_{\un{i}})$ preserves the subspace $W_{\bc}\otimes V^{\otimes n'}$, and so in turn that the operator $R_{\bc,\un{\bc}}(u)$ preserves $W_{\bc}\otimes V^{\otimes n'}$.

Now recall the alternative formula (\ref{def-fus-R2}) for the operator $R_{\bc,\un{\bc}}(u)$:
$$ R_{\bc,\un{\bc}}(u)=\prod_{i=1,\dots,n}^{\leftarrow}R_{i,\un{1}}(u+c_i-c_{\un{1}})R_{i,\un{2}}(u+c_i-c_{\un{2}})\dots\dots R_{i,\un{n'}}(u+c_i-c_{\un{n'}})\ .$$
For any $i\in\{1,\dots,n\}$, applying Lemma \ref{lem-inv-sub}(i) with $v=u+c_i$ and the label $\un{0}$ replaced by $i$, we obtain that the operator $R_{i,\un{1}}(u+c_i-c_{\un{1}})R_{i,\un{2}}(u+c_i-c_{\un{2}})\dots\dots R_{i,\un{n'}}(u+c_i-c_{\un{n'}})$ preserves the subspace $V^{\otimes n}\otimes W_{\un{\bc}}$, and so in turn that the operator $R_{\bc,\un{\bc}}(u)$ preserves  $V^{\otimes n}\otimes W_{\un{\bc}}$.

We conclude that the operator $R_{\bc,\un{\bc}}(u)$ preserves the subspace $W_{\bc}\otimes W_{\un{\bc}}$.
\end{proof}

\begin{remark}\label{rem-graph-inv}
For $n=n'=2$, part of Theorem \ref{thm-inv-sub} is illustrated as follows, using the pictorial version of the Yang--Baxter equation explained in Remark \ref{rem-graph},
\begin{center}
\setlength{\unitlength}{0.006cm}
\begin{picture}(1500,530)(0,-500)
\put(0,20){\footnotesize{1}}
\put(150,20){\footnotesize{2}}
\put(500,20){\footnotesize{$\un{1}$}}
\put(650,20){\footnotesize{$\un{2}$}}
\put(0,0){ \line(1, -1){425}}
\put(425,-425){ \line(1, 0){150}}
\put(150,0){ \line(1, -1){350}}
\put(500,-350){ \line(0, -1){150}}
\put(500,0){ \line(-1, -1){500}}
\put(650,0){ \line(-1, -1){500}}
\put(750,-250){\large{$=$}}
\put(900,-75){\footnotesize{1}}
\put(1000,20){\footnotesize{2}}
\put(1350,20){\footnotesize{$\un{1}$}}
\put(1500,20){\footnotesize{$\un{2}$}}
\put(1000,-150){ \line(1, -1){350}}
\put(1000,0){ \line(0, -1){150}}
\put(1075,-75){ \line(1, -1){425}}
\put(1075,-75){ \line(-1, 0){150}}
\put(1350,0){ \line(-1, -1){500}}
\put(1500,0){ \line(-1, -1){500}}
\end{picture}
\end{center}
where we omit the spectral parameters. This graphical equality corresponds to the fact that the operator $R_{\bc,\un{\bc}}(u)$ preserves the subspace $W_{\bc}\otimes V^{\otimes n'}$ (a similar picture can be drawn to illustrate the other half of Theorem \ref{thm-inv-sub}).
\hfill$\triangle$
\end{remark}

\begin{remark}
The content of Theorem \ref{thm-inv-sub} is empty if the endomorphisms $F({\bc})$ and $F({\un{\bc}})$ are both invertible (as in this situation $W_{\bc}\otimes W_{\un{\bc}}=V^{\otimes n}\otimes V^{\otimes n'}$). This is the goal of Sections \ref{sec-Sn}--\ref{sec-Hn} to explain that, for standard examples of $R(u)$, there are some values of $\bc$ such that the images of the operators $F(\bc)$ are interesting proper subspaces of $V^{\otimes n}$.\hfill$\triangle$
\end{remark}

\subsection{Alternative formula for the operator $F(\bc)$}\label{subsec-alt}

We will give an alternative formula for $F(\bc)$ which will be useful later. To do this, we define a function $\hR$ with values in $\End(V\otimes V)$ by:
\begin{equation}\label{hR}
\hR(u):=R(u)\Pe\ ,
\end{equation}
where we recall that $\Pe$ is the permutation operator on $V\otimes V$.

For any $\pi$ in the symmetric group $S_n$ on $n$ letters, we define $\Pe_{\pi}\in\End(V^{\otimes n})$ by:
\begin{equation}\label{P-pi}\Pe_{\pi}(x_1\otimes x_2\otimes\dots\otimes x_n):=x_{\pi^{-1}(1)}\otimes x_{\pi^{-1}(2)}\otimes\dots\otimes x_{\pi^{-1}(n)}\ \ \quad\text{for $x_1,\dots,x_n\in V$.}\end{equation}
Note that if $\pi$ is written as a a product of transposition, say $\pi=(i_1,j_1)(i_2,j_2)\dots (i_k,j_k)\in S_n$, then we have with our notation $\Pe_{\pi}=\Pe_{i_1,j_1}\Pe_{i_2,j_2}\dots\Pe_{i_k,j_k}$.

Let $w_n$ be the longest element of the symmetric group $S_n$. We recall the following property of the element $w_n$:
\begin{equation}\label{wn}
w_n=(1,2)(2,3)\dots (n\!-\!1,n)\!\cdot\!w_{n-1}\ \quad\text{and}\ \quad \ w_n(i)=n-i+1\ \ \  \text{for all $i=1,\dots,n$,}
\end{equation}
where $w_{n-1}$ is the longest element of $S_{n-1}$, seen as an element of $S_n$ acting only on the letters $1,\dots,n-1$. Note that $w_n$ is an involution.

\begin{lemma}\label{lem-F-alt}
We have
\begin{equation}\label{F4}
F(\bc)=\Bigl(\prod_{i=1,\dots,n-1}^{\rightarrow}\hR_{i,i+1}(c_1-c_{i+1})\dots\hR_{2,3}(c_{i-1}-c_{i+1})\hR_{1,2}(c_i-c_{i+1})\Bigr)\cdot \Pe_{w_n}\ .
\end{equation}
\end{lemma}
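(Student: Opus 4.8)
The strategy is to start from the formula $F(\bc)=\prod_{1\leq i<j\leq n}^{\rightarrow}R_{i,j}(c_i-c_j)$ of (\ref{F}) and convert every factor $R_{i,j}$ into a factor $\hR_{i,j}\Pe_{i,j}$ using (\ref{hR}), then push all the permutation operators $\Pe_{i,j}$ to the right. When a $\Pe_{i,j}$ is moved past an $\hR_{k,l}(u)$ to its right, the commutation rule $\Pe_{i,j}R_{k,l}(u)=R_{s_{i,j}(k),s_{i,j}(l)}(u)\Pe_{i,j}$ from paragraph \textbf{3} of Section \ref{sec-prel} (which also holds for $\hR$, since $\hR(u)=R(u)\Pe$ and $\Pe_{i,j}\Pe_{k,l}=\Pe_{s_{i,j}(k),s_{i,j}(l)}\Pe_{i,j}$) changes the indices of $\hR$ by the transposition $s_{i,j}$ but leaves the spectral parameter untouched. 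So I expect to obtain $F(\bc)=\bigl(\text{some product of }\hR\text{'s with shuffled indices and the original spectral parameters}\bigr)\cdot\Pe_{w_n}$, since the product of all the transpositions $s_{i,j}$ for $1\le i<j\le n$, taken in the order they appear in $\prod^{\rightarrow}$, must be the longest element $w_n$ (this is the standard reduced word for $w_n$ coming from the lexicographic ordering of positive roots of type $A_{n-1}$, matching (\ref{wn})).

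\textbf{Key steps, in order.} First I would record the ``conjugation'' rule: for $\pi\in S_n$ and any operator-valued function, $\Pe_{\pi}\hR_{k,l}(u)=\hR_{\pi(k),\pi(l)}(u)\Pe_{\pi}$ and $\Pe_{\pi}\Pe_{k,l}=\Pe_{\pi s_{k,l}\pi^{-1}}\Pe_{\pi}=\Pe_{\pi(k),\pi(l)}\Pe_{\pi}$. Second, I would run an induction on $n$ exactly parallel to the proof of Lemma \ref{lem-F}. For the inductive step, split $F(\bc)=\bigl(\prod_{1\leq i<j\leq n-1}^{\rightarrow}R_{i,j}\bigr)\cdot R_{1,n}R_{2,n}\dots R_{n-1,n}$ (moving the factors with second index $n$ to the right by commutation, as in Lemma \ref{lem-F}), apply the induction hypothesis to the first bracket to get $\bigl(\prod_{i=1,\dots,n-2}^{\rightarrow}\hR_{i,i+1}(c_1-c_{i+1})\dots\hR_{1,2}(c_i-c_{i+1})\bigr)\cdot\Pe_{w_{n-1}}\cdot R_{1,n}\cdots R_{n-1,n}$, then rewrite each $R_{j,n}=\hR_{j,n}\Pe_{j,n}$ and push $\Pe_{w_{n-1}}$ and the accumulating $\Pe_{j,n}$'s to the right. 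Using $\Pe_{w_{n-1}}\hR_{j,n}(u)=\hR_{w_{n-1}(j),n}(u)\Pe_{w_{n-1}}=\hR_{n-j,n}(u)\Pe_{w_{n-1}}$ and then moving the $\Pe_{j,n}$ through the subsequent $\hR$'s, the string $R_{1,n}R_{2,n}\dots R_{n-1,n}$ turns into $\hR_{n-1,n}(\,\cdot\,)\hR_{n-2,n}(\,\cdot\,)\dots\hR_{1,n}(\,\cdot\,)$ reading appropriately, times the product of permutations $\Pe_{w_{n-1}}\Pe_{1,n}\Pe_{2,n}\cdots\Pe_{n-1,n}$, which by (\ref{wn}) is precisely $\Pe_{w_n}$. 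Third, I would check that the spectral parameters land where (\ref{F4}) claims: the factor that started as $R_{j,n}(c_j-c_n)$ becomes, after conjugation by $\Pe_{w_{n-1}}$ (which sends index $j\mapsto n-j$ but does not touch the argument) and by the trivial index-shuffles among the $\hR_{\cdot,n}$, an $\hR$ with first index running appropriately and with argument $c_j-c_n$; matching this against the $i$-th block $\hR_{i,i+1}(c_1-c_{i+1})\dots\hR_{1,2}(c_i-c_{i+1})$ of (\ref{F4}) for $i=n-1$ is a bookkeeping check. It is cleanest to treat the $\hR$'s as having the spectral parameter $c_a-c_b$ attached rigidly to the pair of \emph{original} labels $(a,b)$ and then re-index; since $(c_a-c_b)+(c_b-c_c)=c_a-c_c$ one never has to worry about products of parameters, exactly as remarked in the proofs of Theorem \ref{thm-fus-R} and Lemma \ref{lem-F}.

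\textbf{Main obstacle.} The genuinely delicate point is the combinatorial identity that the ordered product $\prod_{1\leq i<j\leq n}^{\rightarrow}s_{i,j}$ of transpositions — i.e. $s_{1,2}s_{1,3}\cdots s_{1,n}s_{2,3}\cdots s_{n-1,n}$ — equals $w_n$, together with the compatible statement that pushing the permutations rightward produces exactly the ordering of $\hR$-factors written in (\ref{F4}) with exactly those spectral parameters. I would nail this down by induction using the recursion $w_n=s_{1,2}s_{2,3}\cdots s_{n-1,n}\cdot w_{n-1}$ from (\ref{wn}) (after verifying that $s_{1,2}s_{1,3}\cdots s_{1,n}$, the ``first block'' of the lex order, rearranges to $s_{n-1,n}s_{n-2,n-1}\cdots s_{1,2}$ as permutations, which is a short check), so that the permutation part assembles correctly block by block, and then confirm that the conjugations these permutations induce on the remaining $\hR$'s reproduce precisely the index pattern $\hR_{i,i+1},\dots,\hR_{1,2}$ in each block of (\ref{F4}). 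Everything else is an application of the commutation relations already catalogued in Section \ref{sec-prel}, handled in the same style as Lemma \ref{lem-F}.
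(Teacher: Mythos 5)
Your proposal is correct and follows essentially the same route as the paper's proof: induction on $n$, peeling off the last column $R_{1,n}\cdots R_{n-1,n}$, applying the inductive hypothesis to produce $\Pe_{w_{n-1}}$, rewriting each $R_{j,n}$ as $\hR_{j,n}\Pe_{j,n}$, pushing all permutation operators to the right, and assembling $\Pe_{w_n}$ via the recursion (\ref{wn}). The only (immaterial) difference is the bookkeeping of the order in which the accumulated permutation factors are written before being identified with $\Pe_{w_n}$.
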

\begin{proof}
We prove Formula (\ref{F4}) by induction on $n$. For $n=2$ there is nothing to prove. Let $n>2$ and write
$$ F(\bc)\!=\!\!\!\!\!\prod_{1\leq i<j\leq n}^{\rightarrow}\!\!\!\!\!R_{i,j}(c_i-c_j)=\Bigl(\!\prod_{1\leq i<j\leq n-1}^{\rightarrow}\!\!\!\!\!\!R_{i,j}(c_i-c_j)\Bigr)\cdot R_{1,n}(c_1-c_n)R_{2,n}(c_2-c_n)\dots R_{n-1,n}(c_{n-1}-c_n)\,.$$
The induction hypothesis allows to replace $\displaystyle\!\prod_{1\leq i<j\leq n-1}^{\rightarrow}\!\!\!R_{i,j}(c_i-c_j)$ by
$$
\Bigl(\prod_{i=1,\dots,n-2}^{\rightarrow}\hR_{i,i+1}(c_1-c_{i+1})\dots\hR_{2,3}(c_{i-1}-c_{i+1})\hR_{1,2}(c_i-c_{i+1})\Bigr)\cdot \Pe_{w_{n-1}}\ .
$$
So it remains to prove that
\begin{equation}\label{eq-lem1}
\begin{array}{l}\Pe_{w_{n-1}}R_{1,n}(c_1-c_n)R_{2,n}(c_2-c_n)\dots R_{n-1,n}(c_{n-1}-c_n)\\[0.2em]
\hspace{4.5cm}=\hR_{n-1,n}(c_1-c_{n})\dots\hR_{2,3}(c_{n-2}-c_{n})\hR_{1,2}(c_{n-1}-c_{n})\Pe_{w_n}\,.
\end{array}
\end{equation}
Using (\ref{wn}), the left hand side of (\ref{eq-lem1}) is equal to
$$\begin{array}{rl}
& R_{n-1,n}(c_1-c_n)R_{n-2,n}(c_2-c_n)\dots\dots R_{1,n}(c_{n-1}-c_n)\cdot\Pe_{w_{n-1}}\\[0.2em]
= & \hR_{n-1,n}(c_1-c_n)\Pe_{n-1,n}\dots\dots \hR_{2,n}(c_{n-2}-c_n)\Pe_{2,n}\,\hR_{1,n}(c_{n-1}-c_n)\Pe_{1,n}\cdot\Pe_{w_{n-1}}\ .
\end{array}$$
Moving all the permutation operators to the right, we obtain that the left hand side of (\ref{eq-lem1}) is equal to
$$\hR_{n-1,n}(c_1-c_{n})\dots\hR_{2,3}(c_{n-2}-c_{n})\hR_{1,2}(c_{n-1}-c_{n})\Pe_{n-1,n}\Pe_{n-2,n}\dots\Pe_{1,n}\cdot\Pe_{w_{n-1}}\ .$$
It is easy to see that $\Pe_{n-1,n}\Pe_{n-2,n}\dots\Pe_{1,n}=\Pe_{1,2}\Pe_{2,3}\dots\Pe_{n-1,n}$ and thus, with (\ref{wn}), this concludes the proof of (\ref{eq-lem1}) and in turn of the lemma.
\end{proof}

We define the following function of $\bc$ with values in $\End(V^{\otimes n})$:
\begin{equation}\label{hF}
\hF(\bc):=\prod_{i=1,\dots,n-1}^{\rightarrow}\hR_{i,i+1}(c_1-c_{i+1})\dots\hR_{2,3}(c_{i-1}-c_{i+1})\hR_{1,2}(c_i-c_{i+1})\ .
\end{equation}
Let $\bc$ be such that $F(\bc)$ is a well-defined endomorphism, so that the space $W_{\bc}$ is defined. Lemma \ref{lem-F-alt} implies that $\hF(\bc):=F(\bc)\Pe_{w_n}$ and thus has the following corollary. 
\begin{corollary}\label{cor-hF}
The subspace $W_{\bc}\subset V^{\otimes n}$ coincides with the image of the endomorphism $\hF(\bc)$.
\end{corollary}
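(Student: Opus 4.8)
The plan is straightforward: the statement to prove is Corollary \ref{cor-hF}, which asserts that $W_{\bc}=\textrm{Im}(\hF(\bc))$ under the assumption that $F$ is non-singular at $\bc$. Since Lemma \ref{lem-F-alt} has already been established, I would first observe that it directly yields the operator identity
\begin{equation*}
\hF(\bc)=F(\bc)\Pe_{w_n}\ ,
\end{equation*}
simply by comparing the defining formula (\ref{hF}) for $\hF(\bc)$ with the right-hand side of (\ref{F4}): Formula (\ref{F4}) says $F(\bc)=\hF(\bc)\Pe_{w_n}$, and since $w_n$ is an involution (so $\Pe_{w_n}^2=\textrm{Id}_{V^{\otimes n}}$, as noted after (\ref{wn})), multiplying on the right by $\Pe_{w_n}$ gives $\hF(\bc)=F(\bc)\Pe_{w_n}$.

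The remaining point is purely linear-algebraic: if $A\in\End(V^{\otimes n})$ and $P\in\End(V^{\otimes n})$ is invertible, then $\textrm{Im}(AP)=\textrm{Im}(A)$. Here $A=F(\bc)$ and $P=\Pe_{w_n}$, which is invertible (indeed it is its own inverse). Hence $\textrm{Im}(\hF(\bc))=\textrm{Im}(F(\bc)\Pe_{w_n})=\textrm{Im}(F(\bc))=W_{\bc}$ by the definition (\ref{W}). The hypothesis that $F$ is non-singular at $\bc$ is exactly what is needed so that $F(\bc)$, and therefore $\hF(\bc)$, is a well-defined endomorphism and $W_{\bc}$ makes sense.

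There is essentially no obstacle here — the corollary is a formal consequence of Lemma \ref{lem-F-alt} together with the elementary fact that composing with an invertible operator on the right does not change the image. The only thing worth spelling out carefully is the bookkeeping that $F(\bc)=\hF(\bc)\Pe_{w_n}$ is literally the content of (\ref{F4}) and that $\Pe_{w_n}$ is invertible; once those two observations are in place the proof is a single line. I would therefore present the proof as: "By Lemma \ref{lem-F-alt} we have $F(\bc)=\hF(\bc)\Pe_{w_n}$, and since $w_n$ is an involution, $\Pe_{w_n}$ is invertible with $\Pe_{w_n}^{-1}=\Pe_{w_n}$; composing on the right with an invertible operator does not change the image, so $\textrm{Im}(\hF(\bc))=\textrm{Im}(\hF(\bc)\Pe_{w_n})=\textrm{Im}(F(\bc))=W_{\bc}$."
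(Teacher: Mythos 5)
Your proof is correct and follows exactly the paper's route: the paper likewise deduces $\hF(\bc)=F(\bc)\Pe_{w_n}$ from Lemma \ref{lem-F-alt} and the fact that $w_n$ is an involution, and then concludes that the images coincide because $\Pe_{w_n}$ is invertible. You have merely spelled out the one-line argument the paper leaves implicit.
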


\subsection{Admissible permutation of the parameters $(c_1,\dots,c_n)$}\label{subsec-dep}

For any permutation $\pi$ in the symmetric group $S_n$, we define $\bc^{(\pi)}:=(c_{\pi^{-1}(1)},\dots,c_{\pi^{-1}(n)})$. We denote by $s_k$ the transposition $(k,k+1)\in S_n$, for $k=1,\dots,n-1$; then we have, for $k=1,\dots,n-1$, $\bc^{(s_k)}=(c_1,\dots,c_{k+1},c_k,\dots,c_n)$.

\begin{lemma}\label{lem-dep-R}
\begin{enumerate}
\item For $k=1,\dots,n-1$, we have, on $V^{\otimes n}\otimes V^{\otimes n'}$,
\begin{equation}\label{dep-R}
\Pe_{k,k+1}R_{k+1,k}(c_{k+1}-c_k)\cdot R_{\bc,\un{\bc}^{\phantom{(s_k)}}\!\!\!\!\!\!\!\!\!\!}(u)=R_{\bc^{(s_k)}\!,\un{\bc}}(u)\cdot \Pe_{k,k+1}R_{k+1,k}(c_{k+1}-c_k)\ .
\end{equation}
\item For $k=1,\dots,n-1$, we have, on $V^{\otimes n}$,
\begin{equation}\label{dep-F}
\Pe_{k,k+1}R_{k+1,k}(c_{k+1}-c_k)\cdot F(\bc)=F(\bc^{(s_k)})\cdot \Pe_{k,k+1}R_{k,k+1}(c_{k}-c_{k+1})\ .
\end{equation}
\end{enumerate}
\end{lemma}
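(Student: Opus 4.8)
The plan is to prove both identities by the same device: push the operator $G_k:=\Pe_{k,k+1}R_{k+1,k}(c_{k+1}-c_k)$ from the left through the product of $R$-matrices on the right-hand factor, one elementary factor at a time, and check that its net effect is precisely to interchange the roles of $c_k$ and $c_{k+1}$ among the spectral parameters. The first thing I would record is that, by the relation $R_{k+1,k}(w)=\Pe_{k,k+1}R_{k,k+1}(w)\Pe_{k,k+1}$ of Section~\ref{sec-prel}, one has $G_k=R_{k,k+1}(c_{k+1}-c_k)\Pe_{k,k+1}=\hR_{k,k+1}(c_{k+1}-c_k)$ in the notation of Subsection~\ref{subsec-alt}, and similarly that the operator on the right of \eqref{dep-F}, $H_k:=\Pe_{k,k+1}R_{k,k+1}(c_k-c_{k+1})$, equals $R_{k+1,k}(c_k-c_{k+1})\Pe_{k,k+1}$. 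The elementary step is then: for a label $a\notin\{k,k+1\}$ and spectral parameters with $w_{k+1}-w_k=c_{k+1}-c_k$, moving $\Pe_{k,k+1}$ to the right and applying the Yang--Baxter equation \eqref{YB-add} once to the triple of copies $\{k,k+1,a\}$ gives
\begin{equation*}
G_k\cdot R_{k+1,a}(w_{k+1})\,R_{k,a}(w_k)=R_{k+1,a}(w_k)\,R_{k,a}(w_{k+1})\cdot G_k ,
\end{equation*}
and analogous identities hold for the reversed ordering $R_{a,k}(w_k)R_{a,k+1}(w_{k+1})$ of the pair (with the sign of the gap reversed) and for $H_k$ in place of $G_k$. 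In words: $G_k$ (or $H_k$) commutes past a consecutive ``resonant'' pair of factors while exchanging their spectral parameters, and it commutes outright past any operator acting trivially on the $k$-th and $(k+1)$-th copies.

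For part~(i) I would start from the defining formula \eqref{def-fus-R}, $R_{\bc,\un{\bc}}(u)=B_1B_2\cdots B_{n'}$ with $B_i:=R_{n,\un{i}}(u+c_n-c_{\un{i}})\cdots R_{1,\un{i}}(u+c_1-c_{\un{i}})$. Inside a block $B_i$ the only factors touching copy $k$ or $k+1$ are the consecutive pair $R_{k+1,\un{i}}(u+c_{k+1}-c_{\un{i}})\,R_{k,\un{i}}(u+c_k-c_{\un{i}})$, whose gap of spectral parameters is exactly $c_{k+1}-c_k$, while all other factors commute with $G_k$. Hence the elementary step yields $G_kB_i=B_i^{(s_k)}G_k$, where $B_i^{(s_k)}$ denotes the same block built from $\bc^{(s_k)}$, and iterating over $i=1,\dots,n'$ gives $G_kR_{\bc,\un{\bc}}(u)=R_{\bc^{(s_k)},\un{\bc}}(u)\,G_k$, which is \eqref{dep-R}.

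For part~(ii) I would exhibit a presentation of $F(\bc)$ with $R_{k,k+1}$ in the middle. Splitting the lexicographically ordered product \eqref{F} at that factor -- all pairs $(i,j)$ with $i\le k-1$ precede $(k,k+1)$, which precedes the rest -- gives $F(\bc)=P_0\cdot R_{k,k+1}(c_k-c_{k+1})\cdot Q_0$ with $P_0=\prod^{\rightarrow}_{1\le i<j\le n,\;i\le k-1}R_{i,j}(c_i-c_j)$. In $P_0$ the pair $R_{i,k}(c_i-c_k)R_{i,k+1}(c_i-c_{k+1})$ is already consecutive for each $i\le k-1$ and the remaining factors commute with $G_k$, so the elementary step gives $G_kP_0=P_0^{(s_k)}G_k$. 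Next, $G_k\,R_{k,k+1}(c_k-c_{k+1})=R_{k,k+1}(c_{k+1}-c_k)\,\Pe_{k,k+1}R_{k,k+1}(c_k-c_{k+1})=R_{k,k+1}(c_{k+1}-c_k)\,H_k$, so that $H_k$ now travels to the right while the middle factor has already become its $\bc^{(s_k)}$-version. Finally I would re-sort $Q_0$ ``by columns'' (a rearrangement using only commutation relations, exactly as in Lemma~\ref{lem-F}) so that for each $j\ge k+2$ the pair $R_{k,j}(c_k-c_j)R_{k+1,j}(c_{k+1}-c_j)$ is consecutive; the elementary step, now for $H_k$, gives $H_kQ_0=Q_0^{(s_k)}H_k$. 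Assembling, $G_kF(\bc)=P_0^{(s_k)}R_{k,k+1}(c_{k+1}-c_k)Q_0^{(s_k)}H_k=F(\bc^{(s_k)})H_k$, since all rearrangements involved only commutations and so apply verbatim with $\bc$ replaced by $\bc^{(s_k)}$; this is \eqref{dep-F}.

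The bulk of the work is bookkeeping: tracking the spectral parameters so that each invocation of \eqref{YB-add} is legitimate (the required additivity, of the form $(c_i-c_k)+(c_k-c_{k+1})=c_i-c_{k+1}$ and its variants, is automatic, just as in the proof of Theorem~\ref{thm-fus-R}), and verifying that every intermediate reordering uses only commuting operators. The one point that needs a little foresight is the choice in part~(ii) of a form of $F(\bc)$ in which both halves, $P_0$ (to be crossed by $G_k$) and $Q_0$ (to be crossed by $H_k$), have their resonant pairs consecutive; cutting at $R_{k,k+1}$ and re-sorting $Q_0$ by columns is exactly what makes both crossings go through. No hypothesis on $R$ beyond the Yang--Baxter equation \eqref{YB-add} is used.
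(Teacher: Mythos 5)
Your proof is correct. For part~(i) it is essentially the paper's argument: both work block by block through $R_{\bc,\un{\bc}}(u)$, noting that inside each block the only factors touching copies $k$ and $k+1$ form a consecutive pair with spectral-parameter gap $c_{k+1}-c_k$, and then apply the Yang--Baxter equation on the triple $\{k,k+1,\un{i}\}$ plus commutations. For part~(ii) your organization genuinely differs: the paper proves the equivalent identity for $R_{k+1,k}(c_{k+1}-c_k)F(\bc)$ by induction on $n$, splitting off the last column when $k=1$ and the first row when $k>1$, whereas you avoid induction altogether by cutting the lexicographic product at the factor $R_{k,k+1}(c_k-c_{k+1})$, pushing $G_k$ through $P_0$ (where the resonant pairs $R_{i,k}R_{i,k+1}$ are already consecutive), converting $G_k$ into $H_k$ at the middle factor, and pushing $H_k$ through a column-resorted $Q_0$. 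Your elementary step is stated and used with the correct parameter bookkeeping in all three variants, and the resorting of $Q_0$ indeed uses only commuting factors, so the argument closes. The direct decomposition makes the mechanism (one YBE per resonant pair, everything else commuting) more transparent and is arguably cleaner than the paper's two-case induction; the paper's inductive form has the mild advantage of reusing verbatim the row/column splittings already set up in the proofs of Theorem~\ref{thm-fus-R} and Lemma~\ref{lem-F}. Either way the only input is the Yang--Baxter equation \eqref{YB-add}, as you note.
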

\begin{proof}
(i) We have, for any $i\in\{1,\dots,n'\}$,
\[\begin{array}{ll}
& \Pe_{k,k+1}R_{k+1,k}(c_{k+1}-c_k)\cdot \prod\limits_{j=1,\dots,n}^{\leftarrow}R_{j,\un{i}}(u+c_j-c_{\un{i}})\\
= & \Pe_{k,k+1}\cdot \Bigl(\prod\limits_{j=1,\dots,n}^{\leftarrow}R_{s_k(j),\un{i}}(u+c_{s_k(j)}-c_{\un{i}})\Bigr) \cdot R_{k+1,k}(c_{k+1}-c_k) \\
= & \Bigl(\prod\limits_{j=1,\dots,n}^{\leftarrow}R_{j,\un{i}}(u+c_{s_k(j)}-c_{\un{i}})\Bigr) \cdot\Pe_{k,k+1}R_{k+1,k}(c_{k+1}-c_k)\ ;
\end{array}\]
we use in the second equality commutation relations, together with the Yang--Baxter relation on the copies labelled by $k+1$, $k$ and $\un{i}$. Due to Formula (\ref{def-fus-R}), this proves the item (i).

(ii) Formula (\ref{dep-F}) is equivalent to
\begin{equation}\label{dep-F2}
R_{k+1,k}(c_{k+1}-c_k)\cdot F(\bc)=\Bigl(\prod\limits_{1\leq i<j\leq n}^{\rightarrow}R_{s_k(i),s_k(j)}(c_{s_k(i)}-c_{s_k(j)})\Bigr)\cdot R_{k,k+1}(c_{k}-c_{k+1})\ .
\end{equation}
We prove Formula (\ref{dep-F2}) by induction on $n$. If $n=2$, Formula (\ref{dep-F2}) is trivial.

Let $n>2$. We deal first with the case $k=1$. We have, using commutation relations,
\[F(\bc) = \Bigl(\prod\limits_{1\leq i<j\leq n-1}^{\rightarrow}R_{i,j}(c_{i}-c_{j})\Bigr)\cdot R_{1,n}(c_1-c_n)R_{2,n}(c_2-c_n)\dots R_{n-1,n}(c_{n-1}-c_n)\ .\]
The induction hypothesis gives
$$R_{2,1}(c_2-c_1)\cdot \Bigl(\prod\limits_{1\leq i<j\leq n-1}^{\rightarrow}R_{i,j}(c_{i}-c_{j})\Bigr)= \Bigl(\prod\limits_{1\leq i<j\leq n-1}^{\rightarrow}R_{s_1(i),s_1(j)}(c_{s_1(i)}-c_{s_1(j)})\Bigr)\cdot R_{1,2}(c_1-c_2)\ ,$$
and besides we have
$$\begin{array}{ll}
 & R_{1,2}(c_1-c_2)\cdot R_{1,n}(c_1-c_n)R_{2,n}(c_2-c_n)R_{3,n}(c_3-c_n)\dots R_{n-1,n}(c_{n-1}-c_n)\\[0.4em]
= & R_{2,n}(c_2-c_n)R_{1,n}(c_1-c_n)R_{3,n}(c_3-c_n)\dots R_{n-1,n}(c_{n-1}-c_n)\cdot R_{1,2}(c_1-c_2)\ ,
\end{array}$$
where we used the Yang--Baxter relation on the copies labelled by $1$, $2$ and $n$, and commutation relations. Formula (\ref{dep-F2}) for $k=1$ follows.

Then we assume that $k>1$, and we write 
\[F(\bc) = R_{1,2}(c_1-c_2)R_{1,3}(c_1-c_3)\dots R_{1,n}(c_{1}-c_n)\cdot\Bigl(\prod\limits_{2\leq i<j\leq n}^{\rightarrow}R_{i,j}(c_{i}-c_{j})\Bigr)\ .\]
Similarly to above, we first have, using commutation relations and the Yang--Baxter equation on the copies labeled by $1$, $k+1$ and $k$, that
\[\begin{array}{ll}
 & R_{k+1,k}(c_{k+1}-c_k)\cdot R_{1,2}(c_1-c_2)\dots R_{1,k}(c_1-c_k)R_{1,k+1}(c_1-c_{k+1})\dots R_{1,n}(c_{1}-c_n)\\[0.4em]
= & R_{1,2}(c_1-c_2)\dots R_{1,k+1}(c_1-c_{k+1})R_{1,k}(c_1-c_{k})\dots R_{1,n}(c_{1}-c_n)\cdot R_{k+1,k}(c_{k+1}-c_k)\ ,
\end{array}\]
and then we use the induction hypothesis, namely (since $k>1$)
\[R_{k+1,k}(c_{k+1}-c_k)\cdot\Bigl(\prod\limits_{2\leq i<j\leq n}^{\rightarrow}R_{i,j}(c_{i}-c_{j})\Bigr)=\Bigl(\prod\limits_{2\leq i<j\leq n}^{\rightarrow}R_{s_k(i),s_k(j)}(c_{s_k(i)}-c_{s_k(j)})\Bigr)\cdot R_{k,k+1}(c_{k+1}-c_k)\ .\]
This yields the desired result.
\end{proof}

For the remaining of this section, we assume that
\begin{equation}\label{inv-R}
R(u)R_{2,1}(-u)=\gamma(u)\cdot \textrm{Id}_{V^{\otimes 2}}\ \ \ \ \ \ \text{for a complex-valued function $\gamma$ of $u$\,,}
\end{equation}
where $R_{2,1}(u):=\Pe R(u)\Pe$. In the physical literature, this condition is called the ``unitarity'' condition. We note that $\gamma(-u)=\gamma(u)$.
\begin{definition}\label{def-adm}
For $k\in\{1,\dots,n-1\}$, we say that the transposition $s_k=(k,k+1)$ is \emph{admissible} for $R(u)$ and for the set of parameters $\bc=(c_1,\dots,c_n)$ if $\gamma(c_k-c_{k+1})\neq 0$.
\end{definition}

Let $k\in\{1,\dots,n-1\}$ such that the transposition $s_k$ is \emph{admissible} for $R(u)$ and for the set of parameters $\bc=(c_1,\dots,c_n)$. We define an endomorphism $A_k$ of $V^{\otimes n}$ by
\begin{equation}\label{def-Ak}
A_k:=\Pe_{k,k+1}R_{k+1,k}(c_{k+1}-c_k)\ .
\end{equation}
The admissibility of $s_k$ implies that the operator $A_k$ is invertible with, from (\ref{inv-R}), 
$$A_k^{-1}=\gamma_k^{-1}R_{k,k+1}(c_{k}-c_{k+1})\Pe_{k,k+1}\,,\ \ \ \ \text{where $\gamma_k:=\gamma(c_k-c_{k+1})$\ .}$$
In this situation, the assertions of Lemma \ref{lem-dep-R} can be rewritten, with $\widetilde{A}_k:=\Pe_{k,k+1}A_k\Pe_{k,k+1}$,
\begin{equation}\label{A-R-F}R_{\bc^{(s_k)}\!,\un{\bc}}(u)=(A_k\otimes\textrm{Id}_{V^{\otimes n'}})R_{\bc,\un{\bc}^{\phantom{(s_k)}}\!\!\!\!\!\!\!\!\!\!}(u)(A_k^{-1}\otimes\textrm{Id}_{V^{\otimes n'}})\ \ \ \ \text{and}\ \ \ \  A_k F(\bc)=\gamma_k F(\bc^{(s_k)})\widetilde{A}_k^{-1}.\end{equation}
We sum up the consequences of these two formulas in the following proposition. We assume that $W_{\bc}$ is defined (and thus $W_{\bc^{(s_k)}}$ is defined as well from the second relation in (\ref{A-R-F})).
\begin{proposition}\label{prop-inv}
Let $k\in\{1,\dots,n-1\}$ such that the transposition $s_k$ is \emph{admissible} for $R(u)$ and for the set of parameters $\bc=(c_1,\dots,c_n)$. Let $e_1,\dots,e_l\in V^{\otimes n}$ such that $\mathcal{B}_{\bc}:=\{F(\bc)e_1,\dots,F(\bc)e_l\}$ is a basis of the subspace $W_{\bc}$.
\begin{enumerate}
\item Then $\mathcal{B}_{\bc^{(s_k)}}:=\{A_kF(\bc)e_1,\dots,A_kF(\bc)e_l\}$ is a basis of $W_{\bc^{(s_k)}}$. In particular, $W_{\bc^{\phantom{(s_k)}}\!\!\!\!\!\!\!\!\!\!}$ and $W_{\bc^{(s_k)}}$ have the same dimension.
\item Moreover, for any basis $\mathcal{B}_{\un{\bc}}$ of $W_{\un{\bc}}$, the matrix of the endomorphism $R_{\bc,\un{\bc}}(u)\Bigr\vert_{W_{\bc}\otimes W_{\un{\bc}}}$ in the basis $\mathcal{B}_{\bc}\otimes\mathcal{B}_{\un{\bc}}$ coincides with the matrix of the endomorphism $R_{\bc^{(s_k)}\!,\un{\bc}}(u)\Bigr\vert_{W_{\bc^{(s_k)}}\otimes W_{\un{\bc}^{\phantom{(s_k)}}\!\!\!\!\!\!\!\!\!\!}}$ in the basis $\mathcal{B}_{\bc^{(s_k)}}\otimes\mathcal{B}_{\un{\bc}^{\phantom{(s_k)}}\!\!\!\!\!\!\!\!\!\!}$ (we use the notation $\mathcal{B}_{\bc}\otimes\mathcal{B}_{\un{\bc}}:=\{a\otimes b\ |\ a\in\mathcal{B}_{\bc},\ b\in\mathcal{B}_{\un{\bc}} \}^{\phantom{A}}\!\!\!\!$ and similarly for $\mathcal{B}_{\bc^{(s_k)}}\otimes\mathcal{B}_{\un{\bc}^{\phantom{(s_k)}}\!\!\!\!\!\!\!\!\!\!}$).
\end{enumerate}
\end{proposition}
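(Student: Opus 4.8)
The plan is to derive Proposition~\ref{prop-inv} directly from the two identities recorded in~(\ref{A-R-F}), treating part~(i) and part~(ii) in turn. For part~(i), the starting point is the second identity $A_kF(\bc)=\gamma_kF(\bc^{(s_k)})\widetilde{A}_k^{-1}$, in which $\gamma_k\neq 0$ by admissibility and $A_k$, $\widetilde{A}_k$ are invertible. Applying $A_k$ to the vectors $F(\bc)e_1,\dots,F(\bc)e_l$ and using this identity, I would write $A_kF(\bc)e_i=\gamma_k F(\bc^{(s_k)})\widetilde{A}_k^{-1}e_i$. Since $A_k$ is an invertible operator on $V^{\otimes n}$, it sends the basis $\mathcal{B}_{\bc}$ of $W_{\bc}$ to a linearly independent family spanning $A_k(W_{\bc})$; but the identity also shows $A_k(W_{\bc})=A_k\,\mathrm{Im}(F(\bc))=\gamma_k\,\mathrm{Im}\bigl(F(\bc^{(s_k)})\widetilde{A}_k^{-1}\bigr)=\mathrm{Im}(F(\bc^{(s_k)}))=W_{\bc^{(s_k)}}$, the penultimate equality because $\widetilde{A}_k^{-1}$ is invertible and a nonzero scalar does not change the image. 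Hence $\mathcal{B}_{\bc^{(s_k)}}$ is a basis of $W_{\bc^{(s_k)}}$, and the equality of dimensions is immediate.

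For part~(ii), the idea is that the first identity in~(\ref{A-R-F}), namely $R_{\bc^{(s_k)},\un{\bc}}(u)=(A_k\otimes\mathrm{Id}_{V^{\otimes n'}})\,R_{\bc,\un{\bc}}(u)\,(A_k^{-1}\otimes\mathrm{Id}_{V^{\otimes n'}})$, exhibits the two fused operators as conjugate by the invertible operator $A_k\otimes\mathrm{Id}_{V^{\otimes n'}}$, and by Theorem~\ref{thm-inv-sub} both sides preserve the respective invariant subspaces. I would argue that conjugation by $A_k\otimes\mathrm{Id}_{V^{\otimes n'}}$ restricts to an isomorphism $W_{\bc}\otimes W_{\un{\bc}}\to W_{\bc^{(s_k)}}\otimes W_{\un{\bc}}$: on the first tensor factor this is exactly the content of part~(i) (which shows $A_k$ maps $W_{\bc}$ isomorphically onto $W_{\bc^{(s_k)}}$, carrying $\mathcal{B}_{\bc}$ to $\mathcal{B}_{\bc^{(s_k)}}$), and on the second factor it is the identity on $W_{\un{\bc}}$. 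Consequently the restriction of $R_{\bc^{(s_k)},\un{\bc}}(u)$ to $W_{\bc^{(s_k)}}\otimes W_{\un{\bc}}$ is intertwined by this isomorphism with the restriction of $R_{\bc,\un{\bc}}(u)$ to $W_{\bc}\otimes W_{\un{\bc}}$. Since the isomorphism sends the basis $\mathcal{B}_{\bc}\otimes\mathcal{B}_{\un{\bc}}$ to the basis $\mathcal{B}_{\bc^{(s_k)}}\otimes\mathcal{B}_{\un{\bc}}$, the two matrices coincide, which is precisely the assertion.

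The main point requiring care — though it is not deep — is the bookkeeping that the operator $A_k\otimes\mathrm{Id}_{V^{\otimes n'}}$ simultaneously does three things: it conjugates $R_{\bc,\un{\bc}}(u)$ to $R_{\bc^{(s_k)},\un{\bc}}(u)$ (first identity), it maps $W_{\bc}$ to $W_{\bc^{(s_k)}}$ while fixing the $W_{\un{\bc}}$-factor (so it maps $W_{\bc}\otimes W_{\un{\bc}}$ onto $W_{\bc^{(s_k)}}\otimes W_{\un{\bc}}$), and it carries the chosen basis to the chosen basis (by the explicit form of $\mathcal{B}_{\bc^{(s_k)}}$). Once these three compatibilities are checked, the statement follows formally: conjugation by an invertible operator that maps one invariant subspace with a chosen basis to another invariant subspace with a chosen basis necessarily gives equal matrices for the restricted operators. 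I would also remark that the hypotheses guarantee $W_{\bc^{(s_k)}}$ is defined, since by~(\ref{A-R-F}) the singularity locus of $F$ at $\bc^{(s_k)}$ is governed by that at $\bc$ together with the invertible factors $A_k$, $\widetilde{A}_k$ and the nonzero scalar $\gamma_k$; this is why the admissibility of $s_k$ is exactly the condition needed.
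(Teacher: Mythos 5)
Your proof is correct and follows essentially the same route as the paper: both parts are read off from the two identities in (\ref{A-R-F}), with the second identity giving $A_k(W_{\bc})=W_{\bc^{(s_k)}}$ and hence part (i), and the conjugation identity giving part (ii) since $A_k\otimes\textrm{Id}_{V^{\otimes n'}}$ carries $\mathcal{B}_{\bc}\otimes\mathcal{B}_{\un{\bc}}$ to $\mathcal{B}_{\bc^{(s_k)}}\otimes\mathcal{B}_{\un{\bc}}$. Your write-up is merely more explicit than the paper's very terse argument, but the content is identical.
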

\begin{proof}
\begin{enumerate}
\item As the operator $A_k$ is invertible, the linear independence of the vectors in $\mathcal{B}_{\bc^{(s_k)}}$ is immediate. Moreover, due to the second relation in (\ref{A-R-F}), we have that the dimensions of $W_{\bc^{\phantom{(s_k)}}\!\!\!\!\!\!\!\!\!\!}$ and $W_{\bc^{(s_k)}}$ coincide and also that the vectors in $\mathcal{B}_{\bc^{(s_k)}}$ belong to $W_{\bc^{(s_k)}}$. This proves the item (i).
\item The item (ii) is a direct consequence of the first formula in (\ref{A-R-F}).
\end{enumerate}
\vskip -0.4cm
\end{proof}

\vskip .2cm
\begin{remark}\label{rem-mult}
In Sections \ref{sec-fus} and \ref{sec-inv}, we worked with a solution of the Yang--Baxter equation with ``additive" spectral parameters (see paragraph \textbf{4} in Section \ref{sec-prel}). 
The whole construction in Sections \ref{sec-fus} and \ref{sec-inv} has an equivalent ``multiplicative" version, where one replaces addition for the spectral parameters by multiplication. As examples, the fused solution (\ref{def-fus-R}) is given by
\begin{equation}\label{def-fus-R-mult}
R_{\bc,\un{\bc}}(\alpha):=\prod_{i=1,\dots,n'}^{\rightarrow}R_{n,\un{i}}(\frac{\alpha c_n}{c_{\un{i}}})\dots\dots R_{2,\un{i}}(\frac{\alpha c_2}{c_{\un{i}}})R_{1,\un{i}}(\frac{\alpha c_1}{c_{\un{i}}})\ ,
\end{equation}
the invariant subspace for this fused solution is the image in $V^{\otimes n}$ of the operator
\begin{equation}\label{F-mult}
F(\bc):=\prod_{1\leq i<j\leq n}^{\rightarrow}R_{i,j}(\frac{c_i}{c_j})\ ,
\end{equation}
and the unitarity condition (\ref{inv-R}) becomes $R(\alpha)R_{2,1}(\frac{1}{\alpha})=\gamma(\alpha)\cdot \textrm{Id}_{V^{\otimes 2}}$, with $\gamma(\alpha)=\gamma(\frac{1}{\alpha})\in\C$. All proofs and calculations made in Sections \ref{sec-fus} and \ref{sec-inv} are completely analogous for the multiplicative version. In the sequel we will use both versions. \hfill$\triangle$
\end{remark}

\section{Invariant subspaces and representations of $U(\mathfrak{gl}_N)$}\label{sec-Sn}

In this Section, we will further study the subspaces $W_{\bc}$, when we start from the Yang solution, (\ref{Yang2}) below, of the Yang--Baxter equation on $V\otimes V$. This will lead to a family of fused solutions acting on irreducible representations of the general linear Lie algebra $\mathfrak{gl}_N$.

\vskip .2cm
We fix in this Section:
\begin{equation}\label{Yang2}R(u):=\textrm{Id}_{V^{\otimes 2}}-\frac{\Pe}{u}\ .
\end{equation}

Let $n>1$ and recall that $F$ is the following function of $\bc:=(c_1,\dots,c_n)\in\C^n$ with values in $\End(V^{\otimes n})$:
\begin{equation}\label{F3}
F(\bc):=\prod_{1\leq i<j\leq n}^{\rightarrow}R_{i,j}(c_i-c_j)\ .
\end{equation}
Moreover, recall that the subspace we are interested in, denoted by $W_{\bc}$, is the image of the operator $F(\bc)$, for points $\bc=(c_1,\dots,c_n)$ where the function $F$ is a defined element of $\End(V^{\otimes n})$. For the solution (\ref{Yang2}), the expression for the operator $F(\bc)$ reads:
\begin{equation}\label{F-Sn}F(\bc):=\prod_{1\leq i<j\leq n}^{\rightarrow}(\text{Id}_{V^{\otimes n}}-\frac{\Pe_{i,j}}{c_i-c_j})\ .
\end{equation}

\subsection{Schur--Weyl duality}

Set $N:=\dim(V)$ and denote by $\mathfrak{gl}_N$ the Lie algebra of endomorphisms of $V$ and by $U(\mathfrak{gl}_N)$ its universal enveloping algebra. The  Lie algebra $\mathfrak{gl}_N$ acts on the tensor product $V^{\otimes n}$ and we denote by $\chi$ this representation, given by:
$$ \chi(g):=g\otimes \textrm{Id} \otimes\dots\otimes \textrm{Id}+\textrm{Id} \otimes g\otimes \textrm{Id}\dots\otimes \textrm{Id}+\cdots\dots+\textrm{Id} \otimes\dots\otimes \textrm{Id}\otimes g\ ,\ \quad\text{$g\in\mathfrak{gl}_N$.}$$
The representation $\chi$ extends to a representation, which we still denote by $\chi$, of $U(\mathfrak{gl}_N)$.

The symmetric group $S_n$ acts on $V^{\otimes n}$ by permuting the copies and we denote this representation of $S_n$ by $\rho$. Explicitly, the representation $\rho$ is given by:
\begin{equation}\label{SW-Sn} \rho(\pi)=\Pe_{\pi}\ \quad\text{ for all $\pi\in S_n$,}
\end{equation}
where $\Pe_{\pi}$ is defined by (\ref{P-pi}). The representation $\rho$ extends to a representation, which we still denote by $\rho$, of the group algebra $\C S_n$.

The irreducible complex representations of $\C S_n$ are parametrized by the partitions of $n$. For any partition $\lambda$ of $n$, we denote by $M_{\lambda}^{\C S_n}$ the corresponding irreducible $\C S_n$-module (with the convention that $M_{(n)}^{\C S_n}$ is the trivial representation).

For any partition $\lambda$ such that $\ell(\lambda)\leq N$, we denote by $M_{\lambda}^{U(\mathfrak{gl}_N)}$ the irreducible highest-weight $U(\mathfrak{gl}_N)$-module of highest weight $\lambda$.

The Schur--Weyl duality between the symmetric group $S_n$ and the algebra $U(\mathfrak{gl}_N)$ can be expressed by the following assertions (see \emph{e.g.} \cite{GW,We}).
\begin{theorem}\label{thm-SW-Sn}
\begin{enumerate}
\item The subalgebra $\rho\bigl(\C S_n\bigr)$ of $\text{End}(V^{\otimes n})$ is the centraliser of $\chi\bigl(U(\mathfrak{gl}_N))$.
\item The subalgebra $\chi\bigl(U(\mathfrak{gl}_N))$ of $\text{End}(V^{\otimes n})$ is the centraliser of $\rho\bigl(\C S_n\bigr)$.
\item As an $\left(U(\mathfrak{gl}_N)\!\otimes\!\C S_n\right)$-module defined by $\chi$ and $\rho$, the space $V^{\otimes n}$ decomposes as:
\begin{equation}\label{SW-Sn-for}
V^{\otimes n}\cong \bigoplus_{\lambda} M_{\lambda}^{U(\mathfrak{gl}_N)}\otimes M_{\lambda}^{\C S_n}\ ,
\end{equation}
where $\lambda$ runs over the set of partitions such that $|\lambda|=n$ and $\ell(\lambda)\leq N$.
\end{enumerate}
\end{theorem}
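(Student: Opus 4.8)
The plan is to derive all three assertions from the classical double centralizer theorem, the only genuinely non-formal input being a ``surjectivity'' statement for the diagonal action of $U(\mathfrak{gl}_N)$ on $V^{\otimes n}$. The two easy inclusions come first: each $\chi(g)$ acts on a single tensor factor at a time and therefore commutes with every $\Pe_{\pi}=\rho(\pi)$, so $\chi\bigl(U(\mathfrak{gl}_N)\bigr)$ is contained in the centraliser $\End_{\C S_n}(V^{\otimes n})$ of $\rho(\C S_n)$ in $\End(V^{\otimes n})$, and symmetrically $\rho(\C S_n)\subseteq\End_{\mathfrak{gl}_N}(V^{\otimes n})$. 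Everything then reduces to turning the first inclusion into an equality.

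The key step is to identify $\chi\bigl(U(\mathfrak{gl}_N)\bigr)$ with the subalgebra $S:=\bigl(\End(V)^{\otimes n}\bigr)^{S_n}$ of symmetric tensors inside $\End(V^{\otimes n})\cong\End(V)^{\otimes n}$. On one hand, conjugation by $\rho(\pi)$ permutes the tensor factors of $\End(V)^{\otimes n}$, so $\End_{\C S_n}(V^{\otimes n})$ is exactly the fixed-point algebra $S$, and the easy inclusion above reads $\chi\bigl(U(\mathfrak{gl}_N)\bigr)\subseteq S$. On the other hand, for any $g\in GL(V)$ one may write $g=\exp(X)$ with $X\in\End(V)=\mathfrak{gl}_N$ (the complex exponential is onto $GL(V)$), whence $g^{\otimes n}=\exp\bigl(\chi(X)\bigr)$ lies in the finite-dimensional, hence closed, subalgebra $\chi\bigl(U(\mathfrak{gl}_N)\bigr)$, which contains $1$ and $\chi(X)$. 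Since $GL(V)$ is Zariski dense in $\End(V)$ and the tensors $X^{\otimes n}$ with $X\in\End(V)$ span $S$ by polarisation (here $\mathrm{char}\,\C=0$ is used), we get $S\subseteq\chi\bigl(U(\mathfrak{gl}_N)\bigr)$. Thus $\chi\bigl(U(\mathfrak{gl}_N)\bigr)=S=\End_{\C S_n}(V^{\otimes n})$, which is assertion (ii).

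Now apply the double centralizer theorem. By Maschke's theorem $\C S_n$ is semisimple, so $\rho(\C S_n)$ is a semisimple subalgebra of $\End(V^{\otimes n})$; the double centralizer theorem then yields that its commutant, which we have just identified with $\chi\bigl(U(\mathfrak{gl}_N)\bigr)$, is semisimple, that $\rho(\C S_n)$ is in turn the full commutant of $\chi\bigl(U(\mathfrak{gl}_N)\bigr)$ --- that is, $\rho(\C S_n)=\End_{\mathfrak{gl}_N}(V^{\otimes n})$, which is assertion (i) --- and that $V^{\otimes n}\cong\bigoplus_{\mu}U_{\mu}\otimes M_{\mu}^{\C S_n}$ as an $\bigl(U(\mathfrak{gl}_N)\otimes\C S_n\bigr)$-module, the sum running over the partitions $\mu\vdash n$ with $M_{\mu}^{\C S_n}$ occurring in $V^{\otimes n}$ and $U_{\mu}$ denoting the simple $\chi\bigl(U(\mathfrak{gl}_N)\bigr)$-module paired with $M_{\mu}^{\C S_n}$. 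To reach assertion (iii) in the stated form it remains to show $U_{\lambda}\cong M_{\lambda}^{U(\mathfrak{gl}_N)}$ and that the occurring $\mu$ are precisely the partitions of $n$ of length $\leq N$. For this I would use Young symmetrizers: for $\lambda\vdash n$ one checks that $\rho(c_{\lambda})V^{\otimes n}\neq 0$ if and only if $\ell(\lambda)\leq N$ (a column antisymmetrisation of $c_{\lambda}$ kills $V^{\otimes n}$ as soon as that column has more than $N$ boxes), and a short computation exhibits a highest-weight vector of weight $\lambda$ generating $\rho(c_{\lambda})V^{\otimes n}$ as a $U(\mathfrak{gl}_N)$-module; since $\rho(c_{\lambda})V^{\otimes n}$ meets only the $M_{\lambda}^{\C S_n}$-isotypic component, this forces $U_{\lambda}\cong M_{\lambda}^{U(\mathfrak{gl}_N)}$ and pins down the index set.

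The step I expect to be the main obstacle is the identification $\chi\bigl(U(\mathfrak{gl}_N)\bigr)=S$, i.e.\ surjectivity onto the full centraliser of $\rho(\C S_n)$: this is where the specific nature of $\mathfrak{gl}_N=\End(V)$ and the density/polarisation argument really enter, everything else being either formal (the easy inclusions, the double centralizer theorem) or a routine Young-symmetrizer computation. A secondary point requiring care is the bookkeeping in the last paragraph --- that the partition labelling the $\C S_n$-summand and the $U(\mathfrak{gl}_N)$-summand of a given component coincide --- which is classical but not automatic.
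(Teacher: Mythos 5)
The paper does not prove this statement: it is the classical Schur--Weyl duality, quoted with references to \cite{GW,We} and used as a black box. Your argument is precisely the standard proof found in those references --- the easy commutation, the identification of the commutant of $\rho(\C S_n)$ with the symmetric tensors in $\End(V)^{\otimes n}$ and of the latter with $\chi\bigl(U(\mathfrak{gl}_N)\bigr)$ via exponentials, Zariski density and polarisation, then the double centralizer theorem and the Young-symmetrizer computation pinning down the labels --- and it is correct.
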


\subsection{Fusion formula for idempotents of the symmetric group}

We consider the following rational function in $u_1,\dots,u_n$ with values in the group algebra $\C S_n$:
\begin{equation}\label{phi-sym}
\Phi(u_1,\dots,u_n):=\prod_{1\leq i<j\leq n}^{\rightarrow}(1-\frac{(i,j)}{u_i-u_j})\ .
\end{equation}
Our interest in $\Phi$ comes from the fact that $\rho\bigl(\Phi(c_1,\dots,c_n)\bigr)=F(\bc)$, according to Formulas (\ref{F-Sn}) and (\ref{SW-Sn}).

We refer to the rational function $\Phi$ as the ``fusion function'' of the symmetric group. It is easy to check that, for distinct $i,j,k\in\{1,\dots,n\}$,
\begin{equation}\label{ijk}(1-\frac{(i,j)}{u})(1-\frac{(i,k)}{u+v})(1-\frac{(j,k)}{v})=(1-\frac{(j,k)}{v})(1-\frac{(i,k)}{u+v})(1-\frac{(i,j)}{u})\ .
\end{equation}
The functions $(1-\frac{(i,j)}{u})$ with values in $\C S_n$ are called \emph{Baxterized elements}. We say that they are ``universal" solutions of the Yang--Baxter equation (associated to the symmetric group) as they provide solutions of the Yang--Baxter equation in the representations of $S_n$.

Let $\lambda$ be a partition of $n$ and let $\mathcal{T}$ be a standard Young tableau of shape $\lambda$. For brevity, set $\cc_i:=\cc(\mathcal{T}|i)$ for $i=1,\dots,n$.
The following result traces back to the work of Jucys \cite{Ju}, see also \cite{Ch2,Mo,Na,Na2,Na3} (we use a formulation as in \cite{Mo}). Recall that $f(\lambda)$ is the non-zero complex number defined in (\ref{f-lam}).
\begin{theorem}\label{thm-idem-Sn}
The element obtained by the following consecutive evaluations (from $u_1$ to $u_n$)
\begin{equation}\label{ET}
E_{\mathcal{T}}:=
f(\lambda)\Phi(u_1,\dots,u_n)\Bigr\vert_{u_1=\cc_1}\Bigr\vert_{u_2=\cc_2}\dots \Bigr\vert_{u_n=\cc_n}
\end{equation}
is a primitive idempotent of $\C S_n$ which generates a minimal left ideal isomorphic, as an $\C S_n$-module, to the irreducible module $M_{\lambda}^{\C S_n}$.
\end{theorem}
An equivalent formulation of this theorem is as follows. We consider the isomorphism of algebras $\C S_n\cong \bigoplus_{\lambda'} \text{End}_{\C}(M_{\lambda'}^{\C S_n})$ where an element of $\C S_n$ is sent in each summand to its image as an operator on $M_{\lambda'}^{\C S_n}$. Then Theorem \ref{thm-idem-Sn} asserts that $E_{\mathcal{T}}$ is sent through this isomorphism to a projector of rank one (\emph{i.e.} a primitive idempotent) in $\text{End}_{\C}(M_{\lambda}^{\C S_n})$, where $\lambda=\sh_{\cT}$, and to $0$ in any other $\text{End}_{\C}(M_{\lambda'}^{\C S_n})$. In other words again, we have
\[\dim\bigl(E_{\mathcal{T}}(M_{\sh_{\cT}}^{\C S_n})\bigr)=1\ \ \ \ \ \text{and}\ \ \ \ \ E_{\mathcal{T}}(M_{\lambda'}^{\C S_n})=0\ \ \ \text{if $\lambda'\neq\sh_{\cT}$\ .}\]

\begin{remark}
The assertion formulated in Theorem \ref{thm-idem-Sn} is what we will need later on. Nevertheless more can be said (see for example \cite{Mo}). In fact, the set $\{E_{\mathcal{T}}\}$, where $\cT$ runs through the set of standard Young tableaux of size $n$, is a complete set of primitive orthogonal idempotents of $\C S_n$. Moreover, the one-dimensional subspace $E_{\mathcal{T}}(M_{\sh_{\cT}}^{\C S_n})$ is generated by the basis vector associated to $\cT$ in the so-called seminormal basis of $M_{\sh_{\cT}}^{\C S_n}$.
\hfill$\triangle$
\end{remark}
\begin{remark}\label{rem-sing}
For some standard Young tableau $\mathcal{T}$, it may happen that $\cc(\mathcal{T}|j)=\cc(\mathcal{T}|k)$ for $j\neq k$ (for example, if $\sh_{\cT}=(2,2)$). In this situation, Theorem \ref{thm-idem-Sn} asserts in particular that the rational function $\Phi(u_1,\dots,u_n)$ gives a well-defined element of $\C S_n$ when evaluated as in (\ref{ET}), even if the factor $\displaystyle(1-\frac{(j,k)}{u_j-u_k})$ is singular at $u_j=\cc(\mathcal{T}|j)$ and $u_k=\cc(\mathcal{T}|k)$.\hfill$\triangle$
\end{remark}

\subsection{Fused solutions on $M_{\lambda}^{U(\mathfrak{gl}_N)}\otimes M_{\lambda'}^{U(\mathfrak{gl}_N)}$}\label{subsec-conc-Sn}

Let $\mathcal{T}$ be a standard Young tableau such that $\sh_{\cT}=\lambda$ with $|\lambda|=n$ and $\ell(\lambda)\leq N$. We let
\begin{equation}\label{def-FT} F(\cT):=F(\bc)\Bigr\vert_{c_1=\cc(\mathcal{T}|1)}\Bigr\vert_{c_2=\cc(\mathcal{T}|2)}\dots \Bigr\vert_{c_n=\cc(\mathcal{T}|n)}\ ,\end{equation}
where $\bc=(c_1,\dots,c_n)$ is seen here as an $n$-tuple of variables. From the fact that $F(\bc)=\rho\bigl(\Phi(c_1,\dots,c_n)\bigr)$, together with (\ref{ET}), we have that
\begin{equation}\label{F-E}F(\cT)=\rho\left(f(\lambda)^{-1} E_{\mathcal{T}}\right)\ .\end{equation}
In particular $F(\cT)$ is a well-defined endomorphism of $V^{\otimes n}$. We denote by $W_{\cT}$ its image. Due to (\ref{F-E}), the space $W_{\cT}$ coincides with the image in $V^{\otimes n}$ of the operator $\rho\left(E_{\mathcal{T}}\right)$.

The preceding discussion, which follows from Theorem \ref{thm-idem-Sn} together with the Schur--Weyl duality, namely Formula (\ref{SW-Sn-for}) in Theorem \ref{thm-SW-Sn}, leads to the identification of the space $W_{\cT}$. 
\begin{theorem}\label{thm-conc-Sn}
The subspace $W_{\cT}$ of $\,V^{\otimes n}$ is an irreducible $U(\mathfrak{gl}_N)$-module isomorphic to $M_{\lambda}^{U(\mathfrak{gl}_N)}\!$.
\end{theorem}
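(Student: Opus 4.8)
The plan is to read off the space $W_{\cT}$ directly from the Schur--Weyl decomposition (\ref{SW-Sn-for}). By Formula (\ref{F-E}) we have $F(\cT)=\rho\bigl(f(\lambda)^{-1}E_{\cT}\bigr)$, so that $W_{\cT}=\mathrm{Im}\bigl(\rho(E_{\cT})\bigr)$, and everything reduces to computing the image of the operator $\rho(E_{\cT})$ on $V^{\otimes n}$, knowing how $E_{\cT}$ acts on the irreducible $\C S_n$-modules. First I would observe that $W_{\cT}$ is a $U(\mathfrak{gl}_N)$-submodule of $V^{\otimes n}$: since $E_{\cT}\in\C S_n$, the operator $\rho(E_{\cT})$ lies in $\rho(\C S_n)$, which by Theorem \ref{thm-SW-Sn}(i) is the centraliser of $\chi(U(\mathfrak{gl}_N))$; hence the image of the (idempotent, up to scalar) operator $\rho(E_{\cT})$ is stable under $\chi(U(\mathfrak{gl}_N))$.

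Next I would use that, under the isomorphism (\ref{SW-Sn-for}), the action of $\chi(g)$ for $g\in U(\mathfrak{gl}_N)$ on the summand $M_{\mu}^{U(\mathfrak{gl}_N)}\otimes M_{\mu}^{\C S_n}$ is $(\text{action of }g)\otimes\textrm{Id}$, while the action of $\rho(\pi)$ for $\pi\in\C S_n$ is $\textrm{Id}\otimes(\text{action of }\pi)$; this is precisely the content of Theorem \ref{thm-SW-Sn}(iii), that the single isomorphism (\ref{SW-Sn-for}) intertwines both commuting actions at once. Consequently $\rho(E_{\cT})$ acts on that summand as $\textrm{Id}\otimes\bigl(E_{\cT}|_{M_{\mu}^{\C S_n}}\bigr)$. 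By Theorem \ref{thm-idem-Sn}, recalled in the paragraph preceding the statement, $E_{\cT}$ annihilates $M_{\mu}^{\C S_n}$ whenever $\mu\neq\lambda$ and acts on $M_{\lambda}^{\C S_n}$ as a projector onto a one-dimensional subspace $\C v$. Using the elementary fact that the image of $\textrm{Id}\otimes B$ on $A\otimes B'$ equals $A\otimes\mathrm{Im}(B)$, I get $W_{\cT}=\mathrm{Im}\bigl(\rho(E_{\cT})\bigr)=M_{\lambda}^{U(\mathfrak{gl}_N)}\otimes\C v$.

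Finally, as a $U(\mathfrak{gl}_N)$-module (the action being $g\mapsto(\text{action of }g)\otimes\textrm{Id}$ on $M_{\lambda}^{U(\mathfrak{gl}_N)}\otimes\C v$), this space is manifestly isomorphic to $M_{\lambda}^{U(\mathfrak{gl}_N)}$, which is irreducible; this is the assertion to prove. I do not expect any genuine obstacle here: the only point demanding care is the bookkeeping with the two commuting actions, i.e. making sure one is entitled to diagonalise $\rho(E_{\cT})$ summand-by-summand in the decomposition (\ref{SW-Sn-for}) while simultaneously keeping track of the $U(\mathfrak{gl}_N)$-action, which is exactly guaranteed by Theorem \ref{thm-SW-Sn}. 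Everything else is routine once the results assembled in the paragraph before the theorem are in place.
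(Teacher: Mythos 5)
Your proposal is correct and follows essentially the same route as the paper: identifying $W_{\cT}$ with the image of $\rho(E_{\cT})$ via (\ref{F-E}), then reading off that image summand-by-summand in the Schur--Weyl decomposition (\ref{SW-Sn-for}), where $E_{\cT}$ kills every $M_{\lambda'}^{\C S_n}$ with $\lambda'\neq\lambda$ and acts as a rank-one projector on $M_{\lambda}^{\C S_n}$. The paper leaves this as "the preceding discussion"; your write-up simply makes the same argument explicit.
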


Let $\mathcal{T}$, $\mathcal{T}'$ be two standard Young tableaux such that $\sh_{\cT}=\lambda$ and $\sh_{\cT'}=\lambda'$, with $|\lambda|=n$, $|\lambda'|=n'$ and $\ell(\lambda),\ell(\lambda')\leq N$. We set $\bc^{\cT}\!\!:=\bigl(\cc(\mathcal{T}|1),\dots,\cc(\mathcal{T}|n)\bigr)$, $\un{\bc}^{\cT'}\!\!:=\bigl(\cc(\mathcal{T'}|1),\dots,\cc(\mathcal{T'}|n')\bigr)$ and we define 
$$R^{\text{res}}_{\mathcal{T},\mathcal{T}'}(u):=R_{\bc^{\cT},\un{\bc}^{\cT'}}(u)\Bigr\vert_{W_{\cT}\otimes W_{\cT'}}\ ,$$
the restriction of the operator $R_{\bc^{\cT},\un{\bc}^{\cT'}}(u)$ to the subspace $W_{\cT}\otimes W_{\cT'}$ of $V^{\otimes n}\otimes V^{\otimes n'}$.

Let $k\in\{1,\dots,n-1\}$ and denote by $\mathcal{T}^{(s_k)}$ the Young tableau of shape $\lambda$ obtained from $\mathcal{T}$ by exchanging the nodes with numbers $k$ and $k+1$. We assume that $\mathcal{T}^{(s_k)}$ is also standard. This is equivalent to say that $\cc(\cT|k+1)\neq \cc(\cT|k)\pm1$. 

Moreover, the Yang solution (\ref{Yang2}) satisfies the following unitarity condition:
\begin{equation}\label{inv-R-Sn}R(u)R_{2,1}(-u)=\frac{u^2-1}{u^2}\cdot\textrm{Id}_{V^{\otimes 2}}\,. \end{equation}
Therefore, according to Definition \ref{def-adm}, the condition for the tableau $\mathcal{T}^{(s_k)}$ to be standard is equivalent to the condition for $s_k=(k,k+1)$ to be an admissible transposition for theYang solution $R(u)$ and the set of parameters $\bc^{\cT}$. Theorem \ref{thm-conc-Sn} implies that $W_{\cT}\otimes W_{\cT'}$ and $W_{\cT^{(s_k)}}\otimes W_{\cT'^{\phantom{(s_k)}}\!\!\!\!\!\!\!\!\!}$ are both isomorphic to $M_{\lambda}^{U(\mathfrak{gl}_N)}\otimes M_{\lambda'}^{U(\mathfrak{gl}_N)}$ and, moreover, Proposition \ref{prop-inv} applied here asserts that the endomorphisms $R^{\text{res}}_{\cT,\cT'}(u)$ and $R^{\text{res}}_{\cT^{(s_k)},\cT'}(u)$ coincide up to a change of basis. 

Besides, it is well-known that, for two standard Young tableaux $\mathcal{T}$ and $\widetilde{\mathcal{T}}$ of the same shape $\lambda$, there is a sequence of admissible transpositions $\pi=s_{i_1}\dots s_{i_l}\in S_n$ transforming $\mathcal{T}$ into $\widetilde{\mathcal{T}}$. 

Finally, we sum up the results obtained in this section for the Yang solution. The item (i) below follows from Theorems \ref{thm-fus-R}, \ref{thm-inv-sub} and \ref{thm-conc-Sn}. The item (ii) is a consequence of the above discussion together with the Proposition \ref{prop-inv}.
\begin{corollary}\label{coro-fin-Sn}
\begin{enumerate}
\item The set of functions $\{R^{\text{res}}_{\cT,\cT'}\}$, where $\mathcal{T}$, $\mathcal{T}'$ are standard Young tableaux such that $\ell(\sh_{\cT}),\ell(\sh_{\cT'})\leq N$, forms a family of solutions of the Yang--Baxter equation, where $R^{\text{res}}_{\cT,\cT'}(u)$ is an endomorphism of a space isomorphic to $M_{\sh_{\cT}}^{U(\mathfrak{gl}_N)}\otimes M_{\sh_{\cT'}}^{U(\mathfrak{gl}_N)}$.
\item For four standard Young tableaux $\mathcal{T}$, $\widetilde{\mathcal{T}}$,  $\mathcal{T}'$ and $\widetilde{\mathcal{T}}'$ as above such that $\sh_{\phantom{\widetilde{T}}\!\!\!\!\cT}\!=\sh_{\widetilde{\cT}}$ and $\sh_{\phantom{\widetilde{T}}\!\!\!\!\cT'}\!=\sh_{\widetilde{\cT}'}$, the endomorphisms $R^{\text{res}}_{\cT,\cT'}(u)$ and $R^{\text{res}}_{\widetilde{\cT},\widetilde{\cT}'}(u)$ coincide up to a change of basis.
\end{enumerate}
\end{corollary}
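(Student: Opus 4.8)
The plan is to assemble the corollary entirely from machinery already established, so that the proof is essentially a bookkeeping argument. For item (i), I would start from Theorem \ref{thm-fus-R}, which says that the family $\{R_{\bc,\un{\bc}}\}$ is a family of solutions of the Yang--Baxter equation in the sense of Equation (\ref{fus-YB-add}). Then, for each triple of standard Young tableaux $\cT,\cT',\cT''$ with row-lengths bounded by $N$, the equation (\ref{fus-YB-add}) holds with $\bc=\bc^{\cT}$, $\un{\bc}=\un{\bc}^{\cT'}$ and $\unn{\bc}=\unn{\bc}^{\cT''}$. Theorem \ref{thm-inv-sub}, applied twice (once to each pair of tableaux appearing among the three), shows that every operator occurring in (\ref{fus-YB-add}) preserves the relevant tensor product of the subspaces $W_{\cT},W_{\cT'},W_{\cT''}$; hence restricting all three operators to $W_{\cT}\otimes W_{\cT'}\otimes W_{\cT''}$ preserves the validity of the functional equation, which is exactly the assertion that $\{R^{\text{res}}_{\cT,\cT'}\}$ is a family of solutions of the Yang--Baxter equation. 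Finally, Theorem \ref{thm-conc-Sn} identifies $W_{\cT}$ with $M_{\sh_{\cT}}^{U(\mathfrak{gl}_N)}$ as a $U(\mathfrak{gl}_N)$-module, which gives the stated form of the target space $M_{\sh_{\cT}}^{U(\mathfrak{gl}_N)}\otimes M_{\sh_{\cT'}}^{U(\mathfrak{gl}_N)}$ of $R^{\text{res}}_{\cT,\cT'}(u)$.

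For item (ii), the key input is Proposition \ref{prop-inv} together with the unitarity property (\ref{inv-R-Sn}) of the Yang solution. I would first treat the case where $\widetilde{\cT}=\cT^{(s_k)}$ differs from $\cT$ by a single admissible transposition $s_k$ (and $\widetilde{\cT}'=\cT'$), observing that, by (\ref{inv-R-Sn}) and Definition \ref{def-adm}, $s_k$ is admissible for the Yang solution and the parameter set $\bc^{\cT}$ precisely when $\cT^{(s_k)}$ is again standard, i.e.\ when $\cc(\cT|k+1)\neq\cc(\cT|k)\pm 1$. In that case Proposition \ref{prop-inv}(i) gives $\dim W_{\cT^{(s_k)}}=\dim W_{\cT}$ and exhibits an explicit basis $\mathcal{B}_{\bc^{\cT^{(s_k)}}}$ obtained from $\mathcal{B}_{\bc^{\cT}}$ by applying the invertible operator $A_k$, and Proposition \ref{prop-inv}(ii) states that the matrices of $R^{\text{res}}_{\cT,\cT'}(u)$ and $R^{\text{res}}_{\cT^{(s_k)},\cT'}(u)$ in the bases $\mathcal{B}_{\bc^{\cT}}\otimes\mathcal{B}_{\un{\bc}^{\cT'}}$ and $\mathcal{B}_{\bc^{\cT^{(s_k)}}}\otimes\mathcal{B}_{\un{\bc}^{\cT'}}$ coincide; this is exactly ``equivalence up to a change of basis'' for this elementary step.

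To pass from a single admissible transposition to arbitrary standard tableaux of a fixed shape, I would invoke the standard combinatorial fact, recalled in the text just before the corollary, that any two standard Young tableaux $\cT$ and $\widetilde{\cT}$ of the same shape are connected by a sequence $\pi=s_{i_1}\cdots s_{i_l}$ of transpositions each of which is admissible at the intermediate stage (each intermediate tableau remaining standard). Composing the one-step equivalences along this chain, and then repeating the argument for the second tableau $\cT'\rightsquigarrow\widetilde{\cT}'$ with the first one held fixed, yields that $R^{\text{res}}_{\cT,\cT'}(u)$ and $R^{\text{res}}_{\widetilde{\cT},\widetilde{\cT}'}(u)$ coincide up to a change of basis. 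The only point that needs a moment's care --- the main (minor) obstacle --- is checking that the ``change of basis'' is genuinely independent of the spectral parameter $u$, so that it is a true intertwiner of the two solutions and not merely a pointwise similarity: this is guaranteed because the operator $A_k$ in (\ref{def-Ak}) is built from $R_{k+1,k}$ evaluated at the \emph{fixed} value $c_{k+1}-c_k$ (a difference of classical contents), hence does not depend on $u$, and the composition of the $A_{i_j}$ along the chain inherits this independence. Everything else is a direct citation of Theorems \ref{thm-fus-R}, \ref{thm-inv-sub}, \ref{thm-conc-Sn} and Proposition \ref{prop-inv}.
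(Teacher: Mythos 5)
Your proposal is correct and follows essentially the same route as the paper: item (i) is assembled from Theorems \ref{thm-fus-R}, \ref{thm-inv-sub} and \ref{thm-conc-Sn}, and item (ii) from Proposition \ref{prop-inv} applied along a chain of admissible transpositions connecting standard tableaux of the same shape, with your explicit check that the conjugating operators $A_k$ are independent of the spectral parameter being a worthwhile addition. The only nitpick is terminological: the hypothesis $\ell(\sh_{\cT})\leq N$ bounds the number of rows of the Young diagram, not the row lengths.
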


\vskip .2cm
\begin{example}\label{ex-fus-Sn}
Let $\cT=\cT'=\fbox{\scriptsize{1}}\fbox{\scriptsize{2}}\,$. 
The expression for the operator $R_{\cT,\cT'}(u)$ is
$$R_{\cT,\cT'}(u)=(\textrm{Id}_{V^{\otimes 4}}-\frac{\Pe_{2,\un{1}}}{u+1})(\textrm{Id}_{V^{\otimes 4}}-\frac{\Pe_{1,\un{1}}}{u})(\textrm{Id}_{V^{\otimes 4}}-\frac{\Pe_{2,\un{2}}}{u})(\textrm{Id}_{V^{\otimes 4}}-\frac{\Pe_{1,\un{2}}}{u-1})\ .$$

The results of this section, applied to this example, give that the operator $R_{\cT,\cT'}(u)$ preserves the subspace $W_{\Box\!\Box}\otimes W_{\Box\!\Box}\subset V^{\otimes 2}\otimes V^{\otimes 2}$, where $W_{\Box\!\Box}$ is the space of the irreducible representation of $U(\mathfrak{gl}_N)$ corresponding to the partition $\lambda=(2)$. In fact the subspace $W_{\Box\!\Box}\subset V^{\otimes 2}$ coincides with the symmetric square of $V$, since it is the image of the operator $R(-1)=\textrm{Id}_{V^{\otimes 2}}+\Pe$.

Let $N=2$ and fix a basis $\{e_1,e_2\}$ of $V$. We set $\tilde{e}_1:=e_1\otimes e_1$, $\tilde{e}_2:=e_1\otimes e_2+e_2\otimes e_1$ and $\tilde{e}_3:=e_2\otimes e_2$. We give the matrix of the endomorphism $R^{\text{res}}_{\cT,\cT'}(u)$ of $W_{\Box\!\Box}\otimes W_{\Box\!\Box}$ written in the basis $\{\tilde{e}_i\otimes\tilde{e}_j\}_{i,j=1,2,3}$ ordered lexicographically (that is, $\tilde{e}_1\otimes \tilde{e}_1$, $\tilde{e}_1\otimes \tilde{e}_2$, $\tilde{e}_1\otimes \tilde{e}_3$, $\tilde{e}_2\otimes \tilde{e}_1$, ...): 
\begin{equation}\label{mat-Sn}\left(\begin{array}{ccccccccc} 
\frac{(u-2)(u-1)}{u(u+1)}\!\!\!&\cdot&\cdot&\cdot&\cdot&\cdot&\cdot&\cdot&\cdot \\
\cdot&\frac{u-1}{u+1}&\cdot&\!\!\!\frac{-2(u-1)}{u(u+1)}&\cdot&\cdot&\cdot&\cdot&\cdot\\
\cdot&\cdot&1&\cdot&\frac{-4}{u+1}&\cdot&\frac{2}{u(u+1)}&\cdot&\cdot \\
\cdot&\frac{-2(u-1)}{u(u+1)}\!\!\!&\cdot&\frac{u-1}{u+1}&\cdot&\cdot&\cdot&\cdot &\cdot\\
\cdot&\cdot&\frac{-1}{u+1}&\cdot&\frac{u^2-u+2}{u(u+1)}&\cdot&\frac{-1}{u+1}&\cdot &\cdot\\
\cdot&\cdot&\cdot&\cdot&\cdot&\frac{u-1}{u+1}&\cdot&\!\!\!\frac{-2(u-1)}{u(u+1)}&\cdot \\
\cdot&\cdot&\frac{2}{u(u+1)}&\cdot&\frac{-4}{u+1}&\cdot&1&\cdot&\cdot \\
\cdot&\cdot&\cdot&\cdot&\cdot&\frac{-2(u-1)}{u(u+1)}\!\!\!&\cdot&\frac{u-1}{u+1}& \cdot\\
\cdot&\cdot&\cdot&\cdot&\cdot&\cdot&\cdot&\cdot& \!\!\!\frac{(u-2)(u-1)}{u(u+1)}\\
\end{array}\right)\,,\end{equation}
where the points indicate the coefficients equal to $0$. \hfill$\triangle$
\end{example}

\subsection{``Non-standard'' evaluations of the fusion function of $S_n$}\label{subsec-rem}

In the framework of the fusion procedure for the Yang solution, we are interested in any set of complex parameters $\bc=(c_1,\dots,c_n)$ such that the fusion function $\Phi$, defined by
\begin{equation}\label{phi-sym2}
\Phi(u_1,\dots,u_n):=\prod_{1\leq i<j\leq n}^{\rightarrow}(1-\frac{(i,j)}{u_i-u_j})\ ,
\end{equation}
gives a non-invertible element of $\C S_n$ when evaluated at $\bc$. Indeed, such a set of parameters $\bc$ will lead, via (\ref{SW-Sn}), to a non-invertible endomorphism $F(\bc)$, and in turn to a proper subspace $W_{\bc}\subset V^{\otimes n}$ such that $W_{\bc}\otimes V^{\otimes n'}$ is invariant for the fused operators of the form $R_{\bc,\un{\bc}}(u)$.
With the help of Theorem \ref{thm-idem-Sn}, we obtained in the preceding subsection such a set of parameters for any standard Young tableau (actually, in these situations, the evaluations of the fusion function are proportional to primitive idempotents of $\C S_n$).

One may easily see that there are other sets of parameters $\bc=(c_1,\dots,c_n)$ such that the evaluation of the fusion function gives an idempotent of $\C S_n$ (or more generally a non-invertible element of $\C S_n$). We will give in this subsection some examples for small $n$. We refer to \cite{Ch2,NT,Na4} for more information on evaluations of the fusion function associated to standard Young tableaux of skew shapes, and their relation to representations of Yangians.

\paragraph{\textbf{Case n=2.}} In this situation, the fusion function, with values in $\C S_2$, is given by
$$ \Phi(u_1,u_2)= 1 -\frac{(1,2)}{u_1-u_2}\ .$$
It is immediate to see that the only pairs of complex numbers $(c_1,c_2)$ such that the evaluation $\Phi(c_1,c_2)$ is non-singular and is a non-invertible element of $\C S_2$ are the pairs such that $c_2=c_1\pm1$ (this can be seen from (\ref{inv-R-Sn})). As the rational function $ \Phi(u_1,u_2)$ only depends on $u_1-u_2$, there is no loss of generality to consider that $c_1=0$. As a conclusion, for $n=2$, the only pairs $(c_1,c_2)$ leading to a proper subspace $W_{(c_1,c_2)}\subset V^{\otimes 2}$ are the pairs associated to the standard Young tableaux $\fbox{\scriptsize{1}}\fbox{\scriptsize{2}}$ and $\!\!\begin{array}{l}\fbox{\scriptsize{1}}\\[-0.2em] \fbox{\scriptsize{2}}\end{array}$.

\paragraph{\textbf{Case n=3.}} The fusion function with values in $\C S_3$ is given by
\begin{equation}\label{fus-S3}\Phi(u_1,u_2,u_3) =(1 -\frac{(1,2)}{u_1-u_2})(1 -\frac{(1,3)}{u_1-u_3})(1 -\frac{(2,3)}{u_2-u_3})\ .\end{equation}
As above, we can fix the freedom of translating all variables $u_1,u_2,u_3$ by the same constant with the assumption that $u_1$ is evaluated to $0$. 
\begin{proposition}\label{prop-rem}
The evaluation of the function $\Phi(0,u_2,u_3)$ at $u_2=c_2$ and $u_3=c_3$ is proportional to an idempotent of $\C S_3$ if and only if:
$$(c_2,c_3)\in\bigl\{(1,2),\,(1,-1),\,(-1,1),\,(-1,-2),\,(2,1),\,(-2,-1)\bigr\}\ .$$
\end{proposition}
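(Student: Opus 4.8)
The plan is to reduce the problem to a finite computation in $\C S_3$, which is only $6$-dimensional, by first shrinking the set of candidate pairs $(c_2,c_3)$ to a manageable list and then checking each case directly. First I would record that the three Baxterized factors are singular precisely when $c_2 = 1$ (factor $(1-\frac{(1,2)}{u_1-u_2})$), when $c_3 = 1$ (factor $(1-\frac{(1,3)}{u_1-u_3})$), or when $c_2 - c_3 = \pm 1$ (factor $(1-\frac{(2,3)}{u_2-u_3})$), recalling from Remark \ref{rem-sing} that $\Phi(0,c_2,c_3)$ may nevertheless be non-singular at such a point if a singular factor is compensated by a neighbour. So the analysis naturally splits into the ``generic'' case, where none of the three factors is singular, and the ``boundary'' cases, where at least one factor is singular but $\Phi$ extends.

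For the generic case, each factor $1-\frac{(i,j)}{u}$ with $u\neq\pm 1$ is invertible in $\C S_3$ (its inverse is $\frac{u^2}{u^2-1}(1+\frac{(i,j)}{u})$, which follows from the $n=2$ unitarity relation \eqref{inv-R-Sn} read inside the group algebra), so $\Phi(0,c_2,c_3)$ is a product of invertible elements, hence invertible. An invertible element cannot be proportional to an idempotent unless it is a nonzero scalar, and one checks that $\Phi(0,c_2,c_3)$ is never a scalar for finite $c_2,c_3$ (e.g. by looking at its image in the two one-dimensional representations of $S_3$, or simply by noting that the coefficient of $(1,2)$ never vanishes simultaneously with the others). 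Hence no generic pair works, and every solution must make at least one of the three factors singular, i.e.\ $c_2 = 1$, or $c_3 = 1$, or $c_3 = c_2 \pm 1$.

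It remains to handle these three (overlapping) one-parameter families. For each, I would substitute into \eqref{fus-S3}, clear the offending denominator by expanding the singular factor together with an adjacent one and using the braid-type relation \eqref{ijk}, and obtain an explicit element $\Phi(0,c_2,c_3)\in\C S_3$ depending on the remaining free parameter. Then I impose the condition ``proportional to an idempotent'', i.e.\ $\Phi^2 = \kappa\,\Phi$ for some scalar $\kappa$, equivalently the minimal polynomial of (left multiplication by) $\Phi$ has the form $t(t-\kappa)$. Concretely this is a system of polynomial equations in the free parameter obtained by comparing coefficients of the six basis elements of $\C S_3$ in $\Phi^2$ versus $\kappa\Phi$; solving it produces finitely many values, and direct inspection matches them against the claimed list $\{(1,2),(1,-1),(-1,1),(-1,-2),(2,1),(-2,-1)\}$. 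The appearance of these six pairs is no accident: $(1,2)$ and $(2,1)$ are the content strings of the two standard tableaux of shape $(3)$ and $(1,1,1)$ (with $c_1=0$ fixed), while the remaining four are content strings of the two standard tableaux of shape $(2,1)$ under the two possible choices of $c_1=0$ position — so Theorem \ref{thm-idem-Sn} already guarantees these six are solutions, and the content of the proposition is the converse, that there are no others.

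The main obstacle is the bookkeeping in the boundary cases: when a factor is singular one must be careful that the product $\Phi(0,c_2,c_3)$ is genuinely well-defined (the limit exists and is independent of the order of evaluation, as guaranteed by the consecutive-evaluation prescription in \eqref{idem-Sn} only at content strings, so for a general boundary point one must verify finiteness by hand), and then the idempotency equation $\Phi^2=\kappa\Phi$ must be checked as an identity of rational functions in the one remaining parameter rather than just numerically. This is a routine but slightly delicate finite computation; once it is set up correctly the six solutions drop out, together with the verification that at all of them $\Phi(0,c_2,c_3)$ is indeed non-singular, which completes the proof.
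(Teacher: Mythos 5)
Your overall strategy---first rule out every pair $(c_2,c_3)$ for which $\Phi(0,c_2,c_3)$ is invertible, then impose $\Phi^2=\kappa\Phi$ only on the remaining one-parameter families---is sound, and it is genuinely more structured than the paper's proof, which attacks the problem head-on by equating to zero the six coefficients of $\bigl(\Phi(0,c_2,c_3)\bigr)^2-x\,\Phi(0,c_2,c_3)$ and solving the resulting polynomial system in the three unknowns $x,c_2,c_3$ (the details being skipped there as well). Your generic-case argument is correct: each factor $1-\frac{(i,j)}{u}$ is invertible for $u\neq 0,\pm1$, an invertible element proportional to an idempotent must be a nonzero scalar, and the coefficient of $(1,2)$ in $\Phi(0,c_2,c_3)$ equals $1/c_2\neq0$, so $\Phi$ is never scalar.

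There are, however, two concrete flaws in the execution. First, your list of remaining families is incomplete: the factor $1-\frac{(1,2)}{u_1-u_2}$ evaluated at $u_1=0$, $u_2=c_2$ equals $1+\frac{(1,2)}{c_2}$ and is non-invertible iff $c_2=\pm1$ (not only $c_2=1$), and likewise the second factor is non-invertible iff $c_3=\pm1$. As written, your case analysis never examines pairs with $c_2=-1$ or $c_3=-1$ lying off the other lines (e.g.\ $(c_2,c_3)=(-1,5)$ is neither covered by the generic argument nor by any of your listed families), so the ``only if'' direction is not established for them. Second, you conflate non-invertibility with singularity: at $c_2=\pm1$ the factor $1\pm(1,2)$ is a perfectly well-defined element of $\C S_3$, there is no denominator to clear, no limit to take, and Remark \ref{rem-sing} is irrelevant; the genuine poles of $\Phi(0,u_2,u_3)$ occur at $u_2=0$, $u_3=0$ and $u_2=u_3$, where the simultaneous evaluation in the statement is simply undefined (and none of the six listed pairs lies there). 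So the ``delicate'' bookkeeping you anticipate in the boundary cases does not arise; what remains is a plain finite computation, exactly as in the paper. Finally, a side remark in your plan is wrong: Theorem \ref{thm-idem-Sn} guarantees only the four pairs $(1,2),(1,-1),(-1,1),(-1,-2)$ coming from standard tableaux; the pairs $(2,1)$ and $(-2,-1)$ are precisely the ``non-standard'' evaluations highlighted in the paper, and their idempotency is not a consequence of the fusion formula---it must be verified by the same direct computation.
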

\begin{proof} To prove the Proposition, we solve by a direct analysis the system of 6 equations in $x,c_2,c_3$, obtained by equating to 0 the coefficient in front of each element of $S_3$ in the expression $\bigl(\Phi(0,c_2,c_3)\bigr)^2-x\Phi(0,c_2,c_3)$ (we skip the details).
\end{proof}
The first four evaluations found in the preceding proposition correspond to standard tableaux, namely, with the same notation as in (\ref{ET}),
$$E_{\fbox{\tiny{1}}\fbox{\tiny{2}}\fbox{\tiny{3}}}\!=\!\frac{1}{6}\Phi(0,1,2)\,,\ \ \,
E_{\!\!\!\begin{array}{l}\fbox{\tiny{1}}\fbox{\tiny{2}}\\[-0.3em] \fbox{\tiny{3}}\end{array}}\!\!\!=\!\frac{1}{3}\Phi(0,1,-1)\,,\ \ \,
E_{\!\!\!\begin{array}{l}\fbox{\tiny{1}}\fbox{\tiny{3}}\\[-0.3em] \fbox{\tiny{2}}\end{array}}\!\!\!=\!\frac{1}{3}\Phi(0,-1,1)\,,\ \ \,
E_{\!\!\!\begin{array}{l}\fbox{\tiny{1}}\\[-0.3em] \fbox{\tiny{2}}\\[-0.3em] \fbox{\tiny{3}}\end{array}}\!\!\!=\!\frac{1}{6}\Phi(0,-1,-2)\,.$$
For brevity, let $E_1:=E_{\fbox{\tiny{1}}\fbox{\tiny{2}}\fbox{\tiny{3}}}$ and $E_4:=E_{\!\!\!\begin{array}{l}\fbox{\tiny{1}}\\[-0.3em] \fbox{\tiny{2}}\\[-0.3em] \fbox{\tiny{3}}\end{array}}\!\!$. For the two ``non-standard'' evaluations of the fusion function obtained in the above proposition, one can verify that the following elements
\begin{equation}\label{ns-idem}E_2:=\frac{1}{3}\Phi(0,2,1)\ \ \ \text{and}\ \ \ E_3:=\frac{1}{3}\Phi(0,-2,-1)\end{equation}
are idempotents of $\C S_3$ such that
$$E_iE_j=E_jE_i=\delta_{i,j} E_i\ \ \ \text{for $i,j=1,2,3,4$\ ,}\ \ \ \ \text{and}\ \ \ \ E_1+E_2+E_3+E_4=1_{S_3}\ ,$$
where $1_{S_3}$ is the unit element of $\C S_3$.  Thus, $\{E_1,E_2,E_3,E_4\}$ is a complete system of pairwise orthogonal primitive idempotents of $\C S_3$, and $E_2$ (respectively, $E_3$) generates a minimal left ideal isomorphic, as an $\C S_3$-module, to the irreducible module corresponding to the partition $\lambda=(2,1)$.

By the same arguments as in the preceding subsection, these two non-standard evaluations (namely, with $(c_2,c_3)=(2,1)$ or $(-2,-1)$) lead to fused solutions of the Yang--Baxter equation acting on spaces $M_{(2,1)}^{U(\mathfrak{gl}_N)}\otimes M_{\lambda'}^{U(\mathfrak{gl}_N)}$, where $\lambda'$ is any partition, alternative to the ones obtained in Corollary \ref{coro-fin-Sn}.

\begin{remark}
The two non-standard evaluations found in Proposition \ref{prop-rem} correspond actually to classical contents of standard Young tableaux, associated to skew partitions instead of usual partitions. Namely,  the non-standard evaluations correspond to $\!\!\!\begin{array}{l}\phantom{\fbox{\scriptsize{3}}}\fbox{\scriptsize{2}}\\[-0.2em] \fbox{\scriptsize{1}}\fbox{\scriptsize{3}}\end{array}$ and $\!\!\!\begin{array}{l}\phantom{\fbox{\scriptsize{3}}}\fbox{\scriptsize{1}}\\[-0.2em] \fbox{\scriptsize{2}}\fbox{\scriptsize{3}}\end{array}$. 
\hfill$\triangle$
\end{remark}

\begin{remark}
As we have already explained above, not only the evaluations of the fusion function providing elements proportional to idempotents are of interest for the fusion procedure; the evaluations of the fusion function giving non-invertible elements of $\C S_n$ furnish proper invariant subspaces as well. For $n=3$, with the notation (\ref{fus-S3}) and $u_1$ evaluated to 0, obvious examples of such evaluations (which include the ones studied previously) are 
$$\{u_2=\pm1,\,u_3\neq 0,u_2\}\,,\ \ \ \{u_3=\pm1,\,u_2\neq 0,u_3\}\,,\ \ \ \{u_3=u_2\pm1,\,u_2,u_3\neq0\}\ .$$
Further, one can verify that the two following consecutive evaluations
$$\Phi(0,u_2,u_3)\bigr\vert_{u_2=\pm1}\bigr\vert_{u_3=0}$$
also give non-invertible elements of $\C S_3$ (we also mention that the evaluations described in this remark exhaust the evaluations of $\Phi(0,u_2,u_3)$ providing non-invertible elements of $\C S_3$).\hfill$\triangle$
\end{remark}

\section{Invariant subspaces and representations of $U(\mathfrak{gl}_{N|M})$}\label{sec-Sn-s}

We explain how, with the help of a ``super" analogue of the Schur--Weyl duality \cite{BeRe,Se}, the fusion formula for the symmetric group can also be used in the context of the fusion procedure for a generalization of the Yang solution. We obtain a family of fused solutions acting on irreducible representations of the general linear Lie superalgebra $\mathfrak{gl}_{N|M}$.

\subsection{Sign conventions for $\Z/2\Z$-graded vector spaces and algebras}\label{subsec-sign}

Let $V=V_{\ov{0}}\oplus V_{\ov{1}}$ be a $\Z / 2\Z$-graded vector space. A non-zero vector $x\in V$ is called homogeneous if $x\in V_{\ov{0}}$ or $x\in V_{\ov{1}}\,$. For a homogeneous vector $x\in V_{\ov{i}}$, $i=0,1$, we set $|x|:=\ov{i}$. As soon as the notation $|x|$ appears, it is understood that $x$ is a homogeneous element. We will also use the notation $(-1)^{\ov{i}}$ with the obvious meaning: $(-1)^{\ov{0}}=1$ and $(-1)^{\ov{1}}=-1$.

If $W$ is another $\Z / 2\Z$-graded vector space then the tensor product $V\otimes W$ is considered as $\Z / 2\Z$-graded as well with the grading given by $|x\otimes y|=|x|+|y|$.

Let $A=A_{\ov{0}}\oplus A_{\ov{1}}$ be a superalgebra (\emph{i.e.} an algebra whose underlying vector space is $\Z/2\Z$-graded and such that the multiplication satisfies $A_{\ov{i}}A_{\ov{j}}\subset A_{\ov{i}+\ov{j}}$). A $\Z/ 2\Z$-graded vector space $V$ is an $A$-module if $V$ is an $A$-module for the ungraded algebra structure on $A$ satisfying moreover $A_{\ov{i}}(V_{\ov{j}})\subset V_{\ov{i}+\ov{j}}$.

If $A$ and $B$ are two superalgebras, the $\Z/2\Z$-graded vector space $A\otimes B$ is a superalgebra with the product defined by $(a\otimes b)(a'\otimes b')=(-1)^{|b||a'|}aa'\otimes bb'$.

Finally, if $V$ is an $A$-module and $W$ is a $B$-module then $V\otimes W$ has a structure of an $A\otimes B$-module for the action given by $(a\otimes b)(v\otimes w)=(-1)^{|b||v|}a(v)\otimes b(w)$.

\subsection{Generalization of the Yang solution for a $\Z / 2\Z$-graded vector space.}
We fix a $\Z / 2\Z$-decomposition of the vector space $V$ as $V=V_{\ov{0}}\oplus V_{\ov{1}}$ and we set $N:=\dim(V_{\ov{0}})$ and $M:=\dim(V_{\ov{1}})\,$. We define an operator $\tP\in\End(V\otimes V)$ by:
\begin{equation}\label{def-tP}
\tP(x\otimes y):= (-1)^{|x||y|} x\otimes y\ ,
\end{equation}
and we let $R$ be the following function of $u\in\C$ with values in $\End(V\otimes V)$:
\begin{equation}\label{Yang-s}
R(u):=\tP\Pe-\frac{\Pe}{u}\ ,
\end{equation}
where $\Pe$ is the usual permutation operator on $V\otimes V$. It is straightforward to check that the function $R$ is a solution of the Yang--Baxter equation on $V\otimes V$:
\begin{equation}\label{YB-add-s}
R_{1,2}(u)R_{1,3}(u+v)R_{2,3}(v)=R_{2,3}(v)R_{1,3}(u+v)R_{1,2}(u)\ .
\end{equation}
Moreover, this solution satisfies the same unitarity condition (\ref{inv-R-Sn}) as the Yang solution, namely
\begin{equation}\label{inv-R-Sn-s}
R(u)R_{2,1}(-u)=\frac{u^2-1}{u^2}\cdot\textrm{Id}_{V^{\otimes 2}}\,. 
\end{equation}

\subsection{Schur--Weyl duality in the $\Z / 2\Z$-graded setting.}
Let $n>1$. We recall that the symmetric group $S_n$ on $n$ letters is isomorphic to the group generated by elements $s_1,\dots,s_{n-1}$ subject to the defining relations:
\begin{equation}\label{Sn}
\begin{array}{lll}
s_i^2=1 && \text{for $i=1,\dots,n-1$\,,}\\[0.3em]
s_is_{i+1}s_i=s_{i+1}s_is_{i+1} && \text{for $i=1,\dots,n-2$\,,}\\[0.3em]
s_is_j=s_js_i && \text{for $i,j=1,\dots,n-1$ such that $|i-j|>1$\,,}
\end{array}
\end{equation}
the isomorphism being given by $s_i\mapsto (i,i+1)$, $i=1,\dots,n-1$.

The following map from the set of generators $\{s_1,\dots,s_{n-1}\}$ to the set $\End(V^{\otimes n})$,
\begin{equation}\label{rho-Sn-s}s_i\mapsto \tP_{i,i+1}\ \ \ \ \ \text{for $i=1,\dots,n-1$,}\end{equation}
extends to an algebra homomorphism from $\C S_n$ to $\End(V^{\otimes n})$ (one easily checks that the relations (\ref{Sn}) are satisfied by the images of the generators). We denote by $\tilde{\rho}$ this representation of $\C S_n$.

Let $\mathfrak{gl}_{N|M}$ be the Lie superalgebra of endomorphisms of $V$ and let  $U(\mathfrak{gl}_{N|M})$ be its universal enveloping algebra. We recall that $U(\mathfrak{gl}_{N|M})$ is a superalgebra and that its $\Z/2\Z$-grading is defined from the following $\Z/2\Z$-grading on $\mathfrak{gl}_{N|M}$: a non-zero element $g\in\mathfrak{gl}_{N|M}$ is homogeneous of degree $\ov{j}$ if $g(V_{\ov{i}})\subset V_{\ov{i}+\ov{j}}\,$, $i=0,1$.

The map, with values in $\End(V^{\otimes n})$, defined on elements $g\in\mathfrak{gl}_{N|M}$ by
$$g\mapsto g\otimes \textrm{Id} \otimes\dots\otimes \textrm{Id}+\textrm{Id} \otimes g\otimes \textrm{Id}\dots\otimes \textrm{Id}+\cdots\dots+\textrm{Id} \otimes\dots\otimes \textrm{Id}\otimes g\ .$$ 
extends to a superalgebras homomorphism from $U(\mathfrak{gl}_{N|M})$ to $U(\mathfrak{gl}_{N|M})^{\otimes n}$, which provides then, by composition with the natural representation of $U(\mathfrak{gl}_{N|M})^{\otimes n}$ on $V^{\otimes n}$, a representation of the superalgebra $U(\mathfrak{gl}_{N|M})$ on $V^{\otimes n}$. We denote by $\tilde{\chi}$ this representation (we recall that we use the sign convention of Section \ref{subsec-sign} for the action of $U(\mathfrak{gl}_{N|M})^{\otimes n}$ on $V^{\otimes n}$).

We recall the generalization of the Schur--Weyl duality, which holds between the symmetric group $S_n$ and the algebra $U(\mathfrak{gl}_{N|M})$  \cite{BeRe, Se}.
\begin{theorem}\label{thm-SW-Sn-s}
\begin{enumerate}
\item The subalgebra $\tilde{\rho}\bigl(\C S_n\bigr)$ of $\text{End}(V^{\otimes n})$ is the centraliser of $\tilde{\chi}\bigl(U(\mathfrak{gl}_{N|M}))$.
\item The subalgebra $\tilde{\chi}\bigl(U(\mathfrak{gl}_{N|M}))$ of $\text{End}(V^{\otimes n})$ is the centraliser of $\tilde{\rho}\bigl(\C S_n\bigr)$.
\item As an $\left(U(\mathfrak{gl}_{N|M})\!\otimes\!\C S_n\right)$-module defined by $\tilde{\chi}$ and $\tilde{\rho}$, the space $V^{\otimes n}$ decomposes as:
\begin{equation}\label{SW-Sn-for-s}
V^{\otimes n}\cong \bigoplus_{\lambda} M_{\lambda}^{U(\mathfrak{gl}_{N|M})}\otimes M_{\lambda}^{\C S_n}\ ,
\end{equation}
where $\lambda=(\lambda_1,\dots,\lambda_l)$ runs over the set of partitions such that $|\lambda|=n$ and $\lambda_j\leq M$ if $j>N$, and where the modules $M_{\lambda}^{U(\mathfrak{gl}_{N|M})}$ are irreducible $U(\mathfrak{gl}_{N|M})$-modules.
\end{enumerate}
\end{theorem}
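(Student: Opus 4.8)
The plan is to derive all three assertions from the double centraliser theorem, using that $\C S_n$ is semisimple and localising the genuinely new work in the description of the centraliser of the symmetric-group action. First I would check that the two actions commute. It is enough to verify that $\tilde{\rho}(s_i)=\tP_{i,i+1}$ commutes with $\tilde{\chi}(g)$ for each $i=1,\dots,n-1$ and each homogeneous $g\in\mathfrak{gl}_{N|M}$; since both operators act locally this reduces, after using the commutation relations, to the case $n=2$, i.e.\ to the statement that $\tP=\tI_{V^{\otimes 2}}\Pe$ is a morphism of $U(\mathfrak{gl}_{N|M})$-modules $V\otimes V\to V\otimes V$, the module structure on $V\otimes V$ being the (super) one implicit in the definition of $\tilde{\chi}$. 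This is a direct computation on homogeneous basis vectors $x\otimes y$; the only thing requiring attention is the bookkeeping of the signs $(-1)^{|x||y|}$ and $(-1)^{|g|(|x|+|y|)}$. Since the braid relations (\ref{Sn}) are already known to hold for the images $\tP_{i,i+1}$, the commuting property then propagates to all of $V^{\otimes n}$.

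The central point is assertion (2), namely the identity
\[\tilde{\chi}\bigl(U(\mathfrak{gl}_{N|M})\bigr)=\End_{\C S_n}(V^{\otimes n})\,.\]
To prove it I would describe $\End_{\C S_n}(V^{\otimes n})$ explicitly: under $\End(V^{\otimes n})\cong\End(V)^{\otimes n}$ it is the subspace of tensors fixed by the sign-twisted permutation action of $S_n$ on $\End(V)^{\otimes n}$, hence is spanned by the ``super-symmetrisations'' $\sum_{\pi\in S_n}(\pm1)\,X_{\pi(1)}\otimes\cdots\otimes X_{\pi(n)}$. By polarisation over $\C$ this span is already spanned by the ``super $n$-th powers'' obtained as derivatives of $g\mapsto\tilde{\chi}(\exp g)$ along $\mathfrak{gl}_{N|M}$, each of which lies in $\tilde{\chi}(U(\mathfrak{gl}_{N|M}))$; conversely $\tilde{\chi}(U(\mathfrak{gl}_{N|M}))$ is a linear subspace, hence Zariski closed, so a density argument for the image of the even part of the enveloping superalgebra yields the opposite inclusion. (This is the content of the theorems of Sergeev and of Berele--Regev, so one may instead simply invoke \cite{Se,BeRe}.) Granting (2), set $A:=\tilde{\rho}(\C S_n)$, a semisimple algebra as a quotient of $\C S_n$, and $B:=\tilde{\chi}(U(\mathfrak{gl}_{N|M}))=\End_A(V^{\otimes n})$; the double centraliser theorem now gives that $B$ is semisimple, that $A$ coincides with the centraliser of $B$ in $\End(V^{\otimes n})$ --- which is assertion (1) --- and that, as an $A\otimes B$-module, $V^{\otimes n}\cong\bigoplus_\lambda M_\lambda^{\C S_n}\otimes U_\lambda$ with $U_\lambda:=\mathrm{Hom}_{\C S_n}(M_\lambda^{\C S_n},V^{\otimes n})$, the sum ranging over those $\lambda\vdash n$ with $U_\lambda\neq 0$, each such $U_\lambda$ being an irreducible $B$-module and the distinct ones exhausting the irreducible $B$-modules.

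It remains to identify the summands, which is assertion (3). Each non-zero $U_\lambda$ is an irreducible module over $\tilde{\chi}(U(\mathfrak{gl}_{N|M}))$; to see that it is the highest-weight module $M_\lambda^{U(\mathfrak{gl}_{N|M})}$ one produces inside $U_\lambda$ an explicit highest-weight vector of weight $\lambda$ (apply a Young symmetriser for $\lambda$ to a suitable basis tensor of $V^{\otimes n}$) and uses that the covariant representations of $\mathfrak{gl}_{N|M}$ are irreducible and determined by their highest weight. Finally, to decide for which $\lambda$ one has $U_\lambda\neq0$, one computes the character of $U_\lambda$ with respect to the Cartan subalgebra of $\mathfrak{gl}_{N|M}$: it equals the hook Schur function $s_\lambda(x_1,\dots,x_N\,/\,y_1,\dots,y_M)$, which is non-zero exactly when $\lambda$ fits into the $(N,M)$-hook, i.e.\ $\lambda_j\leq M$ for all $j>N$; equivalently, a super-semistandard filling of $\lambda$ by $N$ ``even'' and $M$ ``odd'' indices exists precisely under that condition, and the super Cauchy identity $\sum_\lambda f^\lambda\cdot s_\lambda(1^N/1^M)=(N+M)^n$ confirms that these summands account for the whole of $V^{\otimes n}$. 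I expect the main obstacle to be assertion (2): making the polarisation-and-density argument rigorous in the $\Z/2\Z$-graded setting, with all sign conventions handled consistently, is the only step that is neither a formal consequence of the double centraliser theorem nor a routine computation; isolating the hook condition in (3) is a second, milder, point requiring some care.
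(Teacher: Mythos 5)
The paper does not prove this theorem: it is stated purely as a recollection of the super Schur--Weyl duality and is imported wholesale from Berele--Regev and Sergeev (the references \cite{BeRe,Se} cited just above the statement). So there is no ``paper's proof'' to compare against; what you have written is a reconstruction of the standard argument from those references, and as an outline it is essentially correct: commutation of the two actions reduces to the $n=2$ sign check, the double centraliser theorem for the semisimple algebra $\tilde{\rho}(\C S_n)$ delivers (1) and the abstract form of (3) once (2) is known, and the identification of the multiplicity spaces and of the admissible shapes is the hook Schur function computation of Berele--Regev.

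The one place where your sketch is not yet a proof is exactly the place you flag: the polarisation-and-density argument for assertion (2). As literally written it has a real gap in the super setting. The expression $\tilde{\chi}(\exp g)$ and its derivatives only make sense for $g$ in the even part $\mathfrak{gl}_{N|M,\ov{0}}=\mathfrak{gl}_N\oplus\mathfrak{gl}_M$, and the super-symmetrisations of tensors $X_1\otimes\cdots\otimes X_n$ with some $X_i$ odd are \emph{not} obtained by polarising $n$-th powers of even elements --- indeed for odd $X$ the super-symmetric square of $X$ already vanishes, so the even part alone does not exhaust the $S_n$-invariants of $\End(V)^{\otimes n}$ under the Koszul-signed permutation action. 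Repairing this requires either adjoining Grassmann (odd) parameters before polarising, or Sergeev's actual weight-space argument. Since you explicitly offer the alternative of simply invoking \cite{Se,BeRe} for this step --- which is precisely what the paper does for the whole theorem --- the proposal is acceptable as a citation-backed outline, but the density argument should not be presented as self-contained in its current form.
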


\paragraph{\textbf{Fused solutions on $M_{\lambda}^{U(\mathfrak{gl}_{N|M})}\otimes M_{\lambda'}^{U(\mathfrak{gl}_{N|M})}$.}}

As a consequence of Formulas (\ref{Yang-s}) and (\ref{rho-Sn-s}), we have:
$$\tilde{\rho}\bigl(s_i -\frac{1}{u}\bigr)=R_{i,i+1}(u)\Pe=\hR_{i,i+1}(u)\ \ \ \ \ \text{for $i=1,\dots,n-1$\,.} $$
Therefore, defining the following rational function in the complex variables $u_1,\dots,u_n$ with values in $\C S_n$:
$$\tilde{\Phi}(u_1,\dots,u_n):=\prod_{i=1,\dots,n-1}^{\rightarrow}(s_i-\frac{1}{c_1-c_{i+1}})\dots(s_2-\frac{1}{c_{i-1}-c_{i+1}})(s_1-\frac{1}{c_i-c_{i+1}})\ ,$$
we have, recalling the definition (\ref{hF}),
\begin{equation}\label{trho-hF}\tilde{\rho}\bigl(\tilde{\Phi}(u_1,\dots,u_n)\bigr)=\hF(u_1,\dots,u_n)\ .\end{equation}
Moreover, the following Lemma relates the function $\tilde{\Phi}$ with the fusion function $\Phi$, defined by (\ref{phi-sym}), of the symmetric group.
\begin{lemma}\label{phi-tphi}
We have (where $w_n$ is the longest element of $S_n$):
$$\tilde{\Phi}(u_1,\dots,u_n)=\Phi(u_1,\dots,u_n)\,w_n\ .$$
\end{lemma}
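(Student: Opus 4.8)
The identity asserted is the shadow in the group algebra $\C S_n$ of Lemma \ref{lem-F-alt}: there $\hR(u)=R(u)\Pe$ plays the role of the ``Baxterized'' transposition $(i,j)-\frac1u\in\C S_n$ (note $\bigl(1-\frac{(i,j)}{u}\bigr)(i,j)=(i,j)-\frac1u$), while $\Pe_{w_n}$ plays the role of $w_n$. So the plan is to deduce the statement from Lemma \ref{lem-F-alt} by evaluating both sides of the proposed equality in a faithful representation of $\C S_n$.

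Concretely, I would fix an integer $N\geq n$, take $V$ of dimension $N$, let $R(u):=\textrm{Id}_{V^{\otimes 2}}-\frac{\Pe}{u}$ be the Yang solution, and let $\rho$ be the representation of $\C S_n$ on $V^{\otimes n}$ with $\rho(s_i)=\Pe_{i,i+1}$, as in (\ref{SW-Sn}). Since $N\geq n$ forces $\ell(\lambda)\leq N$ for every partition $\lambda$ of $n$, all irreducible $\C S_n$-modules occur in the decomposition (\ref{SW-Sn-for}) of Theorem \ref{thm-SW-Sn}, so $\rho$ is faithful; it stays faithful after extension of scalars to $\C(u_1,\dots,u_n)$ (equivalently, it suffices to test the desired identity of $\C S_n$-valued rational functions at a point where all factors are non-singular). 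For the Yang solution one computes $\rho\bigl(1-\frac{(i,j)}{u}\bigr)=\textrm{Id}_{V^{\otimes n}}-\frac{\Pe_{i,j}}{u}=R_{i,j}(u)$, hence $\rho\bigl(\Phi(u_1,\dots,u_n)\bigr)=F(\bc)$ with $\bc=(u_1,\dots,u_n)$, by (\ref{F-Sn}); next, $\hR(u)=R(u)\Pe=\Pe-\frac1u\textrm{Id}_{V^{\otimes2}}$ gives $\rho\bigl(s_i-\frac1u\bigr)=\Pe_{i,i+1}-\frac1u\textrm{Id}_{V^{\otimes n}}=\hR_{i,i+1}(u)$, so applying $\rho$ term by term to the definition of $\tilde\Phi$ yields exactly $\hF(\bc)$ as in (\ref{hF}); and $\rho(w_n)=\Pe_{w_n}$. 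It then remains only to invoke Lemma \ref{lem-F-alt}, $F(\bc)=\hF(\bc)\Pe_{w_n}$, which (as $w_n$ is an involution, so $\Pe_{w_n}^2=\textrm{Id}_{V^{\otimes n}}$) is equivalent to $\hF(\bc)=F(\bc)\Pe_{w_n}$. Combining, $\rho\bigl(\tilde\Phi(u_1,\dots,u_n)\bigr)=\hF(\bc)=F(\bc)\Pe_{w_n}=\rho\bigl(\Phi(u_1,\dots,u_n)\bigr)\rho(w_n)=\rho\bigl(\Phi(u_1,\dots,u_n)\,w_n\bigr)$, and faithfulness of $\rho$ gives $\tilde\Phi(u_1,\dots,u_n)=\Phi(u_1,\dots,u_n)\,w_n$.

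There is no real obstacle here; the only two points needing care are the faithfulness statement (and its persistence after passing to rational coefficients) and, for a reader who prefers a purely algebraic argument, the observation that the inductive computation proving Lemma \ref{lem-F-alt} uses only the Yang--Baxter relation (\ref{ijk}), the commutation of disjoint transpositions, the relation $\bigl((i,j)-\tfrac1u\bigr)(i,j)=1-\tfrac{(i,j)}{u}$, the conjugation identity $(k,k+1)(i,j)=(s_k(i),s_k(j))(k,k+1)$, and identities among permutations in $S_n$ such as (\ref{wn}) and $(n\!-\!1,n)(n\!-\!2,n)\cdots(1,n)=(1,2)(2,3)\cdots(n\!-\!1,n)$ — all valid already in $\C S_n$ — so this computation may be repeated verbatim in $\C S_n$ to prove $\tilde\Phi(u_1,\dots,u_n)=\Phi(u_1,\dots,u_n)\,w_n$ without any appeal to a representation at all.
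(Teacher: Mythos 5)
Your argument is correct, but your primary route is genuinely different from the paper's, which is precisely the purely algebraic observation you relegate to your closing sentence: the paper notes that the inductive proof of Lemma \ref{lem-F-alt} uses only the Yang--Baxter relation (\ref{ijk}), commutation of factors with disjoint supports, and the conjugation identity $\pi\cdot(i,j)=(\pi(i),\pi(j))\cdot\pi$, all of which hold verbatim in $\C S_n$, so one may repeat that proof with $R_{i,j}(u)$ replaced by $1-\frac{(i,j)}{u}$ and $\Pe_{\pi}$ by $\pi$, and conclude using $w_n^2=1$. Your main argument instead evaluates both sides in the permutation representation $\rho$ on $V^{\otimes n}$ with $\dim V=N\geq n$, identifies $\rho(\Phi)=F(\bc)$ and $\rho(\tilde{\Phi})=\hF(\bc)$, invokes Lemma \ref{lem-F-alt} in the form $\hF(\bc)=F(\bc)\Pe_{w_n}$, and concludes by faithfulness of $\rho$ (correctly justified via Theorem \ref{thm-SW-Sn}, since every partition of $n$ has length at most $n\leq N$, and correctly extended to $\C(u_1,\dots,u_n)$-coefficients). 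This is sound and non-circular, since Theorem \ref{thm-SW-Sn} is quoted from the literature and Lemma \ref{lem-F-alt} is proved for arbitrary $R(u)$; what it buys is that no computation needs to be redone, at the cost of importing Schur--Weyl duality to prove an identity that is purely combinatorial, whereas the paper's route is lighter and self-contained. One small point of care in your representation-theoretic version: with the paper's composition convention, formula (\ref{P-pi}) as literally written makes $\pi\mapsto\Pe_{\pi}$ an antihomomorphism, so a fully careful version of your argument should fix the convention so that $\rho$ sends ordered products to ordered products in the same order (this is an imprecision inherited from the paper, and it is harmless here — for $\rho(\Phi)$ the two orderings even agree by Lemma \ref{lem-F}), but it deserves a sentence if you lean on faithfulness rather than on the abstract computation.
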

\begin{proof}
Recall that the relation (\ref{ijk}) holds for the functions $(1-\frac{(i,j)}{u})$ with values in $\C S_n$. As moreover $\pi\cdot(i,j)=(\pi(i),\pi(j))\cdot\pi$ for $\pi\in S_n$, the lemma is a consequence of Lemma \ref{lem-F-alt} together with the fact that $w_n^2=1$ (one applies Lemma \ref{lem-F-alt} with $R_{i,j}(u)$ replaced by $(1-\frac{(i,j)}{u})$ and $\Pe_{\pi}$ replaced by $\pi$, for $\pi\in S_n$).
\end{proof}
Thus, as $w_n$ is invertible, the fusion formula for the symmetric group in Theorem \ref{thm-idem-Sn} can be used, as in Subsection \ref{subsec-conc-Sn}, to analyze the image of $\tilde{\rho}\bigl(\tilde{\Phi}(u_1,\dots,u_n)\bigr)$ in $V^{\otimes n}$ when $u_1,\dots,u_n$ are evaluated to classical contents of standard Young tableaux as in (\ref{ET}). 
 
Using Corollary \ref{cor-hF}, Formula (\ref{trho-hF}) and the Schur--Weyl duality stated in Theorem \ref{thm-SW-Sn-s}, we reproduce the same reasoning as in Subsection \ref{subsec-conc-Sn}, to obtain the generalization of the Corollary \ref{coro-fin-Sn} for the solution (\ref{Yang-s}). Namely, let $\mathcal{T}$, $\mathcal{T}'$ be two standard Young tableaux such that $\sh_{\cT}=(\lambda_1,\dots,\lambda_l)$ and $\sh_{\cT'}=(\lambda'_1,\dots,\lambda'_l)$, with $|\sh_{\cT}|=n$, $|\sh_{\cT'}|=n'$ and $\lambda_j,\lambda'_j\leq M$ if $j>N$. We obtain, by restriction of the fused operators to their invariant subspaces, a function $R^{\text{res}}_{\mathcal{T},\mathcal{T}'}$ with the following properties. 

\begin{corollary}\label{coro-fin-Sn-s}
\begin{enumerate}
\item The set of functions $\{R^{\text{res}}_{\cT,\cT'}\}$, where $\mathcal{T}$, $\mathcal{T}'$ are standard Young tableaux as above, forms a family of solutions of the Yang--Baxter equation, and $R^{\text{res}}_{\cT,\cT'}(u)$ is an endomorphism of a space isomorphic to $M_{\sh_{\cT}}^{U(\mathfrak{gl}_{N|M})}\otimes M_{\sh_{\cT'}}^{U(\mathfrak{gl}_{N|M})}$.
\item For four standard Young tableaux $\mathcal{T}$, $\widetilde{\mathcal{T}}$,  $\mathcal{T}'$ and $\widetilde{\mathcal{T}}'$ as above such that $\sh_{\phantom{\widetilde{T}}\!\!\!\!\cT}\!=\sh_{\widetilde{\cT}}$ and $\sh_{\phantom{\widetilde{T}}\!\!\!\!\cT'}\!=\sh_{\widetilde{\cT}'}$, the endomorphisms $R^{\text{res}}_{\cT,\cT'}(u)$ and $R^{\text{res}}_{\widetilde{\cT},\widetilde{\cT}'}(u)$ coincide up to a change of basis.
\end{enumerate}
\end{corollary}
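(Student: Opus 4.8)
The plan is to reproduce, almost verbatim, the argument of Subsection \ref{subsec-conc-Sn}, replacing the representation $\rho$ of $\C S_n$ by the ``super'' representation $\tilde{\rho}$ and the classical Schur--Weyl duality of Theorem \ref{thm-SW-Sn} by its graded counterpart, Theorem \ref{thm-SW-Sn-s}. First I would fix two standard Young tableaux $\cT$, $\cT'$ as in the statement, with $\sh_{\cT}=\lambda$, $\sh_{\cT'}=\lambda'$, $|\lambda|=n$, $|\lambda'|=n'$ and $\lambda_j,\lambda'_j\leq M$ for $j>N$, and set $\bc^{\cT}:=(\cc(\cT|1),\dots,\cc(\cT|n))$ and $\un{\bc}^{\cT'}:=(\cc(\cT'|1),\dots,\cc(\cT'|n'))$. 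The whole point is to identify $W_{\cT}:=W_{\bc^{\cT}}\subset V^{\otimes n}$, once it is known to be well defined. By Corollary \ref{cor-hF}, $W_{\cT}$ is the image of $\hF(\bc^{\cT})$, and by Formula (\ref{trho-hF}) the latter equals $\tilde{\rho}$ applied to $\tilde{\Phi}$ evaluated consecutively at $u_i=\cc(\cT|i)$. By Lemma \ref{phi-tphi} this is $\tilde{\rho}\bigl(\Phi(u_1,\dots,u_n)\,w_n\bigr)$ evaluated at the classical contents of $\cT$; Theorem \ref{thm-idem-Sn} then guarantees that $\Phi$ is non-singular for these evaluations and that the resulting element of $\C S_n$ is $f(\lambda)^{-1}E_{\cT}$ with $E_{\cT}$ a primitive idempotent generating a minimal left ideal isomorphic to $M_{\lambda}^{\C S_n}$. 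Since $w_n$ is an invertible element of $\C S_n$, it follows that $\hF(\bc^{\cT})$ (hence $F(\bc^{\cT})$) is non-singular and that $W_{\cT}$ coincides with the image in $V^{\otimes n}$ of $\tilde{\rho}(E_{\cT})$.

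Next I would run the graded Schur--Weyl argument. Because $\tilde{\chi}\bigl(U(\mathfrak{gl}_{N|M})\bigr)$ is the centraliser of $\tilde{\rho}(\C S_n)$ (Theorem \ref{thm-SW-Sn-s}(ii)), the image of $\tilde{\rho}(E_{\cT})$ is a $U(\mathfrak{gl}_{N|M})$-submodule of $V^{\otimes n}$; using the decomposition (\ref{SW-Sn-for-s}) and the fact that $E_{\cT}$ acts on $M_{\lambda}^{\C S_n}$ as a rank-one projector and annihilates every $M_{\lambda'}^{\C S_n}$ with $\lambda'\neq\lambda$, this image is exactly a copy of the irreducible module $M_{\lambda}^{U(\mathfrak{gl}_{N|M})}$ — the graded analogue of Theorem \ref{thm-conc-Sn}. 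The same holds for $W_{\cT'}\cong M_{\lambda'}^{U(\mathfrak{gl}_{N|M})}$. Combining this with Theorems \ref{thm-fus-R} and \ref{thm-inv-sub} applied to the solution (\ref{Yang-s}), the fused operators $R_{\bc^{\cT},\un{\bc}^{\cT'}}(u)$ restrict to endomorphisms $R^{\text{res}}_{\cT,\cT'}(u)$ of $W_{\cT}\otimes W_{\cT'}\cong M_{\lambda}^{U(\mathfrak{gl}_{N|M})}\otimes M_{\lambda'}^{U(\mathfrak{gl}_{N|M})}$ which together satisfy the Yang--Baxter equation, proving item (i).

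For item (ii), the decisive observation is that the generalized Yang solution (\ref{Yang-s}) satisfies the \emph{same} unitarity relation (\ref{inv-R-Sn-s}) as the ordinary Yang solution, i.e.\ condition (\ref{inv-R}) with $\gamma(u)=(u^2-1)/u^2$. Hence, for $k\in\{1,\dots,n-1\}$, the transposition $s_k$ is admissible (Definition \ref{def-adm}) for $R(u)$ and $\bc^{\cT}$ precisely when $\gamma\bigl(\cc(\cT|k)-\cc(\cT|k+1)\bigr)\neq0$, i.e.\ when $\cc(\cT|k+1)\neq\cc(\cT|k)\pm1$, which is exactly the condition that the tableau $\cT^{(s_k)}$ obtained from $\cT$ by exchanging the entries $k$ and $k+1$ is again standard. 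Since any two standard Young tableaux of a fixed shape are joined by a chain of such admissible transpositions, iterating Proposition \ref{prop-inv} gives that $R^{\text{res}}_{\cT,\cT'}(u)$ and $R^{\text{res}}_{\widetilde{\cT},\cT'}(u)$ coincide up to a change of basis whenever $\sh_{\cT}=\sh_{\widetilde{\cT}}$; an entirely parallel argument for the underlined parameters (permuting $\un{\bc}$ via the alternative formula (\ref{def-fus-R2}) and the analogue of Proposition \ref{prop-inv} for the second tensor factor) then handles the passage from $\cT'$ to $\widetilde{\cT}'$, yielding the full statement.

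The step I expect to be the main obstacle is the first one: controlling the non-singularity of the consecutive evaluations of $\hF$ and the passage through the longest element $w_n$. The potential poles of $\hF(\bc^{\cT})$ are cancelled only by the non-obvious content of Theorem \ref{thm-idem-Sn} (cf.\ Remark \ref{rem-sing}), and one must observe that right multiplication by the invertible but non-central element $w_n$ changes the operator $\tilde{\rho}(E_{\cT})$ but not its image, so that $W_{\cT}$ is still well defined and equals $\textrm{Im}\,\tilde{\rho}(E_{\cT})$. Everything else is a faithful transcription of Subsection \ref{subsec-conc-Sn} into the $\Z/2\Z$-graded setting.
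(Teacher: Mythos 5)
Your proposal is correct and follows essentially the same route as the paper, which explicitly reduces the corollary to a repetition of the argument of Subsection \ref{subsec-conc-Sn} using Corollary \ref{cor-hF}, Formula (\ref{trho-hF}), Lemma \ref{phi-tphi} and the graded Schur--Weyl duality of Theorem \ref{thm-SW-Sn-s}, with the unitarity condition (\ref{inv-R-Sn-s}) and Proposition \ref{prop-inv} handling item (ii). You have merely spelled out the details (non-singularity via Theorem \ref{thm-idem-Sn}, invertibility of $w_n$ not affecting the image) that the paper leaves implicit.
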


\begin{remark}
Assume that $M>0$. Let $n=2$ and define 
$$\tilde{R}(u):=\tilde{\rho}(1-\frac{(1,2)}{u})=\textrm{Id}_{V^{\otimes 2}}-\frac{\tP}{u}\in\End(V^{\otimes 2})\ .$$
Note that, contrary to the situation $M=0$, the function $\tilde{R}$ is not a solution of the Yang--Baxter equation. This does not contradict the relation (\ref{ijk}) because if $n>2$ and $|i-j|>1$, we have $\tilde{R}_{i,j}(u)\neq \tilde{\rho}(1-\frac{(i,j)}{u})$. For example, we have $\tilde{\rho}(1-\frac{(1,3)}{u})=\tP_{2,3}\tilde{R}_{1,2}(u)\tP_{2,3}$ which is different from $\tilde{R}_{1,3}(u)=\Pe_{2,3}\tilde{R}_{1,2}(u)\Pe_{2,3}$.

It follows from (\ref{rho-Sn-s}) and (\ref{ijk}) that the function $\tilde{R}(u)$ satisfies, instead of (\ref{YB-add-s}), a ``braided" Yang--Baxter equation:
\begin{equation}\label{br-YB}
\tilde{R}_{1,2}(u)\tilde{R}_{1,\tilde{3}}(u+v)\tilde{R}_{\tilde{2},\tilde{3}}(v)=\tilde{R}_{\tilde{2},\tilde{3}}(v)\tilde{R}_{1,\tilde{3}}(u+v)\tilde{R}_{1,2}(u)\ ,
\end{equation}
with the braiding defined by the operator $\tP$, namely with $\tilde{R}_{1,\tilde{3}}(u):=\tP_{2,3}\tilde{R}_{1,2}(u)\tP_{2,3}$ and $\tilde{R}_{\tilde{2},\tilde{3}}(u):=\tP_{1,2}\tilde{R}_{1,\tilde{3}}(u)\tP_{1,2}$ (note that in our particular situation, we have $\tilde{R}_{\tilde{2},\tilde{3}}(u)=\tilde{R}_{2,3}(u)$).

Alternatively, setting $\hR(u)=R(u)\Pe=\tP-\displaystyle\frac{\textrm{Id}_{V^{\otimes 2}}}{u}$, the Yang--Baxter equation (\ref{YB-add-s}) is equivalent to 
\begin{equation}\label{YB-hat}
\hR_{1,2}(u)\hR_{2,3}(u+v)\hR_{1,2}(v)=\hR_{2,3}(v)\hR_{1,2}(u+v)\hR_{2,3}(u)\ .
\end{equation}
Then one can think as this one as the fundamental equation and, defining $\tilde{R}(u)=\hR(u)\tP$, derive (\ref{br-YB}) from (\ref{YB-hat}) (using again that here $\tilde{R}_{\tilde{2},\tilde{3}}(u)=\tP_{1,2}\tP_{2,3}\tilde{R}_{1,2}(u)\tP_{2,3}\tP_{1,2}=\tilde{R}_{2,3}(u)$).
\hfill$\triangle$
\end{remark}

\section{Invariant subspaces and representations of $U_q(\mathfrak{gl}_N)$}\label{sec-Hn}

In this last Section, we consider another class of examples of solutions of the Yang--Baxter equation. They are standard deformations of the solutions considered in Sections \ref{sec-Sn}-\ref{sec-Sn-s}. We present the generalization of the construction in Sections \ref{sec-Sn}-\ref{sec-Sn-s}, which leads here to a family of fused solutions of the Yang--Baxter equation acting on irreducible representations of $U_q(\mathfrak{gl}_{N|M})$ (the standard deformation of $U(\mathfrak{gl}_{N|M})$). We treat directly the general situation of a $\Z/2\Z$-graded vector space (the ungraded situation $M=0$, corresponding to the quantum group $U_q(\mathfrak{gl}_N)$, is covered as a particular case).

\subsection{Deformation of the Yang solution and its $\Z/2\Z$-graded analogue}\label{subsec-def-Yang}

As in Section \ref{sec-Sn-s}, we fix a $\Z / 2\Z$-decomposition of the vector space $V$ as $V=V_{\ov{0}}\oplus V_{\ov{1}}$ and we set $N:=\dim(V_{\ov{0}})$ and $M:=\dim(V_{\ov{1}})\,$. We use the notation introduced in Section \ref{subsec-sign}.

Let $(e_i)_{i=1,\dots,N+M}$ be a basis of the vector space $V$, such that $(e_1,\dots,e_N)$ is a basis of the subspace $V_{\ov{0}}$ and $(e_{N+1},\dots,e_{N+M})$ is a basis of the subspace $V_{\ov{1}}$.

Let $q$ be a non-zero complex number and let $\hR\in\End(V\otimes V)$ be defined by, for $i,j=1,\dots,N+M$,
\begin{equation}\label{hR-Hn-s}
\hR(e_i\otimes e_j):=\left\{\begin{array}{ll}
\displaystyle\frac{(-1)^{|e_i|}(q+q^{-1})+(q-q^{-1})}{2}\,e_i\otimes e_j\ \  & \text{if $i=j$,}\\[0.8em]
(-1)^{|e_i||e_j|}\,e_j\otimes e_i+(q-q^{-1})\,e_i\otimes e_j & \text{if $i<j$,}\\[0.4em]
(-1)^{|e_i||e_j|}\,e_j\otimes e_i & \text{if $i>j$.}
\end{array}\right.
\end{equation}

It is known \cite{Mi,Moon}, and it can be directly checked, that $\hR$ satisfies the quadratic relation
\begin{equation}\label{quadr-R}
\hR^2-(q-q^{-1})\hR-\textrm{Id}_{V^{\otimes 2}}=0\ ,
\end{equation}
and verifies as well, on the space $V\otimes V\otimes V$,
\begin{equation}\label{braid-R}
\hR_{1,2}\hR_{2,3}\hR_{1,2}=\hR_{2,3}\hR_{1,2}\hR_{2,3}\ .
\end{equation}
Relations (\ref{quadr-R}) and (\ref{braid-R}) imply by a direct calculation that the function of $\alpha\in\C$, with values in $\End(V\otimes V)$, given by:
\begin{equation}\label{hRa-Hn}
\hR(\alpha):=\hR+(q-q^{-1})\frac{\textrm{Id}_{V^{\otimes 2}}}{\alpha^{-1}-1}\ ,
\end{equation}
satisfies the equation (operators act on $V\otimes V\otimes V$)
$$\hR_{1,2}(\alpha)\hR_{2,3}(\alpha \beta)\hR_{1,2}(\beta)=\hR_{2,3}(\beta)\hR_{1,2}(\alpha \beta)\hR_{2,3}(\alpha)\ .$$
As a direct consequence, the function $R$ given by:
\begin{equation}\label{Ra-Hn}
R(\alpha):=\hR(\alpha)\Pe\ ,
\end{equation}
is a solution, on $V\otimes V$, of the Yang--Baxter equation with multiplicative spectral parameters:
$$R_{1,2}(\alpha)R_{1,3}(\alpha \beta)R_{2,3}(\beta)=R_{2,3}(\beta)R_{1,3}(\alpha \beta)R_{1,2}(\alpha)\ .$$

\begin{remark}\label{rem-def}
The solution $R(\alpha)$ is a deformation of the solution (\ref{Yang-s}) (and in particular of the Yang solution (\ref{Yang2}) if $M=0$) in the following sense. Consider $q$ as a variable in $\C\backslash\{0\}$ and set $\alpha=q^{2u}$. Since $\hR\bigr\vert_{q=1}=\tP$, we obtain
$$\hR(q^{2u})\bigr\vert_{q=1}=\tP-\frac{\textrm{Id}_{V^{\otimes 2}}}{u}\ \ \ \ \ \text{and}\ \ \ \ \ R(q^{2u})\bigr\vert_{q=1}=\tP\Pe-\frac{\Pe}{u}\ .$$
\vskip -0.6cm\hfill$\triangle$
\end{remark}

\begin{example}\label{ex-R-Hn}
In this example, let $N+M=2$. We write the matrix of the endomorphism $R(\alpha)$ in the basis $\{e_1\otimes e_1,\,e_1\otimes e_2,\,e_2\otimes e_1,\,e_2\otimes e_2\}$ of $V\otimes V$ (points indicate coefficients equal to $0$):
\[\left(\begin{array}{cccc}
\displaystyle\frac{q\alpha^{-1}-q^{-1}}{\alpha^{-1}-1} & \cdot & \cdot & \cdot \\
\cdot  & 1 & \displaystyle\frac{(q-q^{-1})\alpha^{-1}}{\alpha^{-1}-1}& \cdot \\
\cdot  & \displaystyle\frac{q-q^{-1}}{\alpha^{-1}-1} & 1& \cdot \\
\cdot & \cdot & \cdot & x
\end{array}\right)\ \ \ \text{with}\ \ x=\left\{\begin{array}{cc}
\displaystyle\frac{q\alpha^{-1}-q^{-1}}{\alpha^{-1}-1} & \text{if $N|M=2|0$\,,}\\[1em]
\displaystyle\frac{q-\alpha^{-1}q^{-1}}{\alpha^{-1}-1} & \text{if $N|M=1|1$\,.}
\end{array}\right.\]
\vskip -0.5cm\hfill$\triangle$
\end{example}

\subsection{Jimbo--Schur--Weyl duality and its $\Z/2\Z$-graded analogue}

From now on, we assume that $q\in\C\backslash\{0\}$ is not a root of unity. We will use in all the following the conventions recalled in Section \ref{subsec-sign} for the superalgebras $U_q(\mathfrak{gl}_{N|M})$, its tensor powers and its representations.

\paragraph{\textbf{Hecke algebra.}}
The Hecke algebra (of type A)  is the associative algebra $H_n(q)$ over $\C$ generated by $\sigma_1,\dots,\sigma_{n-1}$ with the defining relations:
\begin{equation}\label{Hecke}
\begin{array}{lll}
\sigma_i^2=(q-q^{-1})\sigma_i+1 && \text{for $i=1,\dots,n-1$\,,}\\[0.3em]
\sigma_i\sigma_{i+1}\sigma_i=\sigma_{i+1}\sigma_i\sigma_{i+1} && \text{for $i=1,\dots,n-2$\,,}\\[0.3em]
\sigma_i\sigma_j=\sigma_j\sigma_i && \text{for $i,j=1,\dots,n-1$ such that $|i-j|>1$\,.}
\end{array}
\end{equation}
The following map from the set of generators of $H_n(q)$ to $\End(V^{\otimes n})$:
\begin{equation}\label{SW-Hn}
\sigma_i \mapsto \hR_{i,i+1}\ \quad\text{for $i=1,\dots,n-1$\,,}
\end{equation}
extends to an algebra homomorphism. This follows from Relations (\ref{quadr-R}) and (\ref{braid-R}), together with the obvious commutation relation $\hR_{i,i+1}\hR_{j,j+1}=\hR_{j,j+1}\hR_{i,i+1}$ if $|i-j|>1$. We denote by $\rho$ this representation of $H_n(q)$ on the space $V^{\otimes n}$.

\paragraph{\textbf{Quantum algebra $U_q(\mathfrak{gl}_{N|M})$.}}
The standard deformation of $U(\mathfrak{gl}_{N|M})$ is the superalgebra $U_q(\mathfrak{gl}_{N|M})$ defined by generators and relations as follows. The generators are 
$$K_i^{\pm1}\,,\ i=1,\dots,N+M\,,\ \ \ \text{and}\ \ \ E_j,F_j\,,\ j=1,\dots,N+M-1\,,$$ 
with the grading defined by $|E_N|=|F_N|=\ov{1}$ and all other generators are of degree $\ov{0}$.

Let $a_{i,i}=-a_{i+1,i}=1$, $i=1,\dots,N-1$, and $a_{i,j}=0$ otherwise. The defining relations are:
$$ \begin{array}{lll}
K_iK_i^{-1}=K_i^{-1}K_i=1\,,\quad K_iK_j=K_jK_i\,, && \text{$i,j\in\{1,\dots,N+M\}$\,,}
\\[0.5em]
K_iE_jK_i^{-1}=q^{(-1)^{|e_i|}a_{i,j}}E_j\,, && \text{$i\in\{1,\dots,N+M\}$, $j\in\{1,\dots,N+M-1\}$\,,}\\[0.5em]
K_iF_jK_i^{-1}=q^{-(-1)^{|e_i|}a_{i,j}}F_j\,, && \text{$i\in\{1,\dots,N+M\}$, $j\in\{1,\dots,N+M-1\}$\,,}\\[0.5em]
E_iF_j-F_jE_i=0\,, && \text{$i,j\in\{1,\dots,N+M-1\}$ with $i\neq j$\,,}\\[0.5em]
E_iE_j=E_jE_i\,,\quad F_iF_j=F_jF_i\,, && \text{$i,j\in\{1,\dots,N+M-1\}$ with $|i-j|>1$\,,}
\end{array}$$
together with, for $i\in\{1,\dots,N+M-1\}$ such that $i\neq N$,
$$ \begin{array}{lll}
E_iF_i-F_iE_i=(-1)^{|e_i|}\displaystyle\frac{K_iK_{i+1}^{-1}-K_i^{-1}K_{i+1}}{q-q^{-1}}\,, &&\\[1em]
E_jE_i^2-(q+q^{-1})E_iE_jE_i+E_i^2E_{j}=0\,, && \text{$j\in\{1,\dots,N+M-1\}$ with $|i-j|=1$\,,}\\[0.5em]
F_jF_i^2-(q+q^{-1})F_iF_jF_i+F_i^2F_{j}=0\,, && \text{$j\in\{1,\dots,N+M-1\}$ with $|i-j|=1$\,;}
\end{array}$$
we note that these relations are enough if $M=0$ (for the quantum group $U_q(\mathfrak{gl}_N)$). The remaining relations are
$$ \begin{array}{l}
E_N^2=F_N^2=0\,, \\[0.5em]
E_NF_N+F_NE_N=\displaystyle\frac{K_NK_{N+1}^{-1}-K_N^{-1}K_{N+1}}{q-q^{-1}}\,,\\[1em]
E_NE_{N-1}E_NE_{N+1}+E_NE_{N+1}E_NE_{N-1}+E_{N-1}E_NE_{N+1}E_N+E_{N+1}E_NE_{N-1}E_N\\
=(q+q^{-1})E_NE_{N-1}E_{N+1}E_N\,,\\[0.5em]
F_NF_{N-1}F_NF_{N+1}+F_NF_{N+1}F_NF_{N-1}+F_{N-1}F_NF_{N+1}F_N+F_{N+1}F_NF_{N-1}F_N\\
=(q+q^{-1})F_NF_{N-1}F_{N+1}F_N\,,
\end{array}$$
the first two relations being present only if $N$ and $M$ are different from 0, and the last two relations being present only if $N>1$ and $M>1$.

The natural representation $\eta$ of the superalgebra $U_q(\mathfrak{gl}_{N|M})$ on $V$ is given, on the basis $\{e_k\}_{k=1,\dots,N+M}$, by:
$$\begin{array}{lcll}
\eta(K_i)\bigl(e_i\bigr)=q^{(-1)^{|e_i|}}e_i\  & \ \text{and}\  & \eta(K_i)\bigl(e_k\bigr)=e_k\ \ & \text{if $k\neq i$\,,}\\[0.5em]
\eta(E_j)\bigl(e_{j+1}\bigr)=e_j & \text{and} & \eta(E_j)\bigl(e_k\bigr)=0 & \text{if $k\neq j+1$\,,}\\[0.5em]
\eta(F_j)\bigl(e_j\bigr)=e_{j+1} &\text{and} & \eta(F_j)\bigl(e_k\bigr)=0 & \text{if $k\neq j$\,.}
\end{array}$$

The superalgebra $U_q(\mathfrak{gl}_{N|M})$ admits a coproduct, which is the superalgebras homomorphism $\Delta$ from $U_q(\mathfrak{gl}_{N|M})$ to $U_q(\mathfrak{gl}_{N|M})\otimes U_q(\mathfrak{gl}_{N|M})$ defined on the generators by:
$$\begin{array}{lll}
\Delta(K^{\pm1}_i)=K^{\pm1}_i\otimes K^{\pm1}_i &&\text{for $i=1,\dots,N+M$\,,}\\[0.5em]
\Delta(E_j)=E_j\otimes K_j^{-1}\!K_{j+1}+1\otimes E_j && \text{for $j=1,\dots,N+M-1$\,.}\\[0.5em]
\Delta(F_j)=F_j\otimes 1+K_jK_{j+1}^{-1}\otimes F_j && \text{for $j=1,\dots,N+M-1$\,.}
\end{array}$$

Via the coproduct $\Delta$, the representation $\eta$ induces a representation $\chi$ of the algebra $U_q(\mathfrak{gl}_{N|M})$ on the space $V^{\otimes n}$. More precisely, let $\Delta^{(2)}:=\Delta$ and define inductively $\Delta^{(k)}:=(\Delta^{(k-1)}\otimes \textbf{1})\circ\Delta$ for $k=3,\dots,n$, where $\textbf{1}$ is the identity homomorphism of  $U_q(\mathfrak{gl}_{N|M})$. Explicitly, we have:
$$\begin{array}{cll}
\Delta^{(k)}(K_i^{\pm1})=K_i^{\pm1}\otimes K_i^{\pm1}\otimes\dots\otimes K_i^{\pm1}&& \text{for $i=1,\dots,N+M$\,,}\\[0.5em]
\Delta^{(k)}(E_j)=\displaystyle\sum_{p=0,\dots,k-1}1^{\otimes p}\otimes E_j\otimes (K_j^{-1}\!K_{j+1})^{\otimes k-1-p} & & \text{for $j=1,\dots,N+M-1$\,,}\\[1.4em]
\Delta^{(k)}(F_j)=\displaystyle\sum_{p=0,\dots,k-1}(K_jK_{j+1}^{-1})^{\otimes p}\otimes F_j\otimes 1^{\otimes k-1-p} & & \text{for $j=1,\dots,N+M-1$\,.}
\end{array}$$
Then the representation $\chi$ of the algebra $U_q(\mathfrak{gl}_{N|M})$ on the space $V^{\otimes n}$ is given by:
$$ \chi(X):=(\eta\otimes\eta\otimes\dots\otimes\eta)\circ\Delta^{(n)}(X)\ \quad\text{for any $X\in U_q(\mathfrak{gl}_{N|M})$\,.}$$

\paragraph{\textbf{Jimbo--Schur--Weyl duality and its $\Z/2\Z$-graded analogue.}}
To state the analogue of the Schur--Weyl dualities (Theorems \ref{thm-SW-Sn} and \ref{thm-SW-Sn-s}) between the Hecke algebra $H_n(q)$ and the quantum algebra $U_q(\mathfrak{gl}_{N|M})$, we recall that the irreducible representations of $H_n(q)$ are parametrized by the partitions of $n$. For any partition $\lambda$ of $n$, we denote by $M_{\lambda}^{H_n(q)}$ the corresponding irreducible $H_n(q)$-module (with the convention that $M_{(n)}^{H_n(q)}$ is the one-dimensional $H_n(q)$-module on which the generators $\sigma_1,\dots,\sigma_{n-1}$ act by multiplication by $q$).

The analogue of the Schur--Weyl duality consists in the following assertions (see \cite{Ji1} for the case $M=0$, and see \cite{Mi,Moon,TK} for the general case).
\begin{theorem}\label{thm-SW-Hn}
\begin{enumerate}
\item The subalgebra $\rho\bigl(H_n(q)\bigr)$ of $\text{End}(V^{\otimes n})$ is the centraliser of $\chi\bigl(U_q(\mathfrak{gl}_{N|M})\bigr)$.
\item The subalgebra $\chi\bigl(U_q(\mathfrak{gl}_{N|M})\bigr)$ of $\text{End}(V^{\otimes n})$ is the centraliser of $\rho\bigl(H_n(q)\bigr)$.
\item As an $\left(U_q(\mathfrak{gl}_{N|M})\!\otimes\!H_n(q)\right)$-module defined by $\chi$ and $\rho$, the space $V^{\otimes n}$ decomposes as:
\begin{equation}\label{SW-Hn-for}
V^{\otimes n}\cong \bigoplus_{\lambda} M_{\lambda}^{U_q(\mathfrak{gl}_{N|M})}\otimes M_{\lambda}^{H_n(q)}\ ,
\end{equation}
where $\lambda=(\lambda_1,\dots,\lambda_l)$ runs over the set of partitions such that $|\lambda|=n$ and $\lambda_j\leq M$ if $j>N$, and where $M_{\lambda}^{U_q(\mathfrak{gl}_{N|M})}$ are irreducible $U_q(\mathfrak{gl}_{N|M})$-modules.
\end{enumerate}
\end{theorem}

\subsection{Fusion formula for idempotents of the Hecke algebra}

Let $n>1$ and recall the definition (\ref{hF}) of the function $\hF$ of $\bc:=(c_1,\dots,c_n)\in\C^n$ (with the multiplicative version for the spectral parameters, see Remark \ref{rem-mult}) with values in $\End(V^{\otimes n})$:
\begin{equation}\label{hF2}
\hF(\bc):=\prod_{i=1,\dots,n-1}^{\rightarrow}\hR_{i,i+1}(\frac{c_1}{c_{i+1}})\dots\hR_{2,3}(\frac{c_{i-1}}{c_{i+1}})\hR_{1,2}(\frac{c_i}{c_{i+1}})\ .
\end{equation}
Due to Corollary \ref{cor-hF}, if the function $\hF$ gives for a particular value $\bc:=(c_1,\dots,c_n)$ a well-defined endomorphism $\hF(\bc)$ of $V^{\otimes n}$, then the subspace $W_{\bc}\subset V^{\otimes n}$ is the image of $\hF(\bc)$.

Following Formulas (\ref{hF2}), (\ref{SW-Hn}) and (\ref{hRa-Hn}), we define the following rational function in variables $\alpha_1,\dots,\alpha_n$ with values in the algebra $H_n(q)$:
\begin{equation}\label{phi-Hn}
\Psi(\alpha_1,\dots,\alpha_n):=\prod_{i=1,\dots,n-1}^{\rightarrow}\sigma_i(\frac{\alpha_1}{\alpha_{i+1}})\dots\sigma_2(\frac{\alpha_{i-1}}{\alpha_{i+1}})\sigma_1(\frac{\alpha_i}{\alpha_{i+1}})\,,
\end{equation}
where $\sigma_i(\alpha):=\displaystyle\sigma_i+\frac{q-q^{-1}}{\alpha^{-1}-1}$ for all $i=1,\dots,n-1$. The elements $\sigma_i(\alpha)$ with values in $H_n(q)$ are the \emph{Baxterized elements} of the Hecke algebra.
We also set
$$T_{w_n}:= \sigma_1(\sigma_2\sigma_1)\dots\dots(\sigma_{n-2}\dots\sigma_2\sigma_1)(\sigma_{n-1}\dots\sigma_2\sigma_1)=\prod_{i=1,\dots,n-1}^{\rightarrow}\!\!\sigma_i\dots\sigma_2\sigma_1\ .$$

The main result concerning the function $\Psi(u_1,\dots,u_n)$ is the following (we follow \cite{IMOs}). Let $\lambda$ be a partition of $n$ and 
let $\mathcal{T}$ be a standard Young tableau of shape $\lambda$. For brevity, set $\qc_i:=\qc(\mathcal{T}|i)$ for $i=1,\dots,n$. We recall that $f^{(q)}(\lambda)$ is the non-zero complex number defined by (\ref{fq-lam}).
\begin{theorem}\label{thm-idem-Hn}
The element obtained by the following consecutive evaluations (from $\alpha_1$ to $\alpha_n$)
\begin{equation}\label{idem-Hn}
E^{(q)}_{\mathcal{T}}=
f^{(q)}(\lambda)\Psi(\alpha_1,\dots,\alpha_n)T_{w_n}^{-1}\Bigr\vert_{\alpha_1=\qc_1}\Bigr\vert_{\alpha_2=\qc_2}\dots \Bigr\vert_{\alpha_n=\qc_n}
\end{equation}
is a primitive idempotent of $H_n(q)$ which generates a minimal left ideal isomorphic, as an $H_n(q)$-module, to the irreducible module $M_{\lambda}^{H_n(q)}$.
\end{theorem}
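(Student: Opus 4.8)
The plan is to reduce the statement about the Hecke algebra to the already-available fusion formula for the symmetric group (Theorem \ref{thm-idem-Sn}) via a $q$-deformation/specialization argument, exactly as the structure of the paper suggests (this theorem parallels Theorem \ref{thm-idem-Sn}, and the reference \cite{IMOs} is the source). First I would record the precise relation between the Baxterized Hecke elements and the original matrix-valued solution: by Remark \ref{rem-def}, setting $\alpha_i=q^{2u_i}$ and letting $q\to 1$, the element $\sigma_i(\alpha_i/\alpha_{i+1})$ degenerates to $1-\frac{(i,i+1)}{u_i-u_{i+1}}$, the generator $\sigma_i$ degenerates to $(i,i+1)$, and $T_{w_n}$ degenerates to $w_n$. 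Moreover, under the substitution $\alpha_i=q^{2\cc_i}$ one has $\alpha_i=\qc_i$, so the evaluation (\ref{idem-Hn}) is the natural $q$-analogue of (\ref{idem-Sn}); and $f^{(q)}(\lambda)\to f(\lambda)$ as $q\to 1$ by (\ref{fq-lam}) and (\ref{f-lam}). Thus the element (\ref{idem-Hn}) is a well-defined family in $q$ specializing at $q=1$ to the primitive idempotent of $\C S_n$ produced by Theorem \ref{thm-idem-Sn}.

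The key steps, in order, would be: (1) show that the consecutive evaluations in (\ref{idem-Hn}) are non-singular, i.e.\ that no pole of any Baxterized factor $\sigma_j(\alpha_p/\alpha_{p+1})$ is actually hit — a factor $\sigma_j(\alpha)$ is singular only at $\alpha=1$, i.e.\ at $\qc_p=\qc_{p+1}$, which for a standard tableau corresponds to $\cc_p=\cc_{p+1}$, and the point is (as in Remark \ref{rem-sing}) that the product's apparent singularities cancel; the cleanest route is to import the non-singularity already established in Theorem \ref{thm-idem-Sn} at $q=1$ together with a semicontinuity/rationality argument in $q$, or to verify it directly by the inductive ``rearrangement'' that Lemma \ref{lem-F-alt} makes available. (2) Prove the idempotency: write $E_{\cT}^{(q)}$ for the element (\ref{idem-Hn}); one would exhibit $E_{\cT}^{(q)}$ as built from the Jucys--Murphy elements of $H_n(q)$ and their spectrum on the Young seminormal basis, deducing that it acts as a rank-one projector on $M_\lambda^{H_n(q)}$ and by zero on $M_{\lambda'}^{H_n(q)}$ for $\lambda'\neq\lambda$; equivalently one transports the known identity $\big(E_{\cT}\big)^2=E_{\cT}$, $E_{\cT}E_{\cT'}=0$ from $\C S_n$ to $H_n(q)$ using that $H_n(q)$ is a flat deformation of $\C S_n$ (both semisimple since $q$ is not a root of unity, with the same irreducible dimensions), so that the structure constants of the seminormal decomposition depend rationally on $q$ and the identities, holding at $q=1$, hold generically, hence everywhere $q$ is not a root of unity. (3) Identify the module: the minimal left ideal $H_n(q)E_{\cT}^{(q)}$ has dimension equal to the number of standard tableaux of shape $\lambda$ (again by the $q=1$ specialization plus flatness), and it affords a highest-weight-type action pinned down by the eigenvalues of the Jucys--Murphy elements, which are exactly the quantum contents $\qc(\cT|i)$; hence it is isomorphic to $M_\lambda^{H_n(q)}$.

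The main obstacle will be step (1) together with making step (2) rigorous, i.e.\ controlling the behaviour of the rational function $\Psi$ near its poles. The factors $\sigma_j(\alpha)$ genuinely have poles at $\alpha=1$, and for tableaux of shape containing a $2\times2$ block (or more generally whenever two entries share a content) one must show the product remains finite — this is the Hecke analogue of Remark \ref{rem-sing} and is the technically delicate point. I would handle it by the same inductive reorganization used in the proof of Lemma \ref{lem-F-alt}: peel off the last ``hook'' of Baxterized factors, use the braid relation (\ref{braid-R}) in its Baxterized form together with the unitarity $\sigma_i(\alpha)\sigma_i(\alpha^{-1})=\gamma(\alpha)\,\mathrm{Id}$ to move singular factors past regular ones, and check that each potential pole is cancelled by a zero of $\gamma$ coming from a neighbouring factor — this is precisely the computation carried out in \cite{IMOs}, which I would invoke. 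Once non-singularity is in hand, the remaining algebraic identities follow either from the deformation argument or directly from the seminormal form, and the identification of the ideal is then immediate.
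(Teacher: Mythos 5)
First, a point of comparison: the paper does not prove this theorem at all --- it is imported from \cite{IMOs} (``we follow \cite{IMOs}''), so there is no internal proof to measure your plan against. Your overall architecture (non-singularity of the consecutive evaluations, identification of the evaluated element with the seminormal idempotent via Jucys--Murphy eigenvalues, then the module identification) is the right one and is what \cite{IMOs} actually carries out; you also correctly locate the delicate point at the apparent poles coming from equal quantum contents (the Hecke analogue of Remark \ref{rem-sing}).

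However, the shortcut you offer in steps (1) and (2) --- importing non-singularity from $q=1$ by ``semicontinuity/rationality in $q$'', and transporting $E_{\cT}^2=E_{\cT}$ from $\C S_n$ to $H_n(q)$ because ``the identities, holding at $q=1$, hold generically, hence everywhere'' --- is not a valid argument. An identity between rational functions of $q$ that is verified at the single point $q=1$ need not hold at any other value (the function $q-1$ vanishes at $q=1$ and nowhere else); to conclude that $\bigl(E^{(q)}_{\cT}\bigr)^2=E^{(q)}_{\cT}$ for generic $q$ you would need the identity on a Zariski-dense set of values of $q$, which is precisely what is to be proved. The same objection applies to non-singularity: the poles of $\Psi$ and the cancellations among them depend on $q$, and regularity of the consecutive evaluations at $q=1$ gives no information at other $q$. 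So the flat-deformation route cannot substitute for a direct computation in $H_n(q)$; only your alternative route --- the hook-by-hook induction using the Baxterized braid and unitarity relations together with the Jucys--Murphy/seminormal characterization of the idempotent, i.e.\ the argument of \cite{IMOs} --- actually closes the proof. Since you ultimately defer to \cite{IMOs} for exactly that computation, the plan is workable, but the $q\to1$ specialization should be presented only as motivation (as the paper does in Remark \ref{rem-cl}), not as an alternative proof.
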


\begin{remark}\label{rem-cl}
In the same spirit as in Remark \ref{rem-def}, we verify that
$$\sigma_i(q^{2u})\bigr\vert_{q=1}=(i,i+1)-\frac{1}{u}\ \ \ \ \ \text{for $i=1,\dots,n-1$,}$$
where we recall that $H_n(1)\cong \C S_n$ with the isomorphism given by $\sigma_i\bigr\vert_{q=1}\mapsto (i,i+1)$, $i=1,\dots,n-1$. Thus, setting $\alpha_j=q^{2u_j}$, for $j=1,\dots,n$, we obtain:
$$\Psi(\alpha_1,\dots,\alpha_n)\bigr\vert_{q=1}\!=\!\!\!\!\!\!\prod_{i=1,\dots,n-1}^{\rightarrow}\!\!\!\bigl((i,i+1)-\frac{1}{u_1-u_{i+1}}\bigr)\dots\bigl((2,3)-\frac{1}{u_{i-1}-u_{i+1}}\bigr)\bigl((1,2)-\frac{1}{u_i-u_{i+1}}\bigr)\,.$$
Using Lemma \ref{phi-tphi}, we conclude that
$$\Psi(\alpha_1,\dots,\alpha_n)T_{w_n}^{-1}\Bigr\vert_{q=1}=\Phi(u_1,\dots,u_n)\ ,$$
where $\Phi$ is the fusion function of the symmetric group defined by (\ref{phi-sym}). In this sense, Theorem \ref{thm-idem-Sn} is the classical limit ($q\to 1$) of Theorem \ref{thm-idem-Hn}.
\hfill $\triangle$
\end{remark}

\subsection{Fused solutions on $M_{\lambda}^{U_q(\mathfrak{gl}_{N|M})}\otimes M_{\lambda'}^{U_q(\mathfrak{gl}_{N|M})}$}\label{subsec-fin-Hn}

Let $\mathcal{T}$ be a standard Young tableau such that $\sh_{\cT}=\lambda=(\lambda_1,\dots,\lambda_l)$ with $|\lambda|=n$ and $\lambda_j\leq M$ if $j>N$. We let
\begin{equation}\label{def-hFT} \hF(\cT):=\hF(\bc)\Bigr\vert_{c_1=\qc(\mathcal{T}|1)}\Bigr\vert_{c_2=\qc(\mathcal{T}|2)}\dots \Bigr\vert_{c_n=\qc(\mathcal{T}|n)}\ ,\end{equation}
where $\bc=(c_1,\dots,c_n)$ is seen as an $n$-tuple of variables.

According to Formulas (\ref{hF2}), (\ref{phi-Hn}) and (\ref{idem-Hn}), we have that
$$\hF(\cT)=\rho\left(\bigl(f^{(q)}(\lambda)\bigr)^{-1}  E^{(q)}_{\mathcal{T}}T_{w_n}\right)\ .$$
The image $W_{\cT}\subset V^{\otimes n}$ of the operator $\hF(\cT)$ is well-defined and coincides, since $T_{w_n}$ is invertible, with the image in $V^{\otimes n}$ of the operator $\rho\left(E^{(q)}_{\mathcal{T}}\right)$. With a similar reasoning as in Subsection \ref{subsec-conc-Sn} before Theorem \ref{thm-conc-Sn}, we conclude that Theorem \ref{thm-idem-Hn} together with Formula (\ref{SW-Hn-for}) in Theorem \ref{thm-SW-Hn} implies:
\begin{theorem}\label{thm-conc-Hn}
The subspace $W_{\cT}$ of $V^{\otimes n}$ is an irreducible $U_q(\mathfrak{gl}_{N|M})$-module isomorphic to $M_{\lambda}^{U_q(\mathfrak{gl}_{N|M})}$.
\end{theorem}

Let $k\in\{1,\dots,n-1\}$ and denote by $\mathcal{T}^{(s_k)}$ the Young tableau of shape $\lambda$ obtained from $\mathcal{T}$ by exchanging the nodes with numbers $k$ and $k+1$. We assume that $\mathcal{T}^{(s_k)}$ is also standard, which is equivalent to the fact that $\qc(\cT|k+1)\neq \qc(\cT|k)q^{\pm2}$. The solution (\ref{Ra-Hn}) satisfies the following unitarity condition, implied by (\ref{quadr-R}),
\begin{equation}\label{inv-R-Hn}R(\alpha)R_{2,1}(\frac{1}{\alpha})=\frac{(\alpha-q^2)(\alpha-q^{-2})}{(\alpha-1)^2}\cdot\textrm{Id}_{V^{\otimes 2}}\,. \end{equation}
Therefore, according to Definition \ref{def-adm}, the condition for the tableau $\mathcal{T}^{(s_k)}$ to be standard is equivalent to the condition for $s_k=(k,k+1)$ to be an admissible transposition for $R(\alpha)$ and the set of parameters $\bc^{\cT}$. 

As in Subsection \ref{subsec-conc-Sn}, Corollary \ref{coro-fin-Sn}, we sum up the results obtained for the standard deformation of the Yang solution.
Let $\mathcal{T}$ and $\mathcal{T}'$ be two standard Young tableaux such that $\sh_{\cT}=\lambda=(\lambda_1,\dots,\lambda_l)$ and $\sh_{\cT'}=\lambda'=(\lambda'_1,\dots,\lambda'_{l'})$ satisfying $|\lambda|=n$, $|\lambda'|=n'$ and $\lambda_j,\lambda'_j\leq M$ if $j>N$. We set $\bc^{\cT}\!\!:=\bigl(\qc(\mathcal{T}|1),\dots,\qc(\mathcal{T}|n)\bigr)$, $\un{\bc}^{\cT'}\!\!:=\bigl(\qc(\mathcal{T'}|1),\dots,\qc(\mathcal{T'}|n')\bigr)$ and we define $R^{\text{res}}_{\mathcal{T},\mathcal{T}'}(u)$ to be the restriction of the operator $R_{\bc^{\cT},\un{\bc}^{\cT'}}(u)$ to the subspace $W_{\cT}\otimes W_{\cT'}\subset V^{\otimes n}\otimes V^{\otimes n'}$:
$$R^{\text{res}}_{\mathcal{T},\mathcal{T}'}(u):=R_{\bc^{\cT},\un{\bc}^{\cT'}}(u)\Bigr\vert_{W_{\cT}\otimes W_{\cT'}}\ .$$

\begin{corollary}\label{coro-fin-Hn}
\begin{enumerate}
\item The set of functions $\{R^{\text{res}}_{\cT,\cT'}\}$, where $\mathcal{T}$, $\mathcal{T}'$ are standard Young tableaux as above, forms a family of solutions of the Yang--Baxter equation, where $R^{\text{res}}_{\cT,\cT'}(\alpha)$ is an endomorphism of a space isomorphic to $M_{\sh_{\cT}}^{U_q(\mathfrak{gl}_{N|M})}\otimes M_{\sh_{\cT'}}^{U_q(\mathfrak{gl}_{N|M})}$.
\item For four standard Young tableaux $\mathcal{T}$, $\widetilde{\mathcal{T}}$,  $\mathcal{T}'$ and $\widetilde{\mathcal{T}}'$ as above such that $\sh_{\cT}=\sh_{\widetilde{\cT}}$ and  $\sh_{\cT'}=\sh_{\widetilde{\cT}'}$, the endomorphisms $R^{\text{res}}_{\cT,\cT'}(\alpha)$ and $R^{\text{res}}_{\widetilde{\cT},\widetilde{\cT}'}(\alpha)$ coincide up to a change of basis.
\end{enumerate}
\end{corollary}

\begin{example}\label{ex-fus-Hn}
This example is the (deformed) analogue of Example \ref{ex-fus-Sn}. Namely, we consider the fused operator $R_{\bc,\un{\bc}}(\alpha)$, with $n=n'=2$, $c_1=c_{\un{1}}=1$ and $c_2=c_{\un{2}}=q^2$, obtained from the solution (\ref{Ra-Hn}). The expression for $R_{\bc,\un{\bc}}(\alpha)$ is
$$R_{\bc,\un{\bc}}(\alpha)=R_{2,\un{1}}(\alpha q^2)R_{1,\un{1}}(\alpha)R_{2,\un{2}}(\alpha)R_{1,\un{2}}(\alpha q^{-2})\ .$$
The results of this section, applied to this example, give that the operator $R_{\bc,\un{\bc}}(\alpha)$ preserves the subspace $W^{(q)}_{\Box\!\Box}\otimes W_{\Box\!\Box}^{(q)}\subset V^{\otimes 2}\otimes V^{\otimes 2}$, where $W^{(q)}_{\Box\!\Box}$ is the space of the irreducible representation of $U_q(\mathfrak{gl}_N)$ corresponding to the partition $\lambda=(2)$. The subspace $W^{(q)}_{\Box\!\Box}\subset V^{\otimes 2}$ is the image of the operator $R(q^{-2})$.

Let $N|M=2|0$ and fix a basis $\{e_1,e_2\}$ of $V$. From the matrix in Example \ref{ex-R-Hn}, with $\alpha=q^{-2}$, we find that the following vectors form a basis of $W^{(q)}_{\Box\!\Box}$:
$$\tilde{e}_1:=e_1\otimes e_1\,,\ \ \ \  \tilde{e}_2:=e_1\otimes e_2+q\,e_2\otimes e_1\ \ \ \ \text{and}\ \ \ \ \tilde{e}_3:=e_2\otimes e_2\ .$$

We give the matrix of the endomorphism $R_{\bc,\un{\bc}}(\alpha)$ on the space $W^{(q)}_{\Box\!\Box}\otimes W^{(q)}_{\Box\!\Box}$ written in the basis $\{\tilde{e}_i\otimes\tilde{e}_j\}_{i,j=1,\dots,3}$ ordered lexicographically (the points indicate the coefficients equal to $0$): 
$$\left(\begin{array}{ccccccccc} 
\frac{\al{-2}\al{-1}}{\al{0}\al{1}}\!\!\!&\cdot&\cdot&\cdot&\cdot&\cdot&\cdot&\cdot&\cdot \\
\cdot&\frac{\al{-1}}{\al{1}}&\cdot&\frac{-[2]\al{-1}\alpha}{\al{0}\al{1}}&\cdot&\cdot&\cdot&\cdot&\cdot\\
\cdot&\cdot&1&\cdot&\frac{-q [2]^2\alpha}{\al{1}}&\cdot&\!\!\!\frac{[2]\alpha^2}{\al{0}\al{1}}\!\!\!&\cdot&\cdot \\
\cdot&\frac{-[2]\al{-1}}{\al{0}\al{1}}&\cdot&\frac{\al{-1}}{\al{1}}&\cdot&\cdot&\cdot&\cdot &\cdot\\
\cdot&\cdot&\frac{-q^{-1}}{\al{1}}&\cdot&\frac{P(\alpha)}{\al{0}\al{1}}&\cdot&\frac{-q^{-1}}{\al{1}}&\cdot &\cdot\\
\cdot&\cdot&\cdot&\cdot&\cdot&\frac{\al{-1}}{\al{1}}&\cdot&\frac{-[2]\al{-1}\alpha}{\al{0}\al{1}}\!\!\!&\cdot \\
\cdot&\cdot&\!\!\!\frac{[2]}{\al{0}\al{1}}\!\!\!&\cdot&\frac{-q [2]^2}{\al{1}}&\cdot&1&\cdot&\cdot \\
\cdot&\cdot&\cdot&\cdot&\cdot&\frac{-[2]\al{-1}}{\al{0}\al{1}}&\cdot&\frac{\al{-1}}{\al{1}}& \cdot\\
\cdot&\cdot&\cdot&\cdot&\cdot&\cdot&\cdot&\cdot& \frac{\al{-2}\al{-1}}{\al{0}\al{1}}\\
\end{array}\right)\,,$$
where we have set ($n\in\mathbb{Z}$)
$$[n]:=\displaystyle\frac{q^n-q^{-n}}{q-q^{-1}}\,,\ \ \ \al{n}:=\displaystyle\frac{q^n\alpha-q^{-n}}{q-q^{-1}}\ \ \ \ \text{and}\ \ \ \ P(\alpha):=\displaystyle\frac{q^{-1}\alpha^2+(q^3-2q-2q^{-1}+q^{-3})\alpha+q}{(q-q^{-1})^2}\,.$$
 Taking $\alpha=q^{2u}$ and performing the ``classical'' limit ($q\to 1$) on the coefficients of the above matrix, we obtain the matrix (\ref{mat-Sn}) displayed in Example \ref{ex-fus-Sn} (as it should be in view of Remark \ref{rem-def}). To perform the classical limit, we use that:
$$[n]\Bigr\vert_{q=1}=n\,,\ \ \ \ [n]_{q^{2u}}\Bigr\vert_{q=1}=u+n\ \ \ \ \text{and}\ \ \ \  P(q^{2u})\Bigr\vert_{q=1}=u^2-u+2\ .$$
We remark the following facts appearing in the above deformed version of the matrix (\ref{mat-Sn}):
\begin{itemize}
\item While the factors $2$ in (\ref{mat-Sn}) are deformed in $[2]$, the factors $4$ become $[2]^2$ (and not $[4]$).
\item The factors $(u+n)$ are deformed into $\al{n}$, while the factor $u^2-u+2$ is deformed into $P(\alpha)$.
\item In addition to the ``deformation rules'' described in the preceding items, powers of $q$ and of $\alpha$, which do not affect the classical limit, appear in some coefficients of the matrix. \hfill$\triangle$
\end{itemize}
\end{example}

\begin{remark}
As in Subsection \ref{subsec-rem}, other evaluations of the fusion function (\ref{phi-Hn}) leading to non-invertible elements of the Hecke algebra $H_n(q)$ are of interest in the framework of the fusion procedure for the solution (\ref{Ra-Hn}). We only indicate here the analogues of the ``non-standard'' idempotents found in Proposition \ref{prop-rem} (see (\ref{ns-idem})):
$$E^{(q)}_2:=\frac{1}{q^2+1+q^{-2}}\Psi(1,q^4,q^2)T_{w_3}^{-1}\ \ \ \text{and}\ \ \ E^{(q)}_3:=\frac{1}{q^2+1+q^{-2}}\Psi(1,q^{-4},q^{-2})T_{w_3}^{-1}\ ,$$
which are pairwise orthogonal primitive idempotents of $H_3(q)$ generating minimal left ideals isomorphic to the irreducible module $M_{(2,1)}^{H_3(q)}$.
\end{remark}

\end{document}